\let\hat\widehat
\let\tilde\widetilde
\newcommand{\X}{{\mathcal X}}
\newcommand{\R}{{\mathbb R}}
\newcommand{\E}{{\mathbb E}}
\newcommand{\V}{{\mathbb V}}
\newcommand{\x}{{\bf x}}
\newcommand{\comment}[1]{}
\newcommand{\CP}[2]{ \mathbb{P} \left( \left. {#1} \,\right| {#2} \right)}
\newcommand{\CV}[2]{ \V \left( \left. {#1} \,\right| {#2} \right)}
\newtheorem{theorem}{Theorem}
\newtheorem{definition}[theorem]{Definition}
\newtheorem{lemma}[theorem]{Lemma}
\newtheorem{example}[theorem]{Example}
\newtheorem{corollary}[theorem]{Corollary}
\newtheorem{proposition}[theorem]{Proposition}
\newtheorem{remark}[theorem]{Remark}
\newcommand{\red}[1]{\textbf{\color{red} ***ATT*** #1}}
\begin{document}

\begin{frontmatter}

\title{A Spectral Series Approach to High-Dimensional Nonparametric Regression}
\runtitle{Nonparametric Regression via Spectral Series}


\begin{aug}
  \author{Ann B. Lee\thanksref{t2}\ead[label=e1]{annlee@cmu.edu}}

  \address{Department of Statistics, Carnegie Mellon University, USA.\\ 
           \printead{e1}}

        \author{Rafael Izbicki
        	\ead[label=e2]{rafaelizbicki@gmail.com}}

\address{Department of Statistics, Federal University of S\~ao Carlos, Brazil.\\ 
	\printead{e2}}

  \thankstext{t2}{Corresponding author}

\runauthor{Lee and Izbicki}
\end{aug}

\begin{abstract} A key question in modern statistics is how to make fast and reliable inferences for complex, high-dimensional data. While there has been much interest in sparse techniques, current methods do not generalize well to data with nonlinear structure. In this work, we present an orthogonal series estimator for predictors that are complex aggregate objects, such as natural images, galaxy spectra, trajectories, and movies. Our series approach ties together ideas from 
 manifold learning, kernel machine learning, and Fourier methods. We expand the unknown regression on the data in terms of the eigenfunctions of a kernel-based operator, and we take advantage of orthogonality of the basis with respect to the underlying data distribution, $P$, to speed up computations and  tuning of parameters.  If the kernel is appropriately chosen, then the eigenfunctions adapt to the intrinsic geometry and dimension of the data. We provide theoretical guarantees for a radial kernel with varying bandwidth,  and we relate smoothness of the regression function with respect  to $P$ to sparsity in the eigenbasis.  Finally, using simulated and real-world data, we systematically compare the performance of the spectral series approach with classical kernel smoothing, k-nearest neighbors regression, kernel ridge regression, and state-of-the-art manifold and local regression methods.
\end{abstract}

\comment{
\begin{abstract}
A key question in modern statistics is how to make efficient inferences for complex, high-dimensional data. While there has been much interest in sparse techniques, current methods do not generalize well to data with near collinearity in the predictors. In this work, we propose a nonparametric orthogonal series estimator where the predictors can be non-standard data objects, e.g., images, spectra, trajectories, and movies. Our approach provides a unifying framework for manifold learning, kernel machine learning, and Fourier methods. The basic idea behind the series method is to expand the regression function in terms of the eigenfunctions of a kernel-based operator. 
These eigenfunctions are orthogonal with respect to the underlying data distribution, $P$, as opposed to the Lebesgue measure of the ambient space. If the kernel is appropriately chosen, then the eigenfunctions adapt to the intrinsic geometry and dimensionality of the data. We provide theoretical guarantees on the estimator and we relate smoothness of the regression function relative $P$ to sparsity in the eigenbasis. 
 We also discuss the computational advantages of the series method and demonstrate its effectiveness for Isomap face images and galaxy spectra from the Sloan Digital Sky Survey.
\end{abstract}
}

\begin{keyword}[class=MSC]
\kwd{62G08}  
\end{keyword}

\begin{keyword}
\kwd{high-dimensional inference, orthogonal series regression, data-driven basis, Mercer kernel, manifold learning, eigenmaps}
\end{keyword}



\end{frontmatter}

\section{Introduction} 
A challenging problem in modern statistics is how to handle complex, high-dimensional data. Sparsity has emerged as a major tool for making efficient inferences and predictions for multidimensional data. Generally speaking, sparsity refers to a situation where the data, despite their apparent high dimensionality, are highly redundant with a low intrinsic dimensionality. 
In our paper, we use the term ``sparse structure''  to refer to cases where the underlying distribution $P$
places most of its mass on a subset $\mathcal{X}$ of $\mathbb{R}^d$ of small Lebesgue measure. This scenario includes, but is not limited to, 
 Riemannian submanifolds of $\mathbb{R}^d$, and high-density clusters separated by low-density regions. In applications of interest, observable data often have (complex) sparse structure
due to the nature of the underlying physical systems. For example, in astronomy, raw galaxy spectra are of dimension
equal to the number of wavelength measurements $d$, 
 but inspection of a sample of such spectra will reveal clear, low-dimensional
 features and structure resulting from the shared physical system that generated these galaxies.
 While the real dimensionality of data 
 is much smaller than $d$, the challenge remains to exploit this when predicting, for example, the age, composition, and star formation history of a galaxy.

In its simplest form, low-dimensional structure is apparent in the original coordinate system. Indeed, in regression, much research on ``large p, small n'' problems concerns {\em variable selection} and the problem of recovering a ``sparse'' coefficient vector (i.e., a vector with mostly zeros) with respect to the given variables. Such approaches include, for example, lasso-type regularization~\citep{Tibshirani:96}, the Dantzig selector~\citep{Candes_Tao}, and RODEO~\citep{Lafferty:Wasserman:2008}. There are also various extensions that incorporate lower-order interactions and groupings of covariates \citep{Yuan:Lin:2006, Zou:Hastie:05, Ravikumar:EtAl:2009} but, like lasso-type estimators, they are not directly applicable to the more intricate structures observed in, e.g.,  natural images, spectra, and hurricane tracks.

At the same time, there has been a growing interest in statistical methods that explicitly consider {\em geometric} structure in the data themselves. Most traditional dimension-reducing regression techniques, e.g., principal component regression (PCR;~\cite{Jolliffe:2002})
partial least squares (PLS;~\cite{Wold:EtAl:2001}) and sparse coding~\citep{Olshausen+-1996}, are based on linear data transformations and enforce sparsity (with respect to the $L^1$ or $L^2$ norm) of the regression in a rotated space. 
 More recently, several authors \citep{Bickel:Li:2007, Aswani:EtAl:2011, cheng2012local} 
 have studied local polynomial regression methods on {\em non-linear} manifolds. For example,  Aswani et al.~\citep{Aswani:EtAl:2011}  
 propose a geometry-based regularization scheme 
that estimates the local covariance matrix at a point and then penalizes regression coefficients perpendicular to the estimated manifold direction.  In the same spirit, Cheng and Wu \cite{cheng2012local} suggest first reducing the dimensionality to the estimated intrinsic dimension of the manifold, and then applying local linear regression to a tangent plane estimate. Local regression and manifold-based methods tend to perform well when there is a clear submanifold  but these approaches are not practical in higher dimensions or when the local dimension varies from point to point in the sample space. Hence, existing nonparametric models still suffer when estimating unknown functions (e.g., density and regression functions) on complex objects $\x \in \mathcal{X} \subset \Re^d$, where $d$ is large. 

Much statistical research has revolved around adapting classical methods, such as linear, kernel-weighted, and additive models to high dimensions. 
 On the other hand, statisticians have paid little attention to the potential of {\em orthogonal series} approaches. 
 In low dimensions, orthogonal series is a powerful nonparametric technique
for estimating densities and regression functions. Such methods are fast to implement with easily interpretable results, they have sharp 
optimality properties, and a wide variety of bases allows the data analyst to model multiscale structure and any challenging shape of the target function~\cite{efromovich}. 
As a result, Fourier series approaches have dominated 
research in signal processing and mathematical physics. 
  This success, however, has not translated to more powerful nonparametric tools in dimensions of the order of $d \sim 100$ or $1000$; 
in fact, extensions via tensor products (as well as more sophisticated adaptive grid or 
triangulation methods; see~\cite{Mallat:2009} and references within) quickly 
become unpractical in dimensions $d>3$. 

In this work, we will build on ideas from harmonic analysis and spectral methods to
construct nonparametric methods
for estimating unknown functions in high-dimensional spaces with non-standard data
objects (such as images, spectra, and distributions) that possess {\em sparse nonlinear structure}.
We derive a Fourier-like basis
$\left\{\psi_j(\x)\right\}_{j \in \mathbb{N}}$ of $L^2(\mathcal{X})$ that adapts
to the intrinsic geometry of the underlying data distribution $P$, and which is orthonormal with respect to $P$ rather than the 
Lebesgue measure of the ambient space.  The empirical basis
functions are then used to estimate functions  
on complex data $\x \in \mathcal{X}$; 
such as, for example, the regression function $r(\x)=\mathbb{E}(Y|\mathbf{X}=\x)$ of a response variable $Y$ on an object $\x$.
Because of the adaptiveness of the basis, there is no need for high-dimensional tensor products. 
Moreover, we take advantage of the orthogonality property of the basis for fast computation 
and model selection.  
We refer to our approach as {\em spectral series} as it is based on spectral methods (in particular, diffusion maps~\citep{pnas1, Coifman:Lafon:06, LafonLee2006} and spectral connectivity analysis~\citep{LeeWasserman:2010}) and Fourier series. Sections~\ref{sec::setup}-\ref{sec::construction_basis} describe the main idea of the series method in a regression setting.

\comment{
In this work, we investigate whether one can 
 find {\em an orthogonal basis adapted to the underlying geometry of a high-dimensional data distribution}.
 We describe and analyze the performance of one such adaptive series approach. The method is based on a spectral analysis of the data and is closely related to spectral clustering~\citep{NgJordanWeiss01, Shi:EtAl:09} and eigenmaps; in particular, diffusion maps~\citep{pnas1, Coifman:Lafon:06} and spectral connectivity analysis~\citep{LeeWasserman:2010}.  Hence, we use the term {\em spectral series}. 
 }
 
  Our work generalizes and ties together ideas in  classical smoothing, kernel machine learning~\citep{scholkopf1997kernel, Scholkopf:Smola:2001, cucker2007learning}, support vector machines~(SVMs;~\cite{Steinwart:Christmann:2008}) and manifold regularization~\citep{Belkin:EtAl:06}  
 {\em without}  
 the many restrictive assumptions (fixed kernel, exact manifold, infinite unlabeled data and so on) seen in other works.
 There is a large literature on SVMs and kernel machine learning that use similar approximation spaces as us, 
  but it is unclear whether and how those procedures  adapt to the structure of the data distribution. Generally, there is a discrepancy between theoretical work on SVMs, which assume a {\em fixed} RKHS (e.g., a fixed kernel bandwidth), and applied SVM work, where the RKHS is chosen in a data-dependent way (by, e.g.,   decreasing the kernel bandwidth $\varepsilon_n$ for larger sample sizes $n$). 
  Indeed, issues concerning the choice of tuning parameters, and their relation to the data distribution $P$, are considered to be open problems in the mainstream RKHS literature. 
The manifold regularization work by Belkin et al.~\cite{Belkin:EtAl:06} 
   addresses adaptivity to sparse structure but  under restrictive assumptions, such as the existence of a well-defined submanifold and the presence of infinite unlabeled data. 
 
 Another key difference between our work and kernel machine learning is that we {\em explicitly} compute the eigenfunctions of a kernel-based operator and then use an orthogonal series approach to nonparametric curve estimation. 
 Neither SVMs nor manifold regularizers exploit {\em orthogonality relative to $P$}. 
 In our paper, we point out the advantages of an orthogonal series approach 
   in terms of 
 computational efficiency (such as fast cross-validation and tuning of parameters),  visualization, and interpretation. SVMs can sometimes have a ``black box feel,'' whereas the spectral series  method allows the user to directly link the data-driven Fourier-like eigenfunctions to the function of interest and the sample space. 
  Indeed, there is a dual interpretation of the computed eigenfunctions:  (i) They define {\em new coordinates of the data} which are useful for nonlinear dimensionality reduction, manifold learning, and 
 data visualization. 
  (ii) They form an orthogonal Hilbert basis for functions on the data and are a means to {\em nonparametric curve estimation} via the classical orthogonal series method, even when there is no clearly defined manifold structure. There is a large body of work in the machine learning literature addressing the first perspective;  see, e.g., Laplacian
maps~\citep{BelkinNiyogi03}, Hessian maps~\citep{DonohoGrimes03}, diffusion maps~\citep{pnas1}, Euclidean Commute Time maps~\citep{Saerens:EtAl:2004}, and  spectral clustering~\citep{Shi:EtAl:09}. In this paper, we are mainly concerned with the second view, i.e., that of estimating unknown functions on complex data objects and understanding the statistical properties of such estimators. 

Fig.~\ref{fig::diffMapISO}, for example,  shows a 2D visualization of the Isomap face data using the eigenvectors of a renormalized Gaussian kernel as coordinates (Eq.~\ref{eq:eigenmap}). Assume we want to estimate the pose $Y$ of the faces. How does one solve a regression problem where the predictor $\x$ is an {\em entire} image? Traditional methods do not cope well with this task while our spectral series approach (Eq.~\ref{eq::series_est} with estimated eigenfunctions as a basis) can use complex {\em aggregate} objects $\x$ (e.g., images, spectra, trajectories, and text data) as predictors, without an explicit dimension reduction step. 
Note that the eigenvectors capture the pose $Y$ and other continuous variations of an image $\x$ fairly well, 
and that the regression $\mathbb{E}(Y|\x)$ appears to vary smoothly in sample space. We will return to the face pose estimation problem in Sec.~\ref{sec:isomap}. We will also discuss the theoretical properties of a spectral series estimator of the regression function  $f(\x)=\mathbb{E}(Y|\x)$ in Sec.~\ref{sec::theory}, including the connection between smoothness and efficient estimators.
\begin{figure}[H]
  \centering
  \includegraphics[width=3.5in,angle=-0]{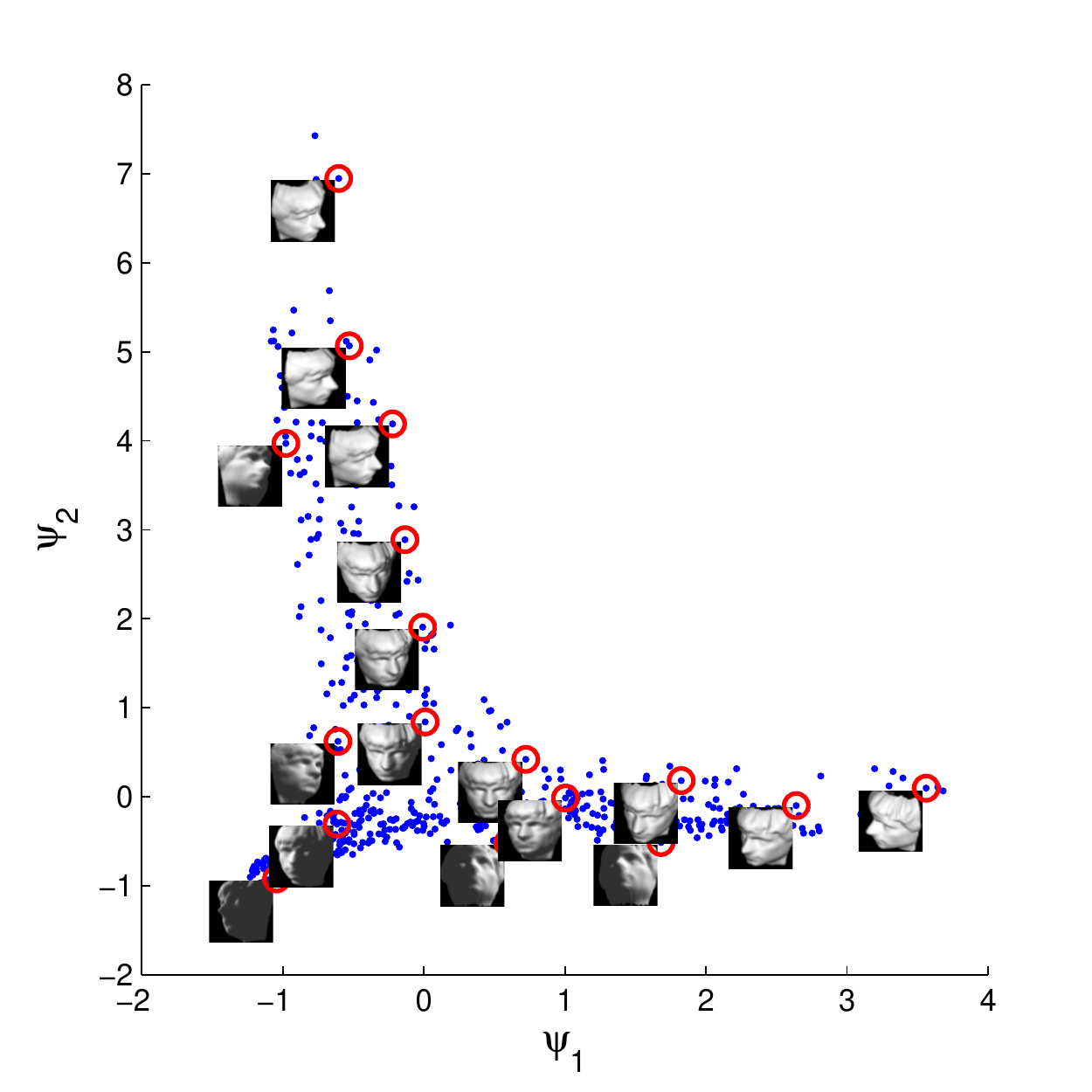} 
\vspace{-0.1in}
  \caption{ \footnotesize Embedding or so-called ``eigenmap'' of the Isomap face data using the first two non-trivial eigenvectors of the Gaussian diffusion kernel. The eigenvectors capture the pose $Y$ and other continuous variations of an image $\x$ fairly well,  
   and the regression $\mathbb{E}(Y|\x)$ appears to vary smoothly in sample space. 
  } 
   \label{fig::diffMapISO}
\end{figure}

Our paper has the following aims:
\begin{enumerate}[(i)]
\item { \em Unifying.} To generalize and connect ideas in kernel machine learning, manifold learning, spectral methods and classical 
 smoothing, without the many restrictive assumptions (fixed kernel, exact manifold, infinite unlabeled data, low dimension) seen in other works. 
  \item {\em Theoretical.} 
  To present new theoretical results {\em in the limit of the kernel bandwidth $\varepsilon_n \rightarrow 0$} that shed light on why RKHS/SVM methods often are so successful for complex data with sparse structure (Theorem~\ref{theorem::loss_epskernel} and Corollary \ref{corollary:manifold_example}), and to link smoothness of the regression with respect to $P$ to the approximation error of spectral series (Theorem~\ref{prop::errordecay}).
  \comment{  Specifically, Theorems 12 and 13 provide theoretical bounds on the loss of the series regression estimator for, respectively, a standard fixed RKHS setting and a setting where the kernel bandwidth $\varepsilon_n$ {\em varies} with the sample size $n$. Corollary 15 shows that a spectral series regression with a local kernel adapts to the {\em intrinisic} dimension (rather than the ambient dimension $d$) of the data when the bandwidth $\varepsilon_n \rightarrow 0$. (The SSL-manifold results are a special case of Corollary 15 where there is no error term for estimating the eigenbasis.)}
 \item {\em Experimental.} To systematically compare the statistical as well as the computational performance of spectral series and other methods using simulated and real-world data. Competing estimators include classical kernel smoothing, k-nearest neighbors (kNN) regression, regularization in RKHS, and recent state-of-the-art manifold and local regression methods. We ask questions such as:  Do the methods scale well with increasing dimension $d$ and increasing sample size $n$? What is the estimated loss and what is the computational time?
 \end{enumerate}

\comment{
\begin{enumerate}
\item introduce a data-driven series method for nonparametric regression that can handle a range of high-dimensional data objects which are common in modern applications (e.g., images, spectra, trajectories and text data),  
\item analyze how the method depends on the tuning parameters and the choice of kernel,
\item show how the basis functions of the series approximation can adapt to sparse structure in the data, and explain when the regression function has a sparse representation in this basis.  
\end{enumerate}
}

The paper is organized as follows. In Sec.~\ref{sec::series_est}, we describe the construction of the spectral series method, including details on how to estimate relevant quantities from empirical data and how to tune model parameters. Sec.~\ref{sec:related_work} discusses the connection to related work in machine learning and statistics. In Sections~\ref{sec::kernel} and~\ref{sec::theory}, we discuss the choice of kernel, and provide theoretical guarantees on the spectral series method. 
Finally, in Sec.~\ref{sec:examples}, we compare the performance of spectral series and other nonparametric estimators for a wide variety of data sets. 
 
%
 




\section{Orthogonal Series Regression} \label{sec::series_est}

\subsection{General Formulation} \label{sec::setup}
In low dimensions, orthogonal series has proved to be a powerful technique for nonparametric curve estimation~\cite{efromovich}. In higher dimensions, there is the question of whether one can find an appropriate basis and actually construct a series estimator that performs well. The general set-up of an orthogonal series regression is otherwise simple: Let $X_1, \ldots, X_n$ be an iid sample from a distribution $P$ with compact support  $\X \subset \R^d$.
 Suppose we have a real-valued response $$Y_i=f(X_i)+ \epsilon_i,$$ 
 where $f $ is an unknown function, and $\epsilon_i$ denotes iid random noise with mean zero and variance $\sigma^2$.
Our goal is to estimate the regression function $f(x)=\mathbb{E}(Y|X=x)$ in situations where $d$ is large and the data have sparse (i.e., low-dimensional)   structure.

Let $\{ \psi_{j}\}_{j \in \mathbb{N}}$ be an orthonormal basis of some appropriate Hilbert space $\mathcal H$ 
with inner product $\langle \cdot,\cdot \rangle_\mathcal{H}$ and norm $\| \cdot \|_\mathcal{H}$.  
We consider estimators of the form 
\begin{equation}
\hat{f}(x) =  \sum_{j=0}^{J} \hat \beta_{j} \hat \psi_{j}(x) ,  \label{eq::series_est}
\end{equation}
where $J$ is a smoothing parameter, and $\hat \psi_{j}$ and $\hat \beta_j$, in the general case, are data-based estimators of the basis functions $\psi_{j}$ and the expansion coefficients $\beta_j=\langle f, \psi_{j} \rangle_{\mathcal{H}}$.

\subsection{What Basis?}
A challenging problem is how to choose a good basis. The standard approach in nonparametric curve estimation is to choose a fixed known basis  $\{ \psi_{j}\}_{j \in \mathbb{N}}$ for, say, $L^2([0,1])$, such as a Fourier or wavelet basis. There is then no need to estimate basis functions. 
  In theory, such an approach can be extended to, eg., $L^2([0,1]^d)$ by a tensor product,\footnote{Traditional orthogonal series estimators require $d-1$ tensor products in  $\mathbb{R}^d$. For instance, if $d=2$, then it is common to choose a basis of the form
 $$\left\{\psi_{i,j}(\x)=\psi_{i}(x_1)\psi_{j}(x_2): i,j\in \mathbb{N}\right\},$$
where $\x=(x_1,x_2)$, and $\left\{\psi_{i}(x_1)\right\}_{i}$ and $\left\{\psi_{j}(x_2)\right\}_{j}$ are bases for functions in $L^2(\mathbb{R})$.} but tensor-product bases, as well as more sophisticated adaptive grid or triangulation methods (see~\cite{Mallat:2009} and references within), quickly become unusable for even as few as $d=5$ dimensions. 
  
What basis should one then choose when the dimension $d$ is large,  say, $d \sim 1000$? Ideally, the basis should be able to adapt to the underlying structure of the data distribution. This means: The basis should be orthogonal with respective to the distribution $P$ that generates the data, as opposed to the standard Lebesgue measure of the ambient space; the basis vectors should be concentrated around high-density regions where most of the ``action'' takes place; and the performance of the final series estimator should depend on the intrinsic rather than the ambient dimension of the data. In what follows, we present a spectral series approach where the unknown function is expanded into the estimated eigenfunctions of a kernel-based integral operator. As we shall see, the proposed estimator has many of the properties listed above.


\subsection{Construction of Adaptive Basis}~\label{sec::construction_basis}
Our starting point is a symmetric and positive semi-definite (psd) so-called Mercer kernel $k: \X \times \X \rightarrow \mathbb R$. These kernels include covariance functions and polynomial kernels, but we are in this work primarily interested in {\em local}, radially symmetric kernels $k_{\varepsilon}(x,y) = g\left( \frac{\| x-y\|}{\sqrt{\varepsilon}} \right)$,\footnote{Depending on the application, one can replace the Euclidean distance $\| x-y\|$ with a dissimilarity measure $d(x,y)$ that better reflects the distance between two data objects $x$ and $y$.}
 where $\varepsilon$ is a parameter that defines the scale of the analysis, and the elements $k(x,y)$ are positive and bounded for all  $x,y \in \X$.
 To simplify the theory, we renormalize the kernel according to
\begin{equation}
a_{\varepsilon}(x,y) = \frac{k_{\varepsilon}(x,y)}{p_\varepsilon(x)}, \label{eq::normalizedkernel}
 \end{equation}
where  $p_\varepsilon(x) = \int_{\X} k_\varepsilon(x,y)dP(y)$. This normalization is common in spectral clustering because it yields eigenvectors that act as indicator functions of connected components~\cite[Section 3.2]{Luxburg:2007}. The same normalization is also implicit in traditional Nadaraya-Watson kernel smoothers, which compute the local average at a point $x$ by weighting surrounding points $x_i$ by $\frac{k_{\varepsilon}(x,x_i)}{\sum_i k_{\varepsilon}(x,x_i)}$. 

We will refer to $a_{\varepsilon}(x,y)$ in Eq.~\ref{eq::normalizedkernel} as the {\em diffusion kernel}. The term ``diffusion'' stems from a random walks view over the sample space~\citep{MeilaShi:2001,LeeWasserman:2010}: One imagines a Markov chain on $\mathcal{X}$ with transition kernel $\Omega_\varepsilon(x,A) = \mathbb{P}(x \to A) =\int_A a_{\varepsilon}(x,y) dP(y)$. Starting at x, this chain moves to points $y$ close to $x$, giving preference to points with high density $p(y)$. The chain essentially encodes the ``connectivity'' of the sample space relative to $p$, and it has a unique stationary distribution $S_\varepsilon$ 
given by
$$
S_\varepsilon (A) = \frac{\int_A p_\varepsilon(x) dP(x)}{\int_{\X} p_\varepsilon(x) dP(x)},$$ 
where 
$S_\varepsilon(A) \to \frac{\int_A p(x) dP(x)}{\int_{\X} p(x) dP(x)}$ as $
\varepsilon \to 0.$ For finite $\varepsilon$, the stationary distribution $S_\varepsilon$ is a {\em smoothed} version of $P$.

 In our regression setting, we seek solutions from a Hilbert space 
  associated with  the kernel $a_{\varepsilon}$.
\comment{As $\widetilde{k}_{\varepsilon}$  is positive semi-definite,  there exists a unique reproducing kernel Hilbert space (RKHS) ${\mathcal{H}_k} \subset L^2(\X)$ for which  $\widetilde{k}_{\varepsilon}$  is the reproducing kernel~\citep{}.  We will look for solutions to the regression problem from a finite-dimensional subspace of this RKHS. The details are as follows: }
Following~\cite{LeeWasserman:2010}, we define a ``diffusion operator'' $A_{\varepsilon}$ --- which maps a function $f$ to a new function $A_{\varepsilon}  f$ --- according to 
\begin{equation}
 A_{\varepsilon}  f(x) = \int_{\mathcal{X}} a_\varepsilon(x,y) f(y) dP(y), \  \text{for} \ x \in \X.  \label{eq::Aeps}
 \end{equation}
The operator $A_{\varepsilon}$ has a discrete set of non-negative eigenvalues 
$\lambda_{\varepsilon,0}=1 \geq \lambda_{\varepsilon,1} \geq \ldots \geq 0$ with associated eigenfunctions $\psi_{\varepsilon,0}, \psi_{\varepsilon,1}, \ldots$, which we for convenience normalize to have unit norm. These eigenfunctions have two very useful properties: First, they are orthogonal with respect to the  density-weighted $L^2$ inner product 
$$\langle f, g\rangle_{\varepsilon} = \int_\X f(x) g(x) dS_\varepsilon(x);$$
that is, 
$$\langle \psi_{\varepsilon,i}, \psi_{\varepsilon,j} \rangle_{\varepsilon} = \delta_{i,j}.$$
Second, they also form a set of oscillatory functions which are concentrated around high-density regions. By construction, $\psi_{\varepsilon,0}$ is a constant function,   and the higher-order eigenfunctions are increasingly oscillatory. Generally speaking, $\psi_{\varepsilon,j}$ is the smoothest function relative to $P$, subject to being orthogonal to $\psi_{\varepsilon,i}$ for $i<j$.

{\bf Interpretation.} 
 The diffusion operator $A_{\varepsilon}$ and its eigenfunctions contain information about the connectivity structure of the sample space. 
 There are two ways one can view the  eigenfunctions  $\psi_{\varepsilon,0}, \psi_{\varepsilon,1},  \psi_{\varepsilon,2}, \ldots$: 
\begin{enumerate}[(i)]
\item The eigenfunctions define {\em new coordinates of the data}. If the data $x$ represent high-dimensional complex objects, there is often no simple way of ordering  the data. However, by a so-called ``eigenmap''
\begin{equation}
 	x \mapsto (\psi_{\varepsilon,1}(x), \psi_{\varepsilon,2}(x), \ldots, \psi_{\varepsilon,J}(x)),  \label{eq:eigenmap}
\end{equation}  
one can transform the data into an embedded space where points that are highly connected are mapped close to each other \cite{LafonLee2006}.
The eigenmap can be used for data visualization as in Fig.~\ref{fig::diffMapISO} and Fig.~\ref{fig::diffusionMapSpec}. If we choose $J<d$, then we are effectively reducing the dimensionality of the problem by mapping the data from $\mathbb{R}^d$ to $\mathbb{R}^J$.

\item The eigenfunctions form {\em a Hilbert basis for functions on the data}. More specifically, the set $\psi_{\varepsilon,0}, \psi_{\varepsilon,1}, \ldots$ is an orthogonal basis of 
$L^2(\X, P)$.
The value of this result is that we can express most physical quantities that vary as a function of the data as a series expansion of the form 
 $f(x) =  \sum_{j=0}^{\infty}  \beta_{\varepsilon,j} \psi_{\varepsilon,j}(x)$. 
\end{enumerate}
 In this work, we study the second point of view and its implications on nonparametric estimation in high dimensions. 

 
\comment{  Two points are worth noting: (i) There is a complex interaction between the kernel bandwidth ${\varepsilon}$ and the number of basis functions $J$.   Increasing ${\varepsilon}$, as well as decreasing $J$, will smooth the regression estimate, but there is no general one-to-one mapping between the two parameter spaces.  (ii)  To make predictions for new data outside the original data space $\mathcal{X}$, we need to restrict the series expansion to a finite number of  eigenfunctions with well-defined extensions, even in the limit of infinite data. 
 
Hence, let  $\delta>0$ be a small number determined by strictly numerical considerations  [cite Applied Numerical Linear Algebra by Demmel]; e.g.,  the condition number of the Nystr\"{o}m extension in Eq.~\ref{eq::nystrom}, machine precision, overflow threshold, roundoff errors, etc.
For the regression problem, we will look for solutions from a finite-dimensional subspace: 
\begin{equation}
\mathcal{H}_{\varepsilon, \delta}  = \text{Span}\{\psi_{\varepsilon,0}, \ldots, \psi_{\varepsilon,J_{\delta}} \} \subset {\mathcal{H}}_{\varepsilon}.\label{eq::RKHS_finite}
\end{equation}
where 
\begin{equation}
	J_\delta(\varepsilon) =  \max \left\{j \in \mathbb{N} : \  \lambda_{\varepsilon,j} > \delta  \right\}. \label{eq::Jdelta}
\end{equation}
As $\delta$ does not depend on the sample size $n$, we treat it as a fixed number in our rate calculations (Sec.~\ref{sec::theory}). 
The number of basis functions $J_\delta(\varepsilon)$ is then a function of the bandwidth $\varepsilon$. The exact relationship is given by the decay rate of the eigenvalues of the kernel $a_\varepsilon$. 
 Fig.~\ref{fig::ex1}, for example, illustrates how a smaller bandwidth $\varepsilon$ leads to a slower decay rate and a larger $J_\delta(\varepsilon)$.

We refer to  $\mathcal{H}_{\varepsilon,\delta}$ as our {\em hypothesis space}, and the set of orthogonal eigenfunctions  $\{\psi_{\varepsilon,0}, \ldots, \psi_{\varepsilon,J_{\delta}} \}$ as an {\em adaptive basis}.
}

\subsection{Estimating the Regression Function from Data}~\label{sec::regression_estimate} 
In practice, of course, we need to estimate the basis $\{\psi_{\varepsilon,j}\}_{j}$ and the projections $\{\beta_{\varepsilon,j}\}_j$ from data. In this section, we describe the details. 

Given $X_1,\ldots,X_n$, we compute 
 a row-stochastic matrix  $\mathbb{A}_\varepsilon$, 
 where  
\begin{equation}
\mathbb{A}_\varepsilon(i,j)= \frac{k_\varepsilon(X_i,X_j)}{\sum_{l=1}^{n} k_\varepsilon(X_i,X_l)} \label{eq::A_matrix}
\end{equation}
for $i,j=1,\ldots,n$. 
 The elements $\mathbb{A}_\varepsilon(i,j)$ can be interpreted as transition probabilities $\mathbb{A}_\varepsilon(i,j) =  \mathbb{P}(x_i \rightarrow x_j)$ for a Markov chain over the data points (i.e., this is the discrete analogue of Eq.~\ref{eq::normalizedkernel} and a diffusion over $\mathcal{X}$). Let $\widehat{p}_\varepsilon(x)= \frac{1}{n}\sum_{j=1}^{n} k_\varepsilon(x,X_j)$. The Markov chain has 
a unique stationary measure given by $(\hat s_\varepsilon(X_1),\ldots,\hat s_\varepsilon(X_n))$, where the $i$th element
\begin{equation}\label{eq::s_hat}
\hat s_\varepsilon(X_i)=\frac{\widehat{p}_\varepsilon(X_i)}{\sum_{j=1}^{n} \widehat{p}_\varepsilon(X_j)}
\end{equation} is a kernel-smoothed density estimate at the $i$th observation.

To estimate the eigenfunctions $\psi_{\varepsilon,1},\ldots,\psi_{\varepsilon,J}$ of the continuous diffusion operator $A_\varepsilon$ in Eq.~\ref{eq::Aeps}, we first calculate the  eigenvalues $\lambda^{\mathbb{A}}_{\varepsilon,1},\ldots,\lambda^{\mathbb{A}}_{\varepsilon,J}$ and the associated (orthogonal) eigenvectors $\tilde{\psi}_{\varepsilon,1}^{\mathbb{A}},\ldots,\tilde{\psi}_{\varepsilon,J}^{\mathbb{A}}$ 
 of the symmetrized kernel matrix $\tilde{\mathbb{A}}_\varepsilon$, where
\begin{equation}\label{eq::symmetrized_matrix}
\tilde{\mathbb{A}}_\varepsilon(i,j) = \frac{k_\varepsilon(X_i,X_j)}{\sqrt{\sum\limits_lk_\varepsilon(X_i,X_l)} \sqrt{\sum\limits_lk_\varepsilon(X_l,X_j)}}.
\end{equation}
We normalize the eigenvectors so that
$\displaystyle\frac{1}{n}\sum\limits_{i=1}^n \tilde{\psi}_{\varepsilon,j}^{\mathbb{A}}(i) \tilde{\psi}_{\varepsilon,k}^{\mathbb{A}}(i)=\delta_{j,k}$, and define the new vectors 
$\displaystyle{\psi}_{\varepsilon,j}^{\mathbb{A}}(i)=\frac{\tilde{\psi}_{\varepsilon,j}^{\mathbb{A}}(i)}{\sqrt{\hat s_\varepsilon(X_i)}}$ for $i=1,\ldots,n$ and $j=1,\ldots,J$.  By construction, 
it holds that the $\lambda_{\varepsilon,j}^{\mathbb{A}}$'s and $\psi_{\varepsilon,j}^{\mathbb{A}}$'s are
 eigenvalues and right eigenvectors of the Markov matrix $\mathbb{A}_\varepsilon$: 
\begin{equation}\label{eq::Aeps_psi}
\mathbb{A}_\varepsilon {\psi}_{\varepsilon,j}^{\mathbb{A}} = \lambda_{\varepsilon,j}^{\mathbb{A}} {\psi}_{\varepsilon,j}^{\mathbb{A}}
\end{equation}
where 
\begin{equation}
\displaystyle\frac{1}{n}\sum\limits_{i=1}^n \psi_{\varepsilon,j}^{\mathbb{A}}(i) \psi_{\varepsilon,k}^{\mathbb{A}}(i)\hat s_\varepsilon(X_i) = \delta_{j,k}. \label{eq::eig_normalization}
\end{equation}
Note that the $n$-dimensional vector $\psi_{\varepsilon,j}^{\mathbb{A}}$  can be regarded as estimates 
of $\psi_{\varepsilon,j}(x)$ at the observed values $X_1,\ldots, X_n$. In other words, let 
\begin{equation}\label{eq::eigest}\hat\lambda_{\varepsilon,j}\equiv \lambda_{\varepsilon,j}^{\mathbb{A}} \  \ \ {\rm and} \ \ \ \hat\psi_{\varepsilon,j}(X_i) \equiv \psi_{\varepsilon,j}^{\mathbb{A}}(i)\end{equation} for $i=1,\ldots,n$.
We estimate the  function $\psi_{\varepsilon,j}(x)$ at values of $x$ not corresponding to one of the $X_i$'s using the so-called Nystr\"{o}m method. The idea is to
first rearrange the eigenfunction-eigenvalue equation
$\lambda_{\varepsilon,j} \psi_{\varepsilon,j} = A_\varepsilon \psi_{\varepsilon,j}$ 
as  
$$\psi_{\varepsilon,j}(x) = \frac{A_\varepsilon \psi_{\varepsilon,j}}{\lambda_{\varepsilon,j}}  =\frac{1}{\lambda_{\varepsilon,j}}  \int_{\X} \frac{k_\varepsilon(x,y)}{\int_{\X} k_\varepsilon(x,y)dP(y)} \psi_{\varepsilon,j}(y) dP(y),$$
and use the kernel-smoothed estimate
\begin{equation}\label{eq::nystrom}
\hat\psi_{\varepsilon,j}(x)  =\frac{1}{ \hat{\lambda}_{\varepsilon,j}} \sum_{i=1}^{n}  \frac{k_\varepsilon(x,X_i)}{\sum_{l=1}^{n} k_\varepsilon(x,X_l)} \hat\psi_{\varepsilon,j}(X_i). 
\end{equation}
for $\hat{\lambda}_{\varepsilon,j}>0$.

 Our final regression estimator is defined by Eq.~\ref{eq::series_est}
 with the estimated eigenvectors in Eq.~\ref{eq::nystrom} and expansion coefficients computed according to
 \begin{equation} 
 \hat{\beta}_{\varepsilon,j}  =  \frac{1}{n} \sum_{i=1}^n Y_i \hat \psi_{\varepsilon,j}(X_i) \hat{s}_\varepsilon(X_i). \label{eq::betahat}
\end{equation}

\comment{The estimated eigenfunctions $\hat \psi_j(x)$ should be approximately orthogonal on the labeled data set.
Hence, semi-supervised learning (SSL), where one computes and normalizes the eigenfunctions using both labeled and unlabeled data  (Eq.\ref{eq::eig_normalization}), can lead to biased estimates of the coefficients and poor regression results. In those situations, it may be more advantageous to either use labeled data only, or to compute a weighted least squares regression or equivalent without the orthogonality condition (ADD equation references).}


\begin{remark}[{\bf Semi-Supervised Learning, SSL}]\label{remark::SSL}
The spectral series framework naturally extends to  {\em semi-supervised learning} (SSL)
  ~\citep{zhu2009introduction} where in addition to the ``labeled'' sample $(X_1,Y_1),\ldots,(X_n,Y_n)$ there are
  additional ``unlabeled'' data; i.e., data  $X_{n+1},\ldots,X_{n+m}$ where the covariates $X_i$ but not the labels $Y_i$ are known. Typically $m \gg n$, as collecting data often is less costly than labeling them.   
   By including unlabeled examples (drawn from the same distribution $P_X$) into the kernel matrix $\mathbb{A}_\varepsilon$, we can 
   improve our estimates of $\lambda_{\varepsilon,j}$, $\psi_{\varepsilon,j}$ and $S_\varepsilon$. 
    The summation in Equations~\ref{eq::eig_normalization} and~\ref {eq::nystrom} will then be over all $n+m$ observations, while Eq.~\ref{eq::betahat} remains the same as before. See e.g.~\cite{Nadler_semi, Zhou:Srebro:2011} for SSL with Laplacian eigenmaps in the limit  of infinite unlabeled data, i.e., in the limit $m \rightarrow \infty$.
\end{remark}

\subsection{Loss Function and Tuning of Parameters}~\label{sec::loss_function}
We measure the performance of an estimator $\hat{f}(x)$ via the $L^2$ loss function
 \begin{equation}
L(f,\hat f) = \int_{\X} 
 \left( f(x) - \hat{f}(x)\right)^2 dP(x). \label{eq:lossfunction}
\end{equation}
  To choose tuning parameters (such as the kernel bandwidth $\varepsilon$ and the number of basis functions $J$), we split the data into a training and a validation set.  For each choice of $\varepsilon$ and a sufficiently large constant $J_{\rm max}$, we use the training set and Eqs.~\ref{eq::nystrom}-\ref{eq::betahat} to estimate the eigenvectors $\psi_{\varepsilon,1}, \ldots, \psi_{\varepsilon,J_{\rm max}}$ and the expansion coefficients $\beta_{\varepsilon,0},\ldots, \beta_{\varepsilon,J_{\rm max}}$. We then use the validation set $(X_1',Y_1'),\ldots,(X_n',Y_n')$ to compute the estimated loss
\begin{equation}\label{eq::empirical_loss}
 \hat{L}(f,\hat f) = \frac{1}{n} \sum_{i=1}^n \left( Y_i' - \hat{f}(X_i')\right)^2 =   \frac{1}{n} \sum_{i=1}^n \left( Y_i' - \sum_{j=0}^{J} \hat \beta_{\varepsilon,j} \hat \psi_{\varepsilon,j}(X_i') \right)^2
 \end{equation}
for different values of $J \leq J_{\rm max}$. We choose the  ($\varepsilon$, $J$)-model with the lowest estimated loss on the validation set.
 
  The computation for fixed $\varepsilon$ and different $J$ is very fast. Due to orthogonality of the basis, the estimates $\hat \beta_{\varepsilon,j}$ and  
 $\hat \psi_{\varepsilon,j}$ depend on $\varepsilon$ but {\em not} on $J$.
 

\subsection{Scalability}~\label{sec::scalability} 
The spectral series estimator is faster than most traditional approaches in high dimensions. Once the kernel matrix has been constructed, the eigendecomposition takes the {\em same} amount of time for all values of $d$.

 In terms of scalability for large data sets,  
  one can dramatically reduce the computational cost  by implementing fast approximate eigendecompositions. 
 For example, the Randomized SVD by Halko et al.~\citep{Halko} cuts down the cost from $O(n^3)$ to roughly $O(n^2)$ with little impact on statistical performance (see Fig.~\ref{fig::RSVD}).  
 According to Halko et al., these randomized methods are especially well-suited for parallel implementation, which is a topic we will explore in future work.   

\section{Connection to Other Work}\label{sec:related_work} 

\subsection{Linear Regression with Transformed Data}
 One can view our series model  as a (weighted) linear regression 
  after a data transformation $Z=\Psi(X)$, where $\Psi=(\psi_{1},\ldots,\psi_{J})$ are the first $J$ eigenvectors of the diffusion operator $A_\varepsilon$. 
By increasing $J$, the dimension of the feature space, we achieve more flexible, {\em fully  nonparametric} representations. Decreasing $J$ adds more structure to the regression, as dictated by the eigenstructure of the data.

Eq.~\ref{eq::betahat} is similar to a weighted least squares (WLS) solution to a linear regression in $(Z_1, Y_1),\ldots,(Z_n, Y_n)$ but with an efficient orthogonal series implementation and no issues with collinear variables. Define the $n \times (J+1)$ matrix of predictors,
\begin{equation}
 \mathbb{Z} =
 \begin{pmatrix}
  1 &  {\psi}_1(X_1) & \cdots & {\psi}_{J}(X_1) \\
  1 &  {\psi}_1(X_2) & \cdots & {\psi}_{J}(X_2) \\
  \vdots  & \vdots  & \ddots & \vdots  \\
 1 & {\psi}_1(X_n) & \cdots & {\psi}_{J}(X_n) 
 \end{pmatrix},
\end{equation}
and introduce the weight matrix
\begin{equation}
 \mathbb{W} =
 \begin{pmatrix}
{s}(X_1)   &  0 & \cdots & 0 \\
 0 &  {s}(X_2)  & \cdots & 0\\
  \vdots  & \vdots  & \ddots & \vdots  \\
 0 &  0 & \cdots &  {s}(X_n)
 \end{pmatrix}, \label{eq:WLS_weightmatrix}
\end{equation}
where $\Psi_j$ and $s$ are estimated from data (Equations~\ref{eq::s_hat} and~\ref{eq::eigest}).  Suppose that $Y=\mathbb{Z} \beta+e$, 
where $Y=(Y_1,\ldots,Y_n)^T$,  $\beta=(\beta_1,\ldots,\beta_J)^T$, and the random vector $e=(\epsilon_1,\ldots,\epsilon_n)^T$ represents the errors.  By minimizing the weighted residual sum of squares \begin{equation}RSS(\beta) = (Y-\mathbb{Z} \beta)^T \mathbb{W} (Y-\mathbb{Z} \beta), \end{equation}
we arrive at the WLS estimator
\begin{equation}
 \hat \beta = (\mathbb{Z}^T \mathbb{W} \mathbb{Z})^{-1} (\mathbb{Z}^T \mathbb{W} Y) = \frac{1}{n} \mathbb{Z}^T \mathbb{W} Y ,  \label{eq:WLS_coeffs}
\end{equation}
where the matrix $\mathbb{W}$ puts more weight on observations in high-density regions.
This expression is equivalent to Eq.~\ref{eq::betahat}.  

 Note that thanks to the orthogonality property $\mathbb{Z}^T \mathbb{W} \mathbb{Z}=n\mathbb{I}$, model search and model selection are feasible even for complex models with very large $J$. This is in clear contrast with standard multiple regression where one needs to recompute the $\hat \beta_j$ estimates for each model with a different $J$, invert the matrix $\mathbb{Z}^T \mathbb{W} \mathbb{Z}$, and potentially deal with inputs (columns of the design matrix $\mathbb{Z}$) that are linearly dependent.

\begin{remark} [{\bf Heteroscedasticity}]
More generally, let $\sigma(x)$ be a non-negative function rather than a constant, and let $\epsilon_i$ be iid realizations of a random variable $\epsilon$ with zero mean and unit variance. Consider the regression model $Y_i=f(X_i)+\sigma(X_i)\epsilon_i$. We can handle heteroscedastic errors by applying the same framework as above to a {\em rescaled} regression function $g(x)=f(x)/\sigma(x)$.
\end{remark}

\subsection{Kernel Machine Learning and Regularization in RKHS} \label{sec:kernel_ML}

 Kernel-based regularization methods use similar approximation spaces as us. 
In kernel machine learning~\citep{Scholkopf:Smola:2001,cucker2007learning}, one often considers the variational problem
\begin{equation} 
\min_{f \in {\mathcal{H}_k}} \left[ \frac{1}{n}\sum_{i=1}^n L(y_i,f(x_i))+ \gamma \|f\|_{\mathcal{H}_k}^2\right]  ,\label{eq::regularization} 
\end{equation}
where $L(y_i,f(x_i))$ is a convex loss function, $\gamma>0$ is a penalty parameter, and 
$\mathcal{H}_k$ is the Reproducing Kernel Hilbert Space (RKHS) associated with a symmetric positive semi-definite kernel $k$.\footnote{To every continuous, symmetric, and positive semi-definite kernel $k:\mathcal{X} \times \mathcal{X} \rightarrow \mathbb R$ is associated a unique RKHS $\mathcal{H}_k$~\citep{Aronszajn:1950}. This RKHS is defined to be the closure of the linear span of the set of functions $\{k(x, \cdot) : x \in \X \}$ with the inner product satisfying the reproducing property $\langle k(x,\cdot), f \rangle_{\mathcal{H}_k}=f(x)$ for all $x \in \X, f \in \mathcal{H}_k$.} Penalizing the RKHS norm  $\| \cdot\|_{\mathcal{H}_k}$ imposes smoothness conditions on possible solutions. 
Now suppose that
$$k(x,y) = \sum_{j=0}^{\infty} \lambda_{j} \phi_j(x) \phi_j(y),$$
where the RKHS inner product is related to the $L^2$-inner product according to $ \ \langle \phi_i, \phi_{j} \rangle_{\mathcal{H}_k} = \frac{1}{\lambda_i}\langle \phi_i, \phi_j \rangle_{L^2(\X, P)}  =  \frac{1}{\lambda_i}\delta_{i,j}.$  Eq.~\ref{eq::regularization} is then equivalent to considering eigen-expansions
$$ f(x) = \sum_{j=0}^{\infty} \beta_j \phi_{j}(x),$$
and seeking solutions to
$ \min_{f \in {\mathcal{B}_r}} \frac{1}{n }\sum_{i=1}^n L(y_i,f(x_i)),$
where the hypothesis space
\begin{equation} B_r=\left\{f \in \mathcal{H}_k : \|f\|_{\mathcal{H}_k} \leq r\right\} \label{eq::RKHS_ball} \end{equation}
is a ball of the RKHS $\mathcal{H}_k$ with radius $r$, and the RKHS norm is given by
  $\|f\|_{\mathcal{H}_k} = \left(\sum_{j=0}^{\infty} \frac{\beta_j^2}{\lambda_{j}}\right)^{1/2}$. 


Here are some key observations: 

(i) The above setting is similar to ours. The regularization in Eq.~\ref{eq::regularization} differentially shrinks contributions from higher-order terms with small $\lambda_j$ values. In spectral series, we use a projection (i.e., a basis subset selection) method, but the empirical performance is usually similar. 

(ii) There are some algorithmic differences, as well as differences in how the two regression estimators are analyzed and interpreted. In our theoretical work,  
we consider Gaussian kernels with {\em flexible variances}; that is, we choose the approximation spaces in a data-dependent way (cf. multi-kernel regularization schemes for SVMs~\cite{Wu:EtAl:2007}) so that the estimator can adapt to sparse structure and the intrinsic dimension of the data. Most theoretical work in kernel machine learning assume a fixed RKHS. 

(iii) There are also other differences. 
Support Vector Machines~\citep{Steinwart:Christmann:2008} and other kernel-based regularization methods (such as splines, ridge regression and radial basis functions) never explicitly compute the eigenvectors of the kernel. Instead, these methods rely on the classical Representer Theorem~\citep{Wahba:1990} which states that the solution to Eq.~\ref{eq::regularization} is a finite expansion of the form $f(x)=  \sum_{i=1}^n \alpha_i k(x_i,x)$. 
The original infinite-dimensional variational problem is then reduced to a finite-dimensional optimization of the coefficients $\alpha_i$. In a naive least-squares implementation, 
however, one has to recompute these coefficients for each choice of the penalty parameter $\gamma$, which can make cross-validation cumbersome. In our spectral series approach, we take advantage of the orthogonality of the basis for fast model selection and computation of the  $\beta_j$ parameters. As in spectral clustering, we also use eigenvectors to {\em organize} and {\em visualize} data that can otherwise be hard to interpret.



\subsection{Manifold Regularization and Semi-Supervised Learning}
 Our spectral series method is closely related to Laplacian-based regularization:
 In~\cite{Belkin:EtAl:06}, Belkin et al. extend the kernel-based regularization framework to
 incorporate additional information about the geometric structure of the marginal $P_X$.  Their idea is to add a data-dependent penalty term to  Eq.~\ref{eq::regularization} that controls the complexity as measured by the geometry of the distribution. Suppose that one is given labeled data $(X_1,Y_1), \ldots, (X_n,Y_n) \sim P_{X,Y}$ as well as unlabeled data $X_{n+1}, \ldots, X_{n+m} \sim P_X$, where in general $m \gg n$. (The limit $m \rightarrow \infty$ corresponds to having full knowledge of $P_X$.) Under the assumption that the support of $P_X$ is a compact submanifold of $\mathbb{R}^d$,  the authors propose minimizing a graph-Laplacian regularized least squares function
\begin{equation} 
	\min_{f \in {\mathcal{H}_k}} \left[ \frac{1}{n}\sum_{i=1}^n L(y_i,f(x_i))+ \gamma_{A} \|f\|_{\mathcal{H}_k}^2 + 
	 \frac{\gamma_{I}}{n+m} \sum_{i,j=1}^{n+m} (f(x_i)-f(x_j))^2 W_{i,j}  \right]  ,\label{eq::manifold_reg} 
\end{equation}
where  $W_{i,j}$ are the edge weights in the graph, and the last Laplacian penalty term favors functions $f$ for which $f(x_i)$ is close to $f(x_j)$ when $x_i$ and $x_j$ are connected with large weights. 

\comment{{\color{red}[[ATTN: SKIP THIS?} However, if the same kernel $k$ and bandwidth $\varepsilon$ is used to define the weights $W_{i,j}$ as for the RKHS norm, then the Laplacian regularized optimization problem reduces to the standard RKHS regularization problem in Eq.~\ref{eq::regularization} (see Remark 4 in Belkin et al). This behavior is consistent with the result in [cite Lafferty and Wasserman, 2007] that a trivially defined estimator that uses Laplacian regularization has the same rate of convergence as the usual kernel estimator. In other words, the interesting scenario (where unlabeled data could potentially improve the estimator) is when the Laplacian penalty term is based on a sharper kernel than in the RKHS norm, with a bandwidth  that approaches $0$ in the limit  of infinite unlabeled data.{\color{red}]]} }

 Note that the eigenbasis of our row-stochastic matrix $\mathbb{A}_\varepsilon$ minimizes the distortion $\sum_{i,j} (f(x_i)-f(x_j))^2 W_{i,j}$ in Eq.~\ref{eq::manifold_reg} if you regard the entries of $\mathbb{A}_\varepsilon$ as the weights  $W_{i,j}$~\citep{BelkinNiyogi03}. Indeed, the eigenvector $\psi_{\varepsilon,j}$ minimizes the term subject to being orthogonal to $\psi_{\varepsilon,i}$ for $i<j$.   Hence, including a Laplacian penalty term is comparable to truncating the eigenbasis expansion in spectral series.  Moreover, the semi-supervised regularization in Eq.~\ref{eq::manifold_reg} is similar to a semi-supervised version of our spectral series approach, where we first use both labeled and unlabeled data and a kernel with bandwidth $\varepsilon_{m+n}$ to compute the eigenbasis, and then extend the eigenfunctions  according to Eq.~\ref{eq::nystrom} via a (potentially wider) kernel with bandwidth $h_n$. The main downside of the Laplacian-based framework above is that it is hard to analyze theoretically. As with other kernel-based regularizers, the method also does not explicitly exploit eigenvectors and orthogonal bases.




\section{Choice of Kernel}\label{sec::kernel}
In the RKHS literature, there is a long list of commonly used kernels. These include, e.g., the Gaussian kernel $k(x,y) =   \exp \left(-\frac{\|x-y\|^2}{4 \varepsilon} \right)$, polynomial kernels $k(x,y)=(\langle x,y\rangle + 1)^q$~\citep{Vapnik:1996}, and the thin-plate spline kernel $k(x,y)=\|x-y\|^2 \log(\|x-y\|^2)$~\citep{Girosi:95}.  In our numerical experiments (Sec.~\ref{sec:examples}), we will consider both Gaussian and polynomial kernels, but throughout the rest of the paper, we will primarily work with the (row-normalized) Gaussian kernel. 
There are several reasons for this choice: 
\begin{enumerate}[(i)] 
   \item The Gaussian kernel can be interpreted as the heat kernel in a manifold setting~\citep{BelkinNiyogi03,Grigoryan:06}. We will take advantage of this connection in the theoretical analysis of the spectral series estimator  (Sec.~\ref{sec::theory}). 
     \item 
     The eigenfunctions of the Gaussian kernel are simultaneously concentrated in time (i.e., space) and frequency, and are particularly well-suited for estimating functions that are smooth with respect to a low-dimensional data distribution.
\end{enumerate}

The following two examples illustrate some of the differences in the eigenbases of Gaussian and polynomial kernels:

\comment{The Gaussian kernel is {\em smooth} or infinitely differentiable. It can be interpreted as the {\em heat kernel} in a manifold setting~\citep{BelkinNiyogi03,Grigoryan:06}. The kernel is {\em universal}, meaning that the associated RKHS is dense in the space of continuous functions on $\mathcal{X}$~\citep{Micchelli:2006}. Finally, the computed eigenfunctions are simultanously {\em concentrated in time and frequency} --- a property which is especially useful in a setting with sparse structure in high dimensions. The following two examples illustrate 
how the Gaussian kernel yields smooth Fourier-like eigenfunctions concentrated around the underlying data distribution, whereas the eigenfunctions of polynomial kernels lack such a property. 
}

\begin{example}
Suppose that $P$ is a uniform distribution $U(-1,1)$  on the real line. Fig.~\ref{fig::unif_distribution}, {\em left}, shows the eigenfunctions of a third-order polynomial kernel $k(x,y)=(\langle x,y\rangle + 1)^3$. These functions are smooth but have large values outside the support  of $P$. Contrast this eigenbasis with the eigenfunctions in Fig.~\ref{fig::unif_distribution}, {\em right}, of a Gaussian kernel. The latter functions are concentrated on the support of $P$ and are orthogonal on $(-1,1)$ as well as on $(-\infty,\infty)$.
\begin{figure}[H]
\vspace{-0.7in}
\begin{center}
\begin{tabular}{cc}
{\hspace{-0.3in} \includegraphics[width=2.7in,angle=-0]{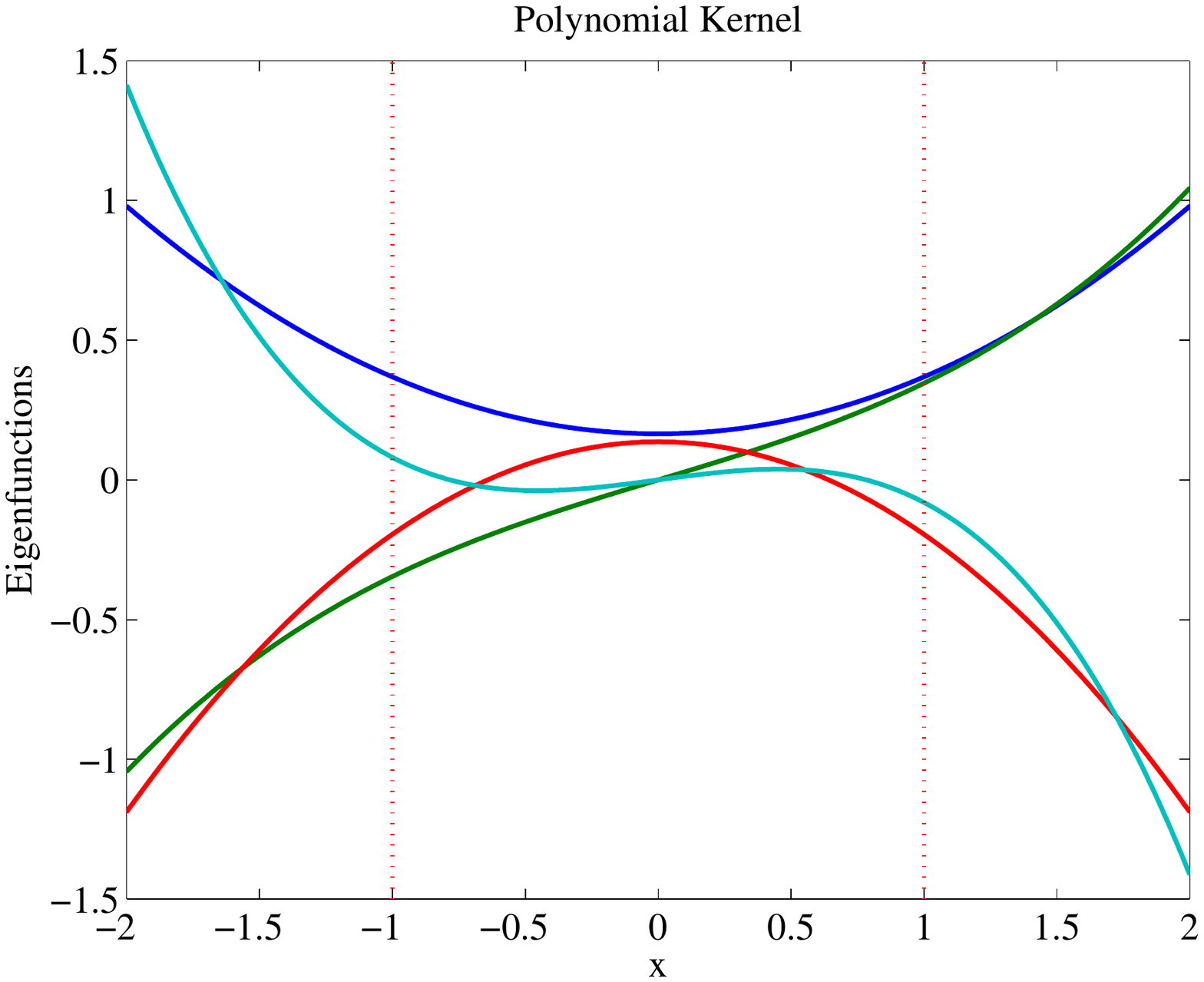}} &
{\hspace{-0.4in} \includegraphics[width=2.7in,angle=-0]{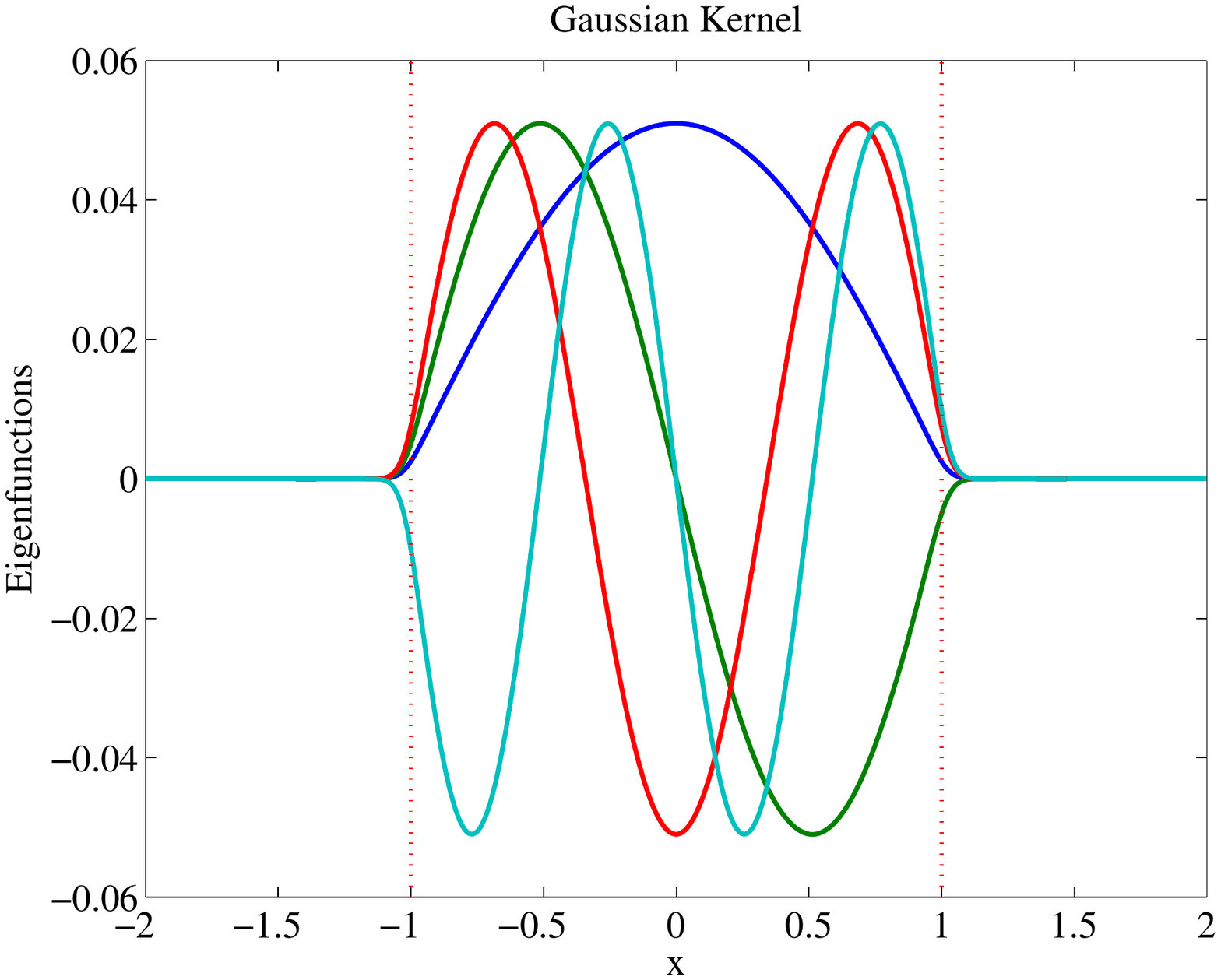}}
\end{tabular}
\end{center}
\vspace{-0.9in}
\caption{ \footnotesize Uniform distribution U(-1,1). Eigenfunctions of a third-order polynomial kernel, {\em left}, and of an (un-normalized) Gaussian kernel, {\em right}.  The latter eigenfunctions form an orthogonal Fourier-like basis concentrated on the support of the data distribution.}
\label{fig::unif_distribution}
\end{figure}
\end{example}

\begin{example}\label{eq::noisyspiral}
Consider data around a noisy spiral:
$$
\left\{
\begin{array}{ccc}
x(u) &=& \sqrt{u} \cos(\sqrt{u}) + \epsilon_x\\
y(u) &=& \sqrt{u} \sin(\sqrt{u})   + \epsilon_y,
\end{array}  
\right.
$$
where $u$ is a uniform random variable, and $\epsilon_x$ and  $\epsilon_y$ are normally distributed random variables. The eigenfunctions of a polynomial kernel do not adapt well to the underlying distribution of the data. Fig.~\ref{fig::spiral_data}, {\em left}, for example, is a contour plot of the Nystr\"{o}m extension of the fourth empirical eigenvector of a third-order polynomial kernel. In contrast, the eigenfunctions of a Gaussian diffusion kernel vary smoothly along the spiral direction,   forming a Fourier-like basis with orthogonal eigenfunctions that concentrate around high-density regions; 
 see Fig.~\ref{fig::spiral_data}, {\em right}.
\begin{figure}[H]
\vspace{-0.7in}
\begin{center}
\begin{tabular}{cc}
{\hspace{-0.5in} \includegraphics[width=3in,angle=-0]{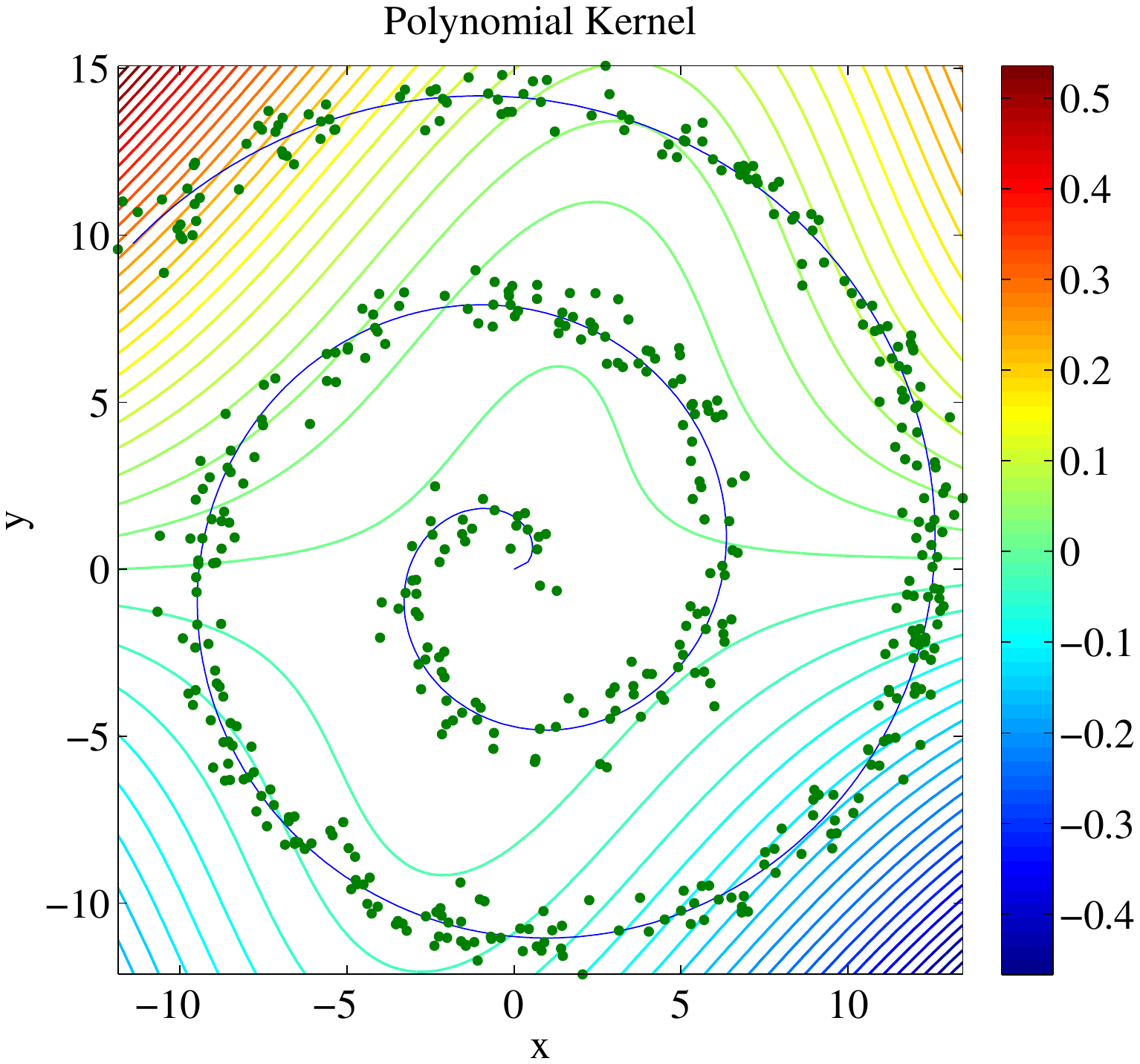}} &
{\hspace{-0.8in} \includegraphics[width=3in,angle=-0]{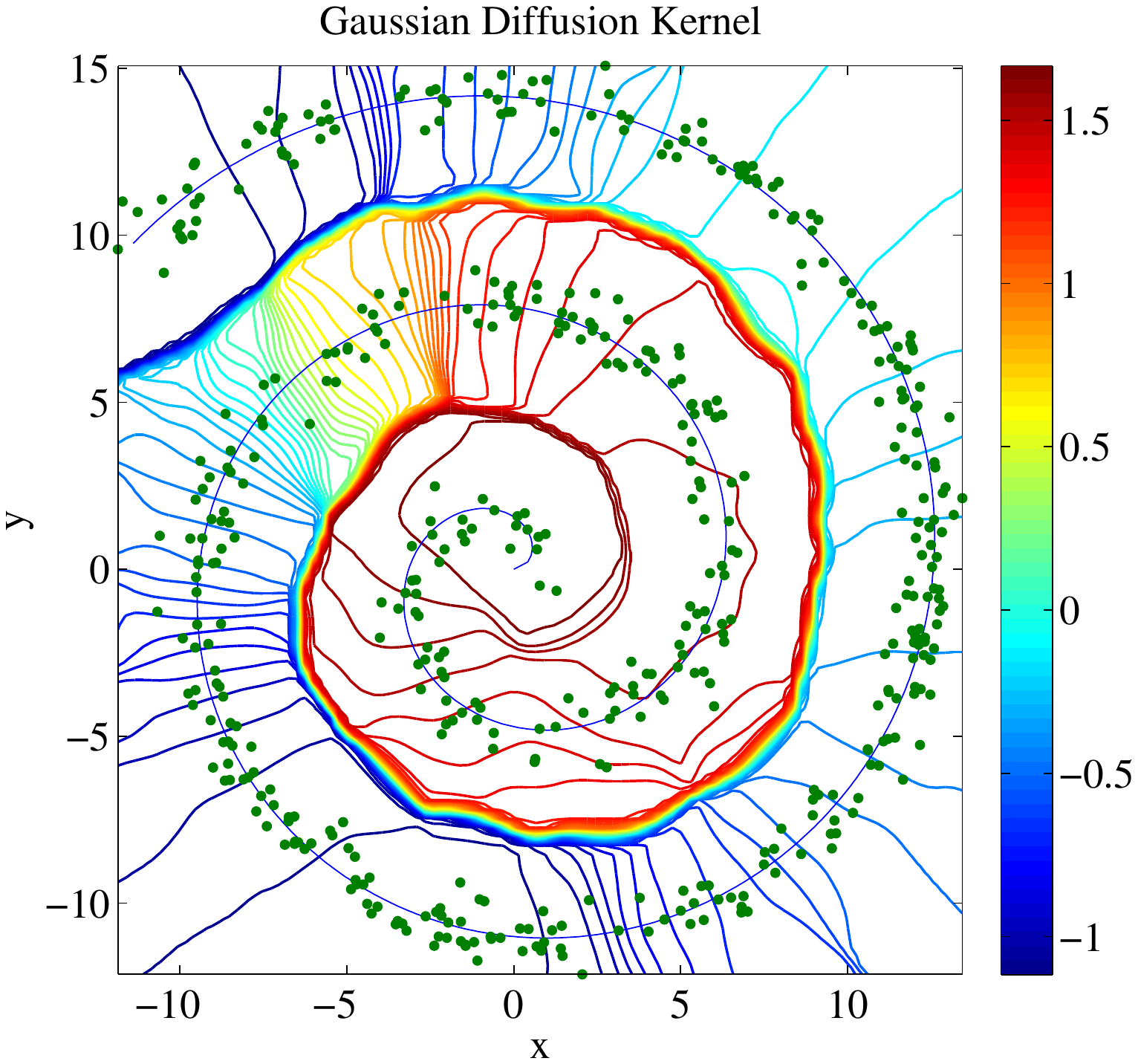}}
\end{tabular}
\end{center}
\vspace{-0.9in}
\caption{ \footnotesize Spiral data. Contour plots of the fourth eigenvector of a third-order polynomial kernel ({\em left}) and the fourth eigenvector of a Gaussian diffusion kernel ({\em right}). The latter eigenvector is localized and varies smoothly along the spiral direction.}
\label{fig::spiral_data}
\end{figure}
\end{example}

 In high dimensions, Gaussian extensions can be seen as a generalization of {\em prolate spheroidal wave functions} \cite{Coifman:Lafon:06b}. Prolates were originally introduced by Slepian and Pollack as the solution to the problem of simultaneously and optimally concentrating a function and its Fourier content (see~\cite{Slepian:1983} for a fascinating recount of this development in Fourier analysis and modeling).
    The band-limited functions that maximize their energy content within a space domain $\mathcal{X} \subset \mathbb{R}^n$ are extensions of the eigenfunctions of the integral operator of a Bessel kernel restricted to $\mathcal{X} $~\cite[Section 3.1]{Coifman:Lafon:06b}. In high dimensions, Bessel and Gaussian kernels are equivalent~\citep{Schoenberg:1938}, suggesting that the eigenfunctions of the Gaussian kernel are {\em  nearly}  optimal. 

However, although Gaussian kernels have many advantages, they may not always be the best choice in practice. Ultimately, this is determined by the application and by what the best measure of similarity between two data points would be.  Our framework suggests a principled way of selecting the best kernel for regression: Among a set of reasonable candidate kernels, choose the estimator with the {\em smallest empirical loss} according to Eq.~\ref{eq::empirical_loss}. We will, for example, use this approach in Sec.~\ref{sec:examples} to choose the optimal degree $q$ for a set of polynomial kernels of the form $k(x,y)=(\langle x,y\rangle + 1)^q$.

{\bf Normalization of Local Kernels.}  In the RKHS literature, it is standard to work with ``unnormalized'' kernels. In spectral clustering~\citep{Luxburg:2007}, on the other hand, researchers often use the ``stochastic'' and ``symmetric'' normalization schemes in Eq.~\ref{eq::A_matrix} and Eq.~\ref{eq::symmetrized_matrix}, respectively. We have found (Sec.~\ref{sec:examples}) that 
 the exact normalization often has little effect on the performance in regression. Nevertheless, we choose to use the {\em row-stochastic} kernel for reasons of interpretation and analysis: First, the limit of the bandwidth $\varepsilon \rightarrow 0$ is well-defined, and there is a series of works on the convergence of the graph Laplacian to the Laplace-Beltrami operator on Riemannian manifolds~\citep{Coifman:Lafon:06, Belkin:Niyogi:05, Hein:EtAl:05, Singer:06, Gine:Koltchinskii:2006}. Fourier functions originate from solving a Laplace eigenvalue problem on a bounded domain; hence, the eigenfunctions of the diffusion operator can be seen as a generalization of Fourier series to manifolds.  

Moreover, the row-stochastic kernel yields less variable empirical functions than the unnormalized or symmetric forms. As an illustration, consider the noisy spiral data  in Example~\ref{eq::noisyspiral}. Fig.~\ref{fig::spiral_projections} shows the estimated projections onto the spiral direction of the eigenfunctions of the {\em symmetric} and the  {\em stochastic} forms; see the left and right plots, respectively. The eigenfunctions  are clearly smoother in the latter case. By construction, the empirical eigenfunctions of the symmetric operator are orthogonal with respect to the empirical distribution $\hat{P}_n$, whereas the estimated eigenfunctions of the stochastic operator are orthogonal with respect to the {\em smoothed} data distribution $\hat{S}_{\varepsilon}$. The kernel bandwidth $\varepsilon_n$ defines the scale of the analysis.
\begin{figure}[H]
\vspace{-0.7in}
\begin{center}
\begin{tabular}{cc}
{\hspace{-0.3in} \includegraphics[width=2.7in,angle=-0]{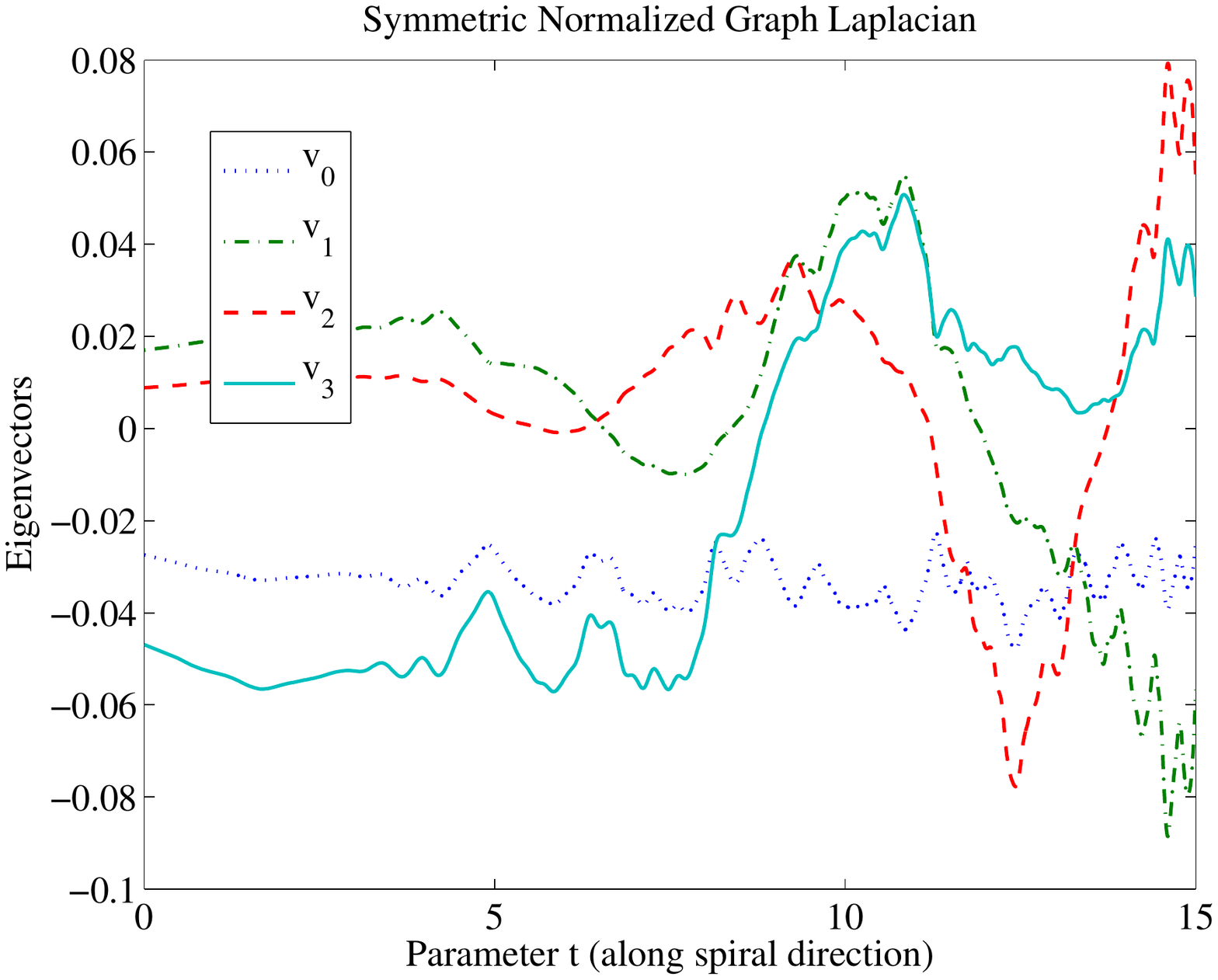}} &
{\hspace{-0.4in} \includegraphics[width=2.7in,angle=-0]{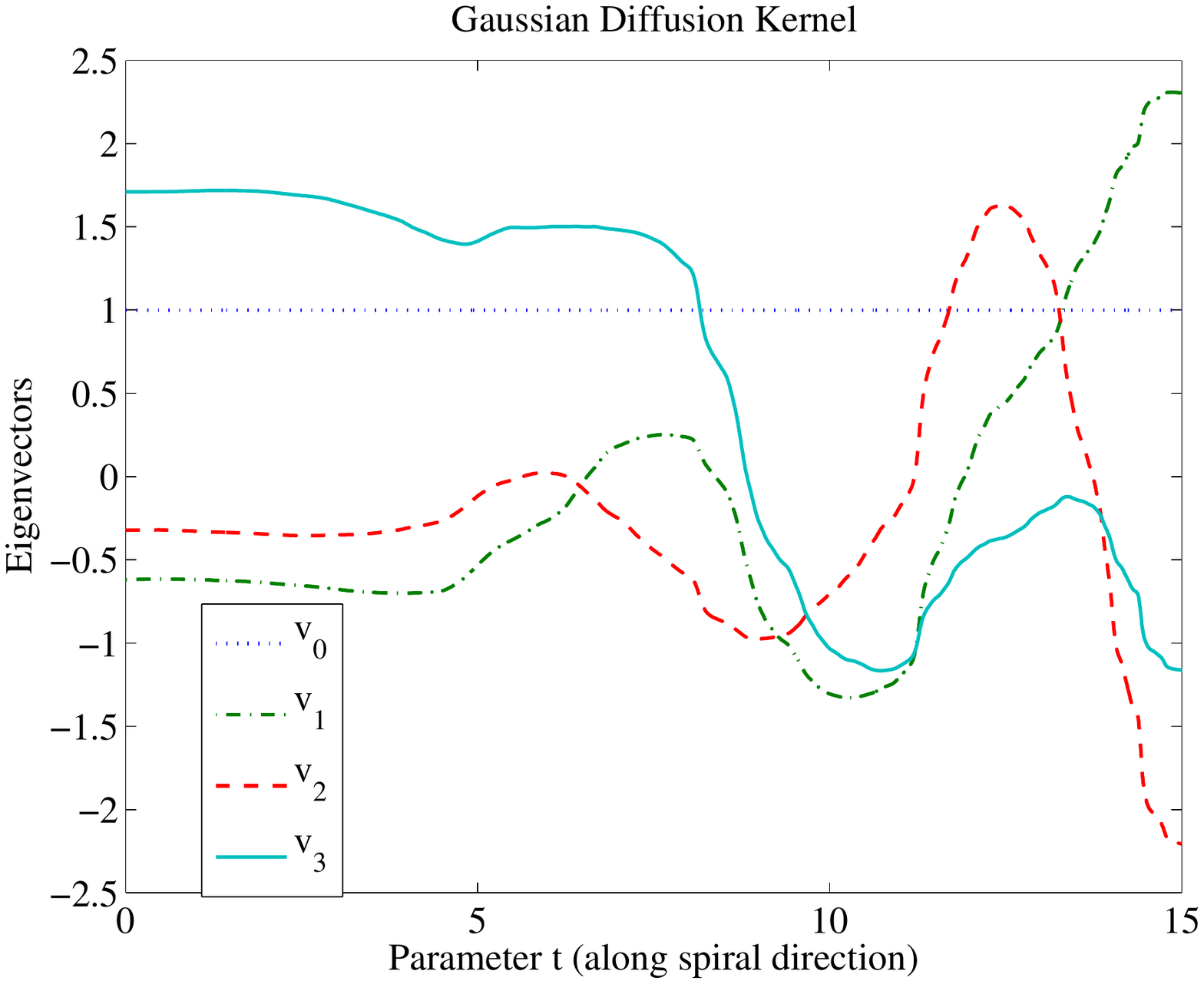}}
\end{tabular}
\end{center}
\vspace{-0.9in}
\caption{ \footnotesize Projection onto spiral direction  ($t=\sqrt{u}$) of the eigenvectors of the symmetric graph Laplacian, {\em left}, and of the Gaussian diffusion kernel, {\em right}. The latter estimates are less noisy.}
\label{fig::spiral_projections}
\end{figure}

\section{Theory}\label{sec::theory}
 In this section, we  derive 
  theoretical bounds on the loss (Eq.~\ref{eq:lossfunction}) of a series regression estimator with a radial kernel for a standard fixed RKHS setting (Theorem~\ref{theorem::loss_fixedkernel}), as well as a setting where the kernel bandwidth $\varepsilon_n$ {\em varies} with the sample size $n$ (Theorem~\ref{theorem::loss_epskernel}). We also further elaborate on the connection between spectral series and Fourier analysis by generalizing the well-known link between Sobolev differentiable signals and the approximation error in a Fourier basis. 

\comment{Here we investigate how the performance of a spectral series estimator with a radial kernel depends on the choice of the smoothing parameters  $J$ and $\varepsilon$. 
 We also address questions such as when the basis representation  $\hat{f}(x)$ in Eq.~\ref{eq::series_est} is sparse, and 
whether our method adapts to the intrinsic dimensionality of the data.}
 \comment{$L(f,\hat f)  =   \frac{1}{n} \sum_{i=1}^{n}  | f(X_i) - \hat{f} (X_i;\mathcal{D})|^2$
and the associated risk
$$R_n(f,\hat f) =  \frac{1}{n} \sum_{i=1}^{n}  \E_{\mathcal{D}}  | f(X_i) - \hat{f} (X_i;\mathcal{D})|^2, $$
where the expectation is averaging over data $\mathcal{D}=\{(X_1,Y_1), \ldots, (X_n,Y_n)\}$.}

Using the same notation as before, let
\begin{eqnarray*}
f(x) =  \sum_{j=0}^{\infty} \beta_{\varepsilon,j} \psi_{\varepsilon,j}(x), \  \  \ &f_{\varepsilon,J}(x) = \sum_{j=0}^{J}  \beta_{\varepsilon,j}  \psi_{\varepsilon,j}(x),& \\
&\hat{f}_{\varepsilon,J}(x) = \sum_{j=0}^{J}  \hat{\beta}_{\varepsilon,j}  \hat{\psi}_{\varepsilon,j}(x),& 
\end{eqnarray*}
where  $\beta_{\varepsilon,j} =
\int_\X f(x) \psi_{\varepsilon,j}(x) dS_\varepsilon(x)$ and  $\hat{\beta}_{\varepsilon,j}  =  \frac{1}{n} \sum_{i=1}^n Y_i \hat \psi_{\varepsilon,j}(X_i) \hat{s}_{\varepsilon}(X_i)$. 
We write
$$|f(x)-\hat{f}_{\varepsilon,J}(x)|^2 \leq 2|f(x)-f_{\varepsilon,J}(x)|^2  + 2|f_{\varepsilon,J}(x) - \hat{f}_{\varepsilon,J}(x)|^2,$$ and refer to the two terms as ``bias'' and ``variance''.  Hence,  we define the 
  integrated  bias and variance 
$$L_{\rm bias}  =  \int_{\X}   | f(x) - f_{\varepsilon,J}(x)|^2 dP(x),$$  and
$$ L_{\rm var}  =   \int_{\X}  |f_{\varepsilon,J}(x)- \hat{f}_{\varepsilon,J}(x)   |^2 dP(x),$$
\comment{$$L_{\rm bias}  \equiv  \frac{1}{n} \sum_{i=1}^{n}  | f(X_i) - f_{\varepsilon,J}(X_i)|^2,$$  and
$$ L_{\rm var}  \equiv \frac{1}{n} \sum_{i=1}^{n}  |  f_{\varepsilon,J}(X_i)- \hat{f}_{\varepsilon,J}(X_i)   |^2,$$} and bound the two components separately. 
Our assumptions are:

{\bf(A1)} $P$ has compact support $\X$ and bounded density $0<a \leq p(x) \leq b < \infty$,  $\forall x \in \X$.

{\bf(A2)} The weights are positive and bounded; that is, $\forall x, y \in \X$,
$$0 < m \leq k_{\varepsilon}(x,y) \leq M,$$ where $m$ and $M$ are constants that do not depend on $\varepsilon$.

{\bf(A3)}  The psd operator $A_\varepsilon$ has nondegenerate eigenvalues; i.e., $$1 \equiv \lambda_{\varepsilon,0} > \lambda_{\varepsilon,1} > \lambda_{\varepsilon,2} > \ldots \lambda_{\varepsilon,J} > 0.$$

{\bf(A4)} For all $0 \leq j \leq J$ and  $X \sim P$,  there exists some constant  $C < \infty$ (not depending on $n$) such that
$$\mathbb{E} \left[ |\hat{\varphi}_{\varepsilon,j}(X) - \varphi_{\varepsilon,j}(X)|^2 \right] < C,$$
where $\varphi_{\varepsilon,j}(x)= \psi_{\varepsilon,j}(x) s_{\varepsilon}(x)$ and $\hat \varphi_{\varepsilon,j}(x)=\hat \psi_{\varepsilon,j}(x) \widehat s_{\varepsilon}(x)$.  

Without loss of generality, we assume that the eigenfunctions $\psi_{\varepsilon,j}$ are estimated using an unlabeled sample 
$\widetilde{X}_1,\ldots,\widetilde{X}_n$ that is drawn independently from the data used to estimate the coefficients $\beta_{\varepsilon,j}$. This is to simplify the proofs and can always be achieved by splitting the data in two sets.

\subsection{Bias}\label{sec::bias}
A key point is that the approximation error of the regression depends on the smoothness of $f$ relative to $P$. Here we present two different calculations of the bias based on two related notions of smoothness.  The first notion is standard in the kernel literature and is based on RKHS norms. The second notion is based on our diffusion framework and can be seen as a generalization of Sobolev differentiability.

\subsubsection*{Method 1: Smoothness measured by RKHS norm.} Let $\displaystyle \tilde{a}_\varepsilon(x,y)=\frac{k_\varepsilon(x,y)}{\sqrt{p_\varepsilon(x)} \sqrt{p_\varepsilon(y)}},$ where $\varepsilon$ is a strictly positive number. Under previous assumptions, this kernel is symmetric and psd with a unique RKHS which we denote by $\mathcal{H}_\varepsilon$. A standard way to measure smoothness of a function $f$  in a RKHS $\mathcal{H}_\varepsilon$ is through the RKHS norm $\|f\|_{\mathcal{H}_\varepsilon}$ (see, e.g.,~\cite{Minh:2006}).  One can then define function classes $$\mathcal{H}_{\varepsilon,M} = \{f \in \mathcal{H}_\varepsilon : \|f\|_{\mathcal{H}_\varepsilon} \leq M\},$$ where $M$ is a positive number dependent on $\varepsilon$. 
\begin{proposition}\label{prop::bias_RKHS} Assume $f \in \mathcal{H}_{\varepsilon,M}$, where $\varepsilon>0$. Then,
$$L_{\rm{bias}} =  O(M \lambda_{\varepsilon,J}) .$$
\end{proposition}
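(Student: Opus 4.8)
The plan is to reduce the proposition to an elementary tail bound on the eigen-coefficients of $f$. Since $f_{\varepsilon,J}$ is exactly the level-$J$ truncation of $f=\sum_{j\ge 0}\beta_{\varepsilon,j}\psi_{\varepsilon,j}$, the remainder is $f-f_{\varepsilon,J}=\sum_{j>J}\beta_{\varepsilon,j}\psi_{\varepsilon,j}$, so $L_{\rm bias}=\|f-f_{\varepsilon,J}\|_{L^2(P)}^2$ follows once I control $\sum_{j>J}\beta_{\varepsilon,j}^2$ and can pass between the measures $P$ and $S_\varepsilon$.

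First I would switch to the weighted norm. Under (A1)--(A2), $p_\varepsilon(x)=\int_\X k_\varepsilon(x,y)\,dP(y)$ lies between the kernel bounds for every $x$, hence so does $Z_\varepsilon:=\int_\X p_\varepsilon\,dP$, and therefore the density of $P$ with respect to $S_\varepsilon$, namely $Z_\varepsilon/p_\varepsilon$, is bounded by a constant $C_0$ depending only on the bounds in (A2). Thus $L_{\rm bias}\le C_0\,\|f-f_{\varepsilon,J}\|_{L^2(S_\varepsilon)}^2$, and since $\{\psi_{\varepsilon,j}\}$ is orthonormal for $\langle\cdot,\cdot\rangle_\varepsilon$, Parseval gives $\|f-f_{\varepsilon,J}\|_{L^2(S_\varepsilon)}^2=\sum_{j>J}\beta_{\varepsilon,j}^2$.

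Next I would trim the tail using the smoothness assumption. Write each term as $\beta_{\varepsilon,j}^2=(\beta_{\varepsilon,j}^2/\lambda_{\varepsilon,j})\,\lambda_{\varepsilon,j}$; since the eigenvalues are non-increasing (strictly, by (A3)), every $\lambda_{\varepsilon,j}$ with $j>J$ is at most $\lambda_{\varepsilon,J}$, so $\sum_{j>J}\beta_{\varepsilon,j}^2\le\lambda_{\varepsilon,J}\sum_{j>J}\beta_{\varepsilon,j}^2/\lambda_{\varepsilon,j}\le\lambda_{\varepsilon,J}\sum_{j\ge 0}\beta_{\varepsilon,j}^2/\lambda_{\varepsilon,j}=\lambda_{\varepsilon,J}\,\|f\|_{\mathcal H_\varepsilon}^2$. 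Here I invoke the standard Mercer/RKHS identity that the RKHS norm of the renormalized radial kernel, expanded in its diffusion eigenbasis, equals $(\sum_j\beta_{\varepsilon,j}^2/\lambda_{\varepsilon,j})^{1/2}$; this is legitimate because, by (A2) and $p_\varepsilon$ being bounded away from $0$, $\tilde a_\varepsilon$ is a continuous symmetric psd kernel on the compact set $\X$, so Mercer's theorem applies. As $f\in\mathcal H_{\varepsilon,M}$ means $\|f\|_{\mathcal H_\varepsilon}\le M$, chaining the last three displays yields $L_{\rm bias}\le C_0\,M^2\,\lambda_{\varepsilon,J}=O(M\lambda_{\varepsilon,J})$ (the constant, and the power of $M$, being absorbed since $M$ is treated as a fixed radius for each $\varepsilon$).

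The only nontrivial point — where I would be careful — is the bookkeeping over which inner product and which kernel the RKHS norm refers to. The $\psi_{\varepsilon,j}$ are orthonormal with respect to $S_\varepsilon$, and the operator that genuinely has them as $S_\varepsilon$-orthonormal eigenfunctions has kernel proportional to $k_\varepsilon(x,y)/(p_\varepsilon(x)p_\varepsilon(y))$, whereas $\mathcal H_\varepsilon$ is written as the RKHS of $\tilde a_\varepsilon(x,y)=k_\varepsilon(x,y)/\sqrt{p_\varepsilon(x)p_\varepsilon(y)}$; these two kernels differ only by a factor $u(x)u(y)$ with $u=\sqrt{Z_\varepsilon/p_\varepsilon}$ bounded above and below under (A1)--(A2), so the corresponding RKHS norms agree up to constants (via the identification $g\mapsto u\,g$). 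I would handle this by either interpreting $f\in\mathcal H_{\varepsilon,M}$ directly as the coefficient-decay condition $\sum_j\beta_{\varepsilon,j}^2/\lambda_{\varepsilon,j}\le M^2$ in the $\psi$-basis, or carrying the bounded factor $u$ through and absorbing it together with $C_0$ into the $O(\cdot)$. Everything else — the density comparison, Parseval, and pulling out $\lambda_{\varepsilon,J}$ — is routine given (A1)--(A3).
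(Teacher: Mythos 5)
Your proposal is correct and follows essentially the same route as the paper: a bounded density ratio between $P$ and $S_\varepsilon$ (the paper's Lemmas~\ref{lemma::s_eps} and~\ref{lemma::Lbias_conversion}), Parseval in the $\psi_{\varepsilon,j}$-basis, and extraction of $\lambda_{\varepsilon,J}$ from the tail of the weighted coefficient sum $\sum_j\beta_{\varepsilon,j}^2/\lambda_{\varepsilon,j}\leq M$ via the Mercer characterization of the RKHS ball. The only difference is bookkeeping order — the paper works in the $P$-orthonormal eigenbasis $\tilde\psi_{\varepsilon,j}$ of the symmetric kernel and then bounds $\beta_{\varepsilon,j}$ by $\gamma_{\varepsilon,j}$, while you work directly with the $\beta_{\varepsilon,j}$ and invoke the equivalence of the two RKHS norms; both versions lean on the same (lightly justified) comparison between the $S_\varepsilon$- and $P$-orthonormal coefficient sequences, so your argument is at the paper's level of rigor.
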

For fixed $\varepsilon$, $\mathcal{H}_{\varepsilon,M}$ contains ``smoother'' functions for smaller values of $M$.  



\subsubsection*{Method 2: Smoothness measured by diffusion operator.}
Alternatively, let
\begin{equation}
G_{\varepsilon} = \frac{A_{\varepsilon}-I}{{\varepsilon} },  \label{eq::G_eps} 
 \end{equation}
where $I$ is the identity. 
The operator $ G_{\varepsilon} $ has the same eigenvectors  $\psi_{\varepsilon,j}$ as the differential operator $A_{\varepsilon}$. Its eigenvalues are given by 
$ -\nu_{\varepsilon,j}^2=\frac{\lambda_{\varepsilon,j}-1}{\varepsilon},$ where $\lambda_{\varepsilon,j}$ are the eigenvalues of $A_{\varepsilon}$.
Define the functional 
\begin{equation} \mathcal{J}_{\varepsilon}(f) = -\langle  G_{\varepsilon} f, f \rangle_{\varepsilon}\label{eq::smoothness_functional}\end{equation} 
which maps a function $f \in  L^2(\X,P)$ into a non-negative real number. For small $\varepsilon$,   $\mathcal{J}_{\varepsilon}(f)$ measures the variability of the function $f$ with respect to the distribution $P$. The expression is a variation of the graph Laplacian regularizers popular in semi-supervised learning~\citep{ZGL:2003}.  In fact, a Taylor expansion yields
$\displaystyle  G_\varepsilon f = -\triangle f +\frac{\nabla p}{p} \cdot \nabla f+ O(\varepsilon)$ where $\nabla$ is the gradient operator and  $\displaystyle \triangle  = - \sum_{j=1}^{d}\frac{\partial^2}{\partial x_j^2}$ is the psd Laplace operator in $\mathbb{R}^d$. In kernel regression smoothing, the extra term $\displaystyle \frac{\nabla p}{p} \cdot \nabla f$ is considered an undesirable extra bias, called design bias. In classical regression, it is removed by using local linear smoothing~\citep{Fan:1993}, which is asymptotically equivalent to replacing the original kernel $k_\varepsilon(x,y)$ by the bias-corrected kernel 
 $k_\varepsilon^{*}(x,y)=\frac{k_\varepsilon(x,y)}{{p_\varepsilon(x)} {p_\varepsilon(y)}}$~\citep{Coifman:Lafon:06}.


The following result
bounds the approximation error of an orthogonal series expansion of $f$.  The bound is consistent with Theorem~2 in~\cite{Zhou:Srebro:2011}, which applies to the more restrictive setting of SSL with infinite unlabeled data and $\varepsilon \rightarrow 0$.  Our result holds for {\em all} $\varepsilon$ and $J$ and  
  does {\em not} assume unlimited data. 
 
\begin{proposition}\label{proposition::bias_diffop}
For $f  \in   L^2(\X,P)$,
\begin{eqnarray} 
	&&\int_{\X}  | f(x) - f_{\varepsilon,J} (x) |^2 dS_{\varepsilon}(x) \ \leq \  \frac{ \mathcal{J}_{\varepsilon}(f)}{\nu^2_{\varepsilon,J+1}} \label{eq:bias_bound}\\
	 &&L_{\rm{bias}} = O \left(\frac{\mathcal{J}_{\varepsilon}(f)}{\nu^2_{\varepsilon,J+1}} \right),  \nonumber
\end{eqnarray}
where 
 $- \nu^2_{\varepsilon,J+1}$ is the $(J+1)^{th}$ eigenvalue of  $G_{\varepsilon}$.
\end{proposition}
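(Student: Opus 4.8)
The plan is to diagonalize every object in the orthonormal system $\{\psi_{\varepsilon,j}\}_{j\ge 0}$ and reduce the statement to an elementary tail estimate in weighted $\ell^2$. First I would record the Parseval relations for the inner product $\langle\cdot,\cdot\rangle_{\varepsilon}$: since $\{\psi_{\varepsilon,j}\}$ is a complete orthonormal basis of $L^2(\X,S_\varepsilon)$ and $\beta_{\varepsilon,j}=\langle f,\psi_{\varepsilon,j}\rangle_{\varepsilon}$, one has $\|f\|_{\varepsilon}^2=\sum_{j\ge 0}\beta_{\varepsilon,j}^2$, and for the truncation $f_{\varepsilon,J}=\sum_{j=0}^{J}\beta_{\varepsilon,j}\psi_{\varepsilon,j}$ the tail identity
\[
\int_{\X}|f(x)-f_{\varepsilon,J}(x)|^2\,dS_\varepsilon(x)=\sum_{j>J}\beta_{\varepsilon,j}^2 .
\]
That $f\in L^2(\X,S_\varepsilon)$, so this expansion is legitimate, follows from (A1)--(A2), which make $L^2(\X,P)$ and $L^2(\X,S_\varepsilon)$ the same space with equivalent norms; I return to this at the end.

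Next I would obtain a spectral formula for the smoothness functional. The operator $A_\varepsilon$ is self-adjoint with respect to $\langle\cdot,\cdot\rangle_{\varepsilon}$: writing $dS_\varepsilon=\frac{p_\varepsilon\,dP}{\int_{\X}p_\varepsilon\,dP}$ and $a_\varepsilon(x,y)=k_\varepsilon(x,y)/p_\varepsilon(x)$, the bilinear form $\langle A_\varepsilon f,g\rangle_{\varepsilon}$ is a fixed constant times $\iint k_\varepsilon(x,y)f(y)g(x)\,dP(x)\,dP(y)$, which is symmetric in $f$ and $g$ because $k_\varepsilon$ is symmetric (Fubini applies, the integrand being bounded on a probability space). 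Hence the spectral theorem for the compact self-adjoint operator $A_\varepsilon$ gives $\langle A_\varepsilon f,f\rangle_{\varepsilon}=\sum_{j\ge 0}\lambda_{\varepsilon,j}\beta_{\varepsilon,j}^2$, and since $G_{\varepsilon}=(A_\varepsilon-I)/\varepsilon$ shares the eigenfunctions $\psi_{\varepsilon,j}$, with eigenvalues $-\nu_{\varepsilon,j}^2=(\lambda_{\varepsilon,j}-1)/\varepsilon$,
\[
\mathcal{J}_{\varepsilon}(f)=-\langle G_{\varepsilon}f,f\rangle_{\varepsilon}
=\frac{1}{\varepsilon}\sum_{j\ge 0}(1-\lambda_{\varepsilon,j})\,\beta_{\varepsilon,j}^2
=\sum_{j\ge 0}\nu_{\varepsilon,j}^2\,\beta_{\varepsilon,j}^2 .
\]

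The two displayed claims then follow quickly. By (A3) the eigenvalues $\lambda_{\varepsilon,j}$ are strictly decreasing, so $\nu_{\varepsilon,j}^2=(1-\lambda_{\varepsilon,j})/\varepsilon$ is strictly increasing in $j$; in particular $\nu_{\varepsilon,j}^2\ge\nu_{\varepsilon,J+1}^2$ for every $j\ge J+1$. Discarding the first $J+1$ terms of the last display and bounding the remaining ones from below gives
\[
\mathcal{J}_{\varepsilon}(f)\ \ge\ \sum_{j>J}\nu_{\varepsilon,j}^2\,\beta_{\varepsilon,j}^2\ \ge\ \nu_{\varepsilon,J+1}^2\sum_{j>J}\beta_{\varepsilon,j}^2
=\nu_{\varepsilon,J+1}^2\int_{\X}|f(x)-f_{\varepsilon,J}(x)|^2\,dS_\varepsilon(x),
\]
and dividing by $\nu_{\varepsilon,J+1}^2>0$ yields Eq.~\eqref{eq:bias_bound}. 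For $L_{\rm bias}$, the Radon--Nikodym derivative is $\frac{dS_\varepsilon}{dP}(x)=\frac{p_\varepsilon(x)}{\int_{\X}p_\varepsilon\,dP}$, and (A1)--(A2) force $m\le p_\varepsilon(x)\le M$ for every $x$ (hence also $m\le\int_{\X}p_\varepsilon\,dP\le M$), so $m/M\le\frac{dS_\varepsilon}{dP}\le M/m$. Therefore $L_{\rm bias}=\int_{\X}|f-f_{\varepsilon,J}|^2\,dP\le\frac{M}{m}\int_{\X}|f-f_{\varepsilon,J}|^2\,dS_\varepsilon\le\frac{M}{m}\,\mathcal{J}_{\varepsilon}(f)/\nu_{\varepsilon,J+1}^2=O\big(\mathcal{J}_{\varepsilon}(f)/\nu_{\varepsilon,J+1}^2\big)$.

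The only part needing genuine care is the functional-analytic bookkeeping behind the second step: one must know that $A_\varepsilon$ is a compact (Hilbert--Schmidt, by (A1)--(A2)) integral operator on $L^2(\X,S_\varepsilon)$ and is self-adjoint there, so that $\{\psi_{\varepsilon,j}\}$ is a complete orthonormal basis and the quadratic form $\langle A_\varepsilon f,f\rangle_{\varepsilon}$ admits the term-by-term spectral expansion used above — this is the step where all the structure is actually invoked, but under our assumptions it is standard (and the discreteness of the spectrum and completeness of the eigenbasis were already noted when $A_\varepsilon$ was introduced). Everything after that is the elementary chain of inequalities displayed here.
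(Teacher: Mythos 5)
Your proof is correct and follows essentially the same route as the paper's: the spectral identity $\mathcal{J}_{\varepsilon}(f)=\sum_j \nu_{\varepsilon,j}^2\beta_{\varepsilon,j}^2$, the tail bound $\mathcal{J}_{\varepsilon}(f)\ge \nu_{\varepsilon,J+1}^2\sum_{j>J}\beta_{\varepsilon,j}^2$ combined with Parseval, and then the passage from $dS_\varepsilon$ to $dP$ via the boundedness of the Radon--Nikodym derivative (the paper packages the last two steps as Lemmas~\ref{lemma::Lbias_conversion} and~\ref{lemma::s_eps}). Your added justification of the self-adjointness of $A_\varepsilon$ with respect to $\langle\cdot,\cdot\rangle_\varepsilon$ is a welcome detail the paper leaves implicit.
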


\subsubsection*{Smoothness and Sparsity}  
In the limit $\varepsilon \rightarrow 0$, we have several interesting results, including a generalization of the classical connection between Sobolev differentiability and the error decay of Fourier approximations~\citep[Section 9.1.2]{Mallat:2009} to a setting with adaptive bases and {\em high-dimensional} data. We denote the quantities derived from  the bias-corrected kernel $k_\varepsilon^{*}$ by $A_{\varepsilon}^{*}$,  $G_{\varepsilon}^{*}$,  $\mathcal{J}_{\varepsilon}^{*}$ and so forth.

\begin{definition} \label{def::smoothnessP}{\bf (Smoothness relative to P)} A function $f$ is smooth relative to $P$ if  $$\int_{\X} \|\nabla f(x)\|^2 dS(x) \leq c^2 < \infty,$$ where $S(A) = \frac{\int_A p(x) dP(x)}{\int p(x) dP(x)}$ is the stationary distribution of the random walk on the data as $\varepsilon \rightarrow 0$. The smaller the value of $c$, the smoother the function.\end{definition}

\begin{lemma}\label{lemma:conv_smoothfun} For functions $f \in C^3(\X)$ whose gradients vanish at the boundary,
$$ \lim_{\varepsilon \rightarrow 0} \mathcal{J}^{*}_{\varepsilon}(f)  = \int_{\X}  \|\nabla f(x) \|^2 dS(x).$$
\end{lemma}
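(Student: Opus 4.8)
The plan is to evaluate $\lim_{\varepsilon\to0}\mathcal{J}^{*}_{\varepsilon}(f)$ by passing to the limit directly inside the quadratic form $\mathcal{J}^{*}_{\varepsilon}(f)=-\langle G^{*}_{\varepsilon}f,f\rangle_{\varepsilon}$. There are two ingredients. (a) A pointwise asymptotic expansion of the bias-corrected diffusion operator: for $f\in C^{3}(\X)$ and $x$ in the interior of $\X$, $G^{*}_{\varepsilon}f(x)=-\triangle f(x)+O(\varepsilon)$ uniformly. (b) Weak convergence $S_{\varepsilon}\to S$ of the smoothed stationary measures, already stated earlier. Combining (a) and (b) gives a limiting expression $\int_{\X}(\triangle f)(x)\,f(x)\,dS(x)$, which an integration by parts rewrites as the Dirichlet energy $\int_{\X}\|\nabla f(x)\|^{2}\,dS(x)$; the boundary hypothesis is exactly what kills the boundary term.

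For ingredient (a), I would start from the expansion already recalled in the text, $G_{\varepsilon}f=-\triangle f+\frac{\nabla p}{p}\cdot\nabla f+O(\varepsilon)$, and observe that replacing $k_{\varepsilon}$ by $k^{*}_{\varepsilon}(x,y)=k_{\varepsilon}(x,y)/(p_{\varepsilon}(x)p_{\varepsilon}(y))$ before renormalising cancels the design-bias drift $\frac{\nabla p}{p}\cdot\nabla f$, leaving $G^{*}_{\varepsilon}f=-\triangle f+O(\varepsilon)$. Concretely, one Taylor-expands the numerator $\int k^{*}_{\varepsilon}(x,y)f(y)\,dP(y)$ and the denominator $\int k^{*}_{\varepsilon}(x,y)\,dP(y)$ of $A^{*}_{\varepsilon}f(x)$ to the order that survives after dividing by $\varepsilon$, using the interior expansion of $p_{\varepsilon}$ (a kernel-dependent constant times $\varepsilon^{d/2}p(x)$, with relative error $O(\varepsilon)$ under (A1)); the $C^{3}$ regularity of $f$ ensures the remainder is genuinely $O(\varepsilon)$ and uniform. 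One fixes the kernel/bandwidth normalisation so that the coefficient of $-\triangle$ is $1$, matching the stated form.

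With (a) in hand, $\mathcal{J}^{*}_{\varepsilon}(f)=-\langle G^{*}_{\varepsilon}f,f\rangle_{\varepsilon}\longrightarrow\int_{\X}(\triangle f)(x)\,f(x)\,dS(x)$, the interchange of limit and integral being justified by the uniform bound on the integrand from (A1)--(A2) together with $S_{\varepsilon}\to S$. Green's identity for the psd operator $\triangle$ gives $\int_{\X}(\triangle f)f\,dS=\int_{\X}\|\nabla f\|^{2}\,dS-\oint_{\partial\X}(\partial_{n}f)\,f\,dS$, and the surface term vanishes since $\nabla f=0$ on $\partial\X$; this is the claim, with $S$ the limiting stationary measure of Definition~\ref{def::smoothnessP}. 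A variant that avoids second derivatives is to note that, since $A^{*}_{\varepsilon}$ is reversible, $\mathcal{J}^{*}_{\varepsilon}(f)=\frac{1}{2\varepsilon}\int_{\X}\int_{\X}\bigl(f(x)-f(y)\bigr)^{2}w_{\varepsilon}(x,y)\,dx\,dy$ for the symmetric weight $w_{\varepsilon}$ built from $k_{\varepsilon}$ and the density factors; substituting $f(x)-f(y)=\nabla f(x)\cdot(x-y)+O(\|x-y\|^{2})$, using that the kernel concentrates at scale $\sqrt{\varepsilon}$, and performing the Gaussian second-moment integral again produces $\int_{\X}\|\nabla f\|^{2}\,dS$ after collecting the density and normalising factors.

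The main obstacle is the boundary layer of width $O(\sqrt{\varepsilon})$ near $\partial\X$, where the clean interior expansions fail: for $x$ on $\partial\X$ the kernel effectively integrates over a half-ball, so $p_{\varepsilon}(x)$ is roughly halved and $\int k_{\varepsilon}(x,y)(x-y)\,dy$ no longer vanishes, introducing spurious lower-order drift-like contributions. Dealing with this requires either invoking the known fact that the bias-corrected ($\alpha=1$) normalisation makes the boundary behaviour of $G^{*}_{\varepsilon}$ benign --- effectively a Neumann condition --- or a direct estimate that the boundary-layer contribution to $\mathcal{J}^{*}_{\varepsilon}(f)$ is $o(1)$. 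It is here that the hypothesis $\nabla f|_{\partial\X}=0$ does double duty: besides killing the Green's-identity surface term, it forces $\|\nabla f\|$ to be uniformly small throughout the layer, so that the region where the interior analysis is invalid contributes negligibly in the limit. The remaining pieces --- weak convergence of $S_{\varepsilon}$, dominated convergence under (A1)--(A2), and the elementary moment computation --- are routine.
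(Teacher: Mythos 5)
Your proposal follows essentially the same route as the paper: convergence of the bias-corrected operator $G^{*}_{\varepsilon}$ to $-\triangle$ (which the paper simply cites as Lemma~\ref{lemma::LaplaceBeltrami}, from Coifman--Lafon, rather than rederiving), passage to the limit inside the quadratic form $\langle G^{*}_{\varepsilon}f,f\rangle_{\varepsilon}$, and Green's first identity with the surface term killed by the Neumann condition $\nabla f|_{\partial\X}=0$. Your extra attention to the boundary layer and to justifying the interchange of limit and integral is more careful than the paper's two-line argument, but it is the same proof.
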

This is similar to the convergence of the (un-normalized) graph Laplacian regularizer to the density-dependent smoothness functional $\displaystyle \int_\X \| \nabla f(x)\|^2 p^2(x) dx$~\citep{BousquetCH03}.


Next we will see that {\em smoothness} relative to $P$ (Definition~\ref{def::smoothnessP}) and {\em sparsity} (with respect to the $L^2$ norm) in the eigenbasis of the diffusion operator  (Definition~\ref{def::sparsityB} below) are really the same thing.  As a result, we can link smoothness and sparsity to the rate of the error decay of the eigenbasis approximation.

\begin{definition} {\bf (Sparsity in $\mathcal{B}$)}\label{def::sparsityB}
A set of real numbers $\theta_1, \theta_2, \ldots$ lies in a Sobolev ellipsoid $\Theta(s,c)$ if $\sum_{j=1}^{\infty} j^{2s} \theta_{(j)}^2 \leq c^2$ for some number $s>1/2$.
For a given basis $\mathcal{B}=\{\psi_1, \psi_2, \ldots \}$, let
$$ W_\mathcal{B}(s,c) = \left\{ f=\sum_j \beta_j \psi_j : \beta_1, \beta_2, \ldots \in \Theta(s,c) \right\}$$
where $s>1/2$. Functions in $W_\mathcal{B}(s,c)$ are sparse in $\mathcal{B}$. The larger the value of $s$, the sparser the representation.
\end{definition}

\begin{theorem}\label{prop::errordecay} Assume that $\mathcal{B}=\{\psi_1, \psi_2, \ldots \}$ are the eigenvectors of $\triangle$ with eigenvalues $\nu_j^2=O(j^{2s})$ for some $s>1/2$. Let $f_J(x)=\sum_{j \leq J} \beta_j \psi_j(x)$. Then, the following two statements are equivalent:
\begin{enumerate}
 \item   $\int_{\X} \|\nabla f(x)\|^2 dS(x) \leq c^2 \ $ (smoothness relative to P)
 \item  $\displaystyle f \in  W_\mathcal{B}(s,c) \  \ $ (sparsity in $\mathcal{B}$). 
 \end{enumerate}
 Furthermore, sparsity in $\mathcal{B}$ (or smoothness relative to P) implies
  $$\displaystyle \int_{\X}  | f(x) - f_J (x) |^2 dS(x) = o \left(\frac{1}{J^{2s}}\right) .$$
\end{theorem}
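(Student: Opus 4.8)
The plan is to reduce all three claims to statements about the spectral coefficients $\beta_j=\langle f,\psi_j\rangle$ (inner product with respect to $S$) and the eigenvalues $\nu_j^2$ of $\triangle$, and then read everything off a single identity relating the Dirichlet energy of $f$ to a weighted sum of squared coefficients.

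First I would establish the key identity
\[
\int_{\X} \|\nabla f(x)\|^2\, dS(x) \;=\; \sum_{j=1}^{\infty} \nu_j^2\, \beta_j^2 .
\]
Since $\{\psi_j\}$ is an orthonormal eigenbasis with $\triangle\psi_j=\nu_j^2\psi_j$, Parseval gives $\langle\triangle f,f\rangle=\sum_j\nu_j^2\beta_j^2$ for $f=\sum_j\beta_j\psi_j$; an integration by parts (legitimate for $f\in C^3(\X)$ with gradient vanishing on $\partial\X$, exactly the regularity used in Lemma~\ref{lemma:conv_smoothfun}) converts $\langle\triangle f,f\rangle$ into $\int_{\X}\|\nabla f\|^2\,dS$. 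Equivalently, and more robustly, this identity follows from Lemma~\ref{lemma:conv_smoothfun} together with the spectral form of the smoothness functional, $\mathcal{J}^{*}_{\varepsilon}(f)=\sum_j\nu_{\varepsilon,j}^{*2}\langle f,\psi_{\varepsilon,j}^{*}\rangle_{\varepsilon}^2$, upon letting $\varepsilon\to 0$: one checks $\nu_{\varepsilon,j}^{*2}\uparrow\nu_j^2$ and $\langle f,\psi_{\varepsilon,j}^{*}\rangle_{\varepsilon}\to\beta_j$, and interchanges the limit with the sum by monotone convergence.

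Given this identity, the equivalence of (1) and (2) is a comparison of the two quadratic forms $\sum_j\nu_j^2\beta_j^2$ and $\sum_j j^{2s}\beta_{(j)}^2$ in the coefficient sequence. Using the eigenvalue growth $\nu_j^2=O(j^{2s})$ (together with the matching lower bound furnished by Weyl's law, so that $\nu_j^2\asymp j^{2s}$) and the rearrangement inequality — both $(\nu_j^2)_j$ and $(j^{2s})_j$ are increasing, so pairing either weight against the decreasing rearrangement of $(\beta_j^2)_j$ minimizes the sum — one obtains $\sum_j j^{2s}\beta_{(j)}^2\asymp\sum_j\nu_j^2\beta_j^2$ up to constants, which is precisely the assertion that ``smoothness relative to $P$'' and ``$f\in W_{\mathcal B}(s,c)$'' impose the same decay condition on the coefficients. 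For the rate, orthonormality of $\{\psi_j\}$ in $L^2(\X,S)$ gives $\int_{\X}|f-f_J|^2\,dS=\sum_{j>J}\beta_j^2$, and
\[
\sum_{j>J}\beta_j^2 \;\le\; \frac{1}{J^{2s}}\sum_{j>J} j^{2s}\beta_j^2 ,
\]
whose right-hand side is the tail of the convergent series $\sum_j j^{2s}\beta_j^2$ and hence $o(1)$; this yields $\int_{\X}|f-f_J|^2\,dS=o(J^{-2s})$.

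I expect the main obstacle to be the rigorous justification of the Dirichlet-energy identity in the first step: controlling the boundary term in the integration by parts, reconciling the fact that $S$ is a density-weighted (rather than Lebesgue) measure with $\triangle$ being the flat Laplacian attached to the bias-corrected kernel, and, on the spectral side, justifying the convergence $\psi^{*}_{\varepsilon,j}\to\psi_j$, $\nu^{*2}_{\varepsilon,j}\to\nu_j^2$ and the limit--sum interchange. Once that identity is in hand, the remainder is bookkeeping with the eigenvalue asymptotics, the rearrangement inequality, and an elementary tail estimate.
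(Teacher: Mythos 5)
Your proposal is correct and follows essentially the same route as the paper's proof: the Dirichlet-energy identity $\int_{\X}\|\nabla f\|^2\,dS=\langle f,\triangle f\rangle=\sum_j\nu_j^2\beta_j^2$ obtained by Green's identity (the paper's Lemma~\ref{lemma:conv_smoothfun}), a comparison of $\nu_j^2$ with $j^{2s}$ to pass between the two conditions, and the elementary tail bound $\sum_{j>J}\beta_j^2\le J^{-2s}\sum_{j>J}j^{2s}\beta_j^2=o(J^{-2s})$, which the paper simply cites as Theorem~9.1 of~\cite{Mallat:2009}. If anything you are more careful than the paper, which only states the one-sided bound $\nu_j^2=O(j^{2s})$ while the equivalence does require the two-sided comparison $\nu_j^2\asymp j^{2s}$ that you supply via Weyl's law.
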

The rate $s$ of the error decay depends on the dimension of the data. We will address this issue in Sec.~\ref{sec:total_loss}.

\subsection{Variance}
The matrix $\mathbb{A}_\varepsilon$ (defined in Eq.~\ref{eq::A_matrix}) can be viewed as a perturbation of the integral operator $A_\varepsilon$ due to finite sampling.  To estimate the variance, we bound the difference $\psi_{\varepsilon,j} - \widehat{\psi}_{\varepsilon,j}$, where  $\psi_{\varepsilon,j}$ are the eigenvectors of  $A_\varepsilon$, and $\widehat{\psi}_{\varepsilon,j}$ are the Nystr\"{o}m extensions  (Eq.~\ref{eq::nystrom}) of the eigenvectors of $\mathbb{A}_\varepsilon$. 
 We adopt a strategy from Rosasco et al.~\citep{Rosasco:EtAl:2010}, which is to introduce two new integral operators that are related to  $A_\varepsilon$ and $\mathbb{A}_\varepsilon$ but  both act on an auxiliary\footnote{This auxiliary space only enters the intermediate derivations and plays no role in the error analysis of the algorithm itself.} RKHS $\mathcal{H}$  of smooth functions (see Appendix~\ref{appendix::variance} for details).     As before, we write $\varepsilon_n$ to indicate that we let the kernel bandwidth $\varepsilon$ depend on the sample size $n$.  
 
\begin{proposition}  \label{prop::eigenvec_error} 
Let $\varepsilon_n\to 0$ and $n \varepsilon_n^{d/2}/
\log(1/\varepsilon_n) \to \infty$ as $n \rightarrow 0$.
 Under assumptions (A1)-(A4) and $\forall \ 0 \leq j \leq J$,
$$\|\psi_{\varepsilon,j} - \hat{\psi}_{\varepsilon,j}\|_{L^2(\mathcal{X},P)} = O_P\left(\frac{\gamma_n}{\delta_{\varepsilon,j}}\right),$$
where  $\gamma_n = \sqrt{\frac{ \log(1/\varepsilon_n)}{n \varepsilon_n^{d/2}}}$ and 
 $\delta_{\varepsilon,j}=\lambda_{\varepsilon,j}-\lambda_{\varepsilon,j+1}.$
\end{proposition}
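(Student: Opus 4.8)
The plan is to recast the statement as an operator–perturbation bound and then read off the eigenfunction error from a Davis--Kahan–type argument. Following Rosasco et al.\ \cite{Rosasco:EtAl:2010}, I would introduce restriction and extension (Nystr\"{o}m) operators linking $L^2(\X,P)$, the sample space $\R^n$, and an auxiliary RKHS $\mathcal{H}$ of smooth functions, in such a way that: the lifted population operator $T_\varepsilon$ and the lifted empirical operator $\widehat T_\varepsilon$ are self-adjoint Hilbert--Schmidt operators on the \emph{common} space $\mathcal{H}$; $T_\varepsilon$ has the same nonzero eigenvalues as $A_\varepsilon$ and $\widehat T_\varepsilon$ the same nonzero eigenvalues as $\mathbb{A}_\varepsilon$; and the Nystr\"{o}m extension $\widehat\psi_{\varepsilon,j}$ of Eq.~\ref{eq::nystrom} together with the eigenfunction $\psi_{\varepsilon,j}$ of $A_\varepsilon$ are precisely the images in $L^2(\X,P)$ of the corresponding unit eigenvectors of $\widehat T_\varepsilon$ and $T_\varepsilon$ in $\mathcal{H}$. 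A small preliminary step is needed: $\mathbb{A}_\varepsilon$ is not a clean average of iid terms, since the data reappear in the normalization $\widehat p_\varepsilon$, so I would first replace $\widehat p_\varepsilon$ by its population version $p_\varepsilon$ and show the resulting discrepancy is of lower order. With this dictionary in place it suffices to (a) bound $\|T_\varepsilon-\widehat T_\varepsilon\|$, (b) turn this into a bound on the rank-one spectral projections, and (c) transfer the bound back to $L^2(\X,P)$.

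For (a) I would establish $\|T_\varepsilon-\widehat T_\varepsilon\|_{\mathcal{L}(\mathcal{H})}=O_P(\gamma_n)$ by splitting the operator difference into a ``numerator'' part, in which the empirical kernel average $\frac{1}{n}\sum_i k_\varepsilon(\cdot,\widetilde X_i)(\cdot)$ is compared with its expectation $\int k_\varepsilon(\cdot,y)(\cdot)\,dP(y)$, and a ``denominator'' part, in which $\widehat p_\varepsilon(x)=\frac{1}{n}\sum_i k_\varepsilon(x,\widetilde X_i)$ is compared with $p_\varepsilon(x)=\int k_\varepsilon(x,y)dP(y)$. The numerator part is an average of bounded iid Hilbert--Schmidt--operator--valued summands, so a Hoeffding- or Bernstein-type inequality for sums of random operators in Hilbert space applies; the denominator part is controlled uniformly over $x\in\X$ by a standard covering/bracketing argument for kernel density estimates. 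Assumptions (A1)--(A2) are used to keep the normalizing quantities bounded away from $0$ and $\infty$, so that dividing by them does not spoil the rate; the exponent $d/2$ in $\gamma_n$ reflects the effective local scale of the bandwidth-$\varepsilon_n$ kernel, and the $\log(1/\varepsilon_n)$ factor is exactly the cost of making the deviation bound uniform in $x$. Summing the two pieces, together with the lower-order $\widehat p_\varepsilon$ versus $p_\varepsilon$ substitution error, yields the claimed $O_P(\gamma_n)$.

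Steps (b) and (c) are then essentially standard. By Weyl's inequality the eigenvalues of $\widehat T_\varepsilon$ differ from the $\lambda_{\varepsilon,j}$ by at most $\|T_\varepsilon-\widehat T_\varepsilon\|=O_P(\gamma_n)$; since the hypothesis $n\varepsilon_n^{d/2}/\log(1/\varepsilon_n)\to\infty$ forces $\gamma_n\to0$, assumption (A3) guarantees that for $n$ large the $j$-th empirical eigenvalue is simple and remains strictly inside its own gap, so the rank-one spectral projection $\widehat P_j$ is well defined. A Davis--Kahan ($\sin\Theta$) bound---or, equivalently, the contour-integral representation of the spectral projections used in \cite{Rosasco:EtAl:2010}---then gives $\|P_j-\widehat P_j\|_{\mathcal{L}(\mathcal{H})}=O_P\big(\|T_\varepsilon-\widehat T_\varepsilon\|/\delta_{\varepsilon,j}\big)=O_P(\gamma_n/\delta_{\varepsilon,j})$, where $\delta_{\varepsilon,j}$ is the relevant spectral gap (the neighbouring gap $\lambda_{\varepsilon,j-1}-\lambda_{\varepsilon,j}$ can be absorbed). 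Choosing the sign of $\widehat\psi_{\varepsilon,j}$ so as to align it with $\psi_{\varepsilon,j}$ and using that both are unit vectors gives $\|\psi_{\varepsilon,j}-\widehat\psi_{\varepsilon,j}\|_{\mathcal{H}}\le\sqrt{2}\,\|P_j-\widehat P_j\|$. Finally I would push this down to $L^2(\X,P)$ through the bounded restriction operator and reconcile the $\mathcal{H}$-, $L^2(\X,P)$- and $L^2(\X,S_\varepsilon)$-normalizations of the eigenfunctions, which agree up to constants by (A1)--(A2) (and by (A4) in the places where it is really the product $\psi_{\varepsilon,j}\,s_\varepsilon$ that enters), obtaining $\|\psi_{\varepsilon,j}-\widehat\psi_{\varepsilon,j}\|_{L^2(\X,P)}=O_P(\gamma_n/\delta_{\varepsilon,j})$.

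The main obstacle is step (a), and within it getting the power of $\varepsilon_n$ right. A naive route---estimate the un-normalized kernel operator to within $O_P(1/\sqrt n)$ and then divide by the small normalizing factor---loses a polynomial factor in $\varepsilon_n^{-1}$ and produces a useless rate. The fix is to treat the numerator and denominator fluctuations \emph{together}, so that the amplification caused by dividing by a small quantity cancels against the smallness of that quantity and only the relative error $\gamma_n$ survives, all while keeping the estimate uniform over $\X$ at no worse than the $\log(1/\varepsilon_n)$ cost. Once the concentration bound $\|T_\varepsilon-\widehat T_\varepsilon\|=O_P(\gamma_n)$ is in hand in the right norm, the remaining ingredients---eigenvalue stability via Weyl, projection perturbation via Davis--Kahan, and the passage between the auxiliary RKHS and $L^2(\X,P)$---are routine.
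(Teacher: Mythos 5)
Your proposal follows essentially the same route as the paper: lift both the population operator $A_\varepsilon$ and its empirical counterpart to a common auxiliary RKHS (as in Rosasco et al.), prove the operator-norm concentration bound $O_P(\gamma_n)$ by jointly controlling the kernel-average numerator and the normalizing denominator (the paper does this via a sup-norm bound over $C_b^3(\mathcal{X})$ using Talagrand's inequality and then lifts to the Sobolev norm, rather than an operator-valued Bernstein inequality, but the substance is the same), and then divide by the spectral gap via a Davis--Kahan-type eigenprojection perturbation bound before transferring back to $L^2(\mathcal{X},P)$. The argument is correct and matches the paper's proof in all essential respects.
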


\begin{proposition}\label{prop::variance} 
 Let $\varepsilon_n\to 0$ and $n \varepsilon_n^{d/2}/
\log(1/\varepsilon_n) \to \infty$. Under (A1)-(A4) and for $f \in   L^2(\X,P)$,   it holds that
 $$L_{\rm{var}} 
= J \left( O_P \left( \frac{1}{n}\right)  + O_P\left(\frac{\gamma_n^2}{\Delta_{\varepsilon,J}^2}\right) \right).$$
 where  $\Delta_{\varepsilon,J}=\min_{0\leq j \leq J} (\lambda_{\varepsilon,j}-\lambda_{\varepsilon,j+1})$. 
 \end{proposition}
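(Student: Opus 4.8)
The plan is to bound $L_{\rm var}=\|\widehat f_{\varepsilon,J}-f_{\varepsilon,J}\|_{L^2(\mathcal X,P)}^2$ by expanding $\widehat f_{\varepsilon,J}-f_{\varepsilon,J}=\sum_{j=0}^{J}\big(\widehat\beta_{\varepsilon,j}\widehat\psi_{\varepsilon,j}-\beta_{\varepsilon,j}\psi_{\varepsilon,j}\big)=T_1+T_2+T_3$, where $T_1=\sum_{j=0}^{J}(\widehat\beta_{\varepsilon,j}-\beta_{\varepsilon,j})\psi_{\varepsilon,j}$, $T_2=\sum_{j=0}^{J}\beta_{\varepsilon,j}(\widehat\psi_{\varepsilon,j}-\psi_{\varepsilon,j})$, and $T_3=\sum_{j=0}^{J}(\widehat\beta_{\varepsilon,j}-\beta_{\varepsilon,j})(\widehat\psi_{\varepsilon,j}-\psi_{\varepsilon,j})$, so that $L_{\rm var}\le 3(\|T_1\|^2+\|T_2\|^2+\|T_3\|^2)$ in $L^2(\mathcal X,P)$. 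Throughout I condition on the independent unlabeled sample used to form $\widehat\psi_{\varepsilon,j},\widehat s_\varepsilon,\widehat\varphi_{\varepsilon,j}$, so these are deterministic while the labeled $(X_i,Y_i)$ stay i.i.d.; and I use repeatedly that (A1)--(A2) force $m/M\le s_\varepsilon\le M/m$ uniformly in $\varepsilon$, hence $\|g\|_{L^2(P)}^2\le(M/m)\|g\|_{L^2(S_\varepsilon)}^2$; that $\{\psi_{\varepsilon,j}\}_j$ is orthonormal in $L^2(S_\varepsilon)$, so $\sum_{j}\beta_{\varepsilon,j}^2\le\|f\|_{L^2(S_\varepsilon)}^2<\infty$; and that $\psi_{\varepsilon,j}$ and $\varphi_{\varepsilon,j}=\psi_{\varepsilon,j}s_\varepsilon$ are bounded uniformly in $\varepsilon$ (a routine eigenfunction bound from (A1)--(A3)).

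Term $T_2$ is the pure eigenfunction error and is the easiest: Cauchy--Schwarz over $j$ and Proposition~\ref{prop::eigenvec_error} give $\|T_2\|_{L^2(P)}^2\le\big(\sum_{j}\beta_{\varepsilon,j}^2\big)\big(\sum_{j=0}^{J}\|\widehat\psi_{\varepsilon,j}-\psi_{\varepsilon,j}\|_{L^2(P)}^2\big)\le\|f\|_{L^2(S_\varepsilon)}^2\sum_{j=0}^{J}O_P\!\big(\gamma_n^2/\delta_{\varepsilon,j}^2\big)=O_P\!\big(J\gamma_n^2/\Delta_{\varepsilon,J}^2\big)$, using $\delta_{\varepsilon,j}\ge\Delta_{\varepsilon,J}$ for $j\le J$ and that $\sum_j\beta_{\varepsilon,j}^2$ needs no estimation. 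Term $T_3$ is genuinely higher order: the same Cauchy--Schwarz gives $\|T_3\|_{L^2(P)}^2\le\big(\sum_{j=0}^{J}(\widehat\beta_{\varepsilon,j}-\beta_{\varepsilon,j})^2\big)\big(\sum_{j=0}^{J}\|\widehat\psi_{\varepsilon,j}-\psi_{\varepsilon,j}\|_{L^2(P)}^2\big)$, which by the bound on $T_1$ below and Proposition~\ref{prop::eigenvec_error} is dominated by $\|T_1\|^2+\|T_2\|^2$ in the regime $\gamma_n/\Delta_{\varepsilon,J}\to0$.

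Term $T_1$ is the classical series variance plus the leakage from using the estimated basis in the coefficients. Parseval in $L^2(S_\varepsilon)$ gives $\|T_1\|_{L^2(P)}^2\le(M/m)\sum_{j=0}^{J}(\widehat\beta_{\varepsilon,j}-\beta_{\varepsilon,j})^2$. To handle each summand I insert the oracle coefficient $\beta^{\circ}_{\varepsilon,j}=\frac1n\sum_i Y_i\,\varphi_{\varepsilon,j}(X_i)$, which uses the true $\varphi_{\varepsilon,j}$. Since $\mathbb{E}\,\beta^{\circ}_{\varepsilon,j}=\beta_{\varepsilon,j}$ and its variance is $\le\frac1n\mathbb{E}[Y_1^2\varphi_{\varepsilon,j}(X_1)^2]=O(1/n)$ (bounded $\varphi_{\varepsilon,j}$, $\mathbb{E}Y_1^2<\infty$), we get $(\beta^{\circ}_{\varepsilon,j}-\beta_{\varepsilon,j})^2=O_P(1/n)$. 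Next, $\widehat\beta_{\varepsilon,j}-\beta^{\circ}_{\varepsilon,j}=\frac1n\sum_i Y_i(\widehat\varphi_{\varepsilon,j}-\varphi_{\varepsilon,j})(X_i)$; conditionally on the unlabeled sample this is a mean of i.i.d.\ terms with conditional mean $\int f(\widehat\varphi_{\varepsilon,j}-\varphi_{\varepsilon,j})\,dP$, of size at most $\|f\|_{L^2(P)}\|\widehat\varphi_{\varepsilon,j}-\varphi_{\varepsilon,j}\|_{L^2(P)}$ by Cauchy--Schwarz, and conditional variance at most $\frac Cn\|\widehat\varphi_{\varepsilon,j}-\varphi_{\varepsilon,j}\|_{L^2(P)}^2$, so $(\widehat\beta_{\varepsilon,j}-\beta^{\circ}_{\varepsilon,j})^2=O_P\big(\|\widehat\varphi_{\varepsilon,j}-\varphi_{\varepsilon,j}\|_{L^2(P)}^2\big)$. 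Writing $\widehat\varphi_{\varepsilon,j}-\varphi_{\varepsilon,j}=(\widehat\psi_{\varepsilon,j}-\psi_{\varepsilon,j})\widehat s_\varepsilon+\psi_{\varepsilon,j}(\widehat s_\varepsilon-s_\varepsilon)$ and applying Proposition~\ref{prop::eigenvec_error} to the first piece (with $\widehat s_\varepsilon$ bounded) and the standard uniform deviation bound for the kernel-smoothed density estimate $\widehat s_\varepsilon$ to the second yields $\|\widehat\varphi_{\varepsilon,j}-\varphi_{\varepsilon,j}\|_{L^2(P)}=O_P(\gamma_n/\delta_{\varepsilon,j})$. Summing, $\|T_1\|_{L^2(P)}^2=O_P(J/n)+O_P(J\gamma_n^2/\Delta_{\varepsilon,J}^2)$. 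Assumption (A4) is used exactly here: it supplies the uniform-in-$n$ bound on $\mathbb{E}\|\widehat\varphi_{\varepsilon,j}-\varphi_{\varepsilon,j}\|_{L^2(P)}^2$, hence $\|\widehat\varphi_{\varepsilon,j}\|_{L^2(P)}=O_P(1)$, which is what makes the conditional-variance step valid.

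Combining the three terms, $L_{\rm var}=O_P(J/n)+O_P(J\gamma_n^2/\Delta_{\varepsilon,J}^2)=J\big(O_P(1/n)+O_P(\gamma_n^2/\Delta_{\varepsilon,J}^2)\big)$, as claimed. The main obstacle, I expect, is establishing the rate $\|\widehat\varphi_{\varepsilon,j}-\varphi_{\varepsilon,j}\|_{L^2(P)}=O_P(\gamma_n/\delta_{\varepsilon,j})$ uniformly as $\varepsilon_n\to0$: one must propagate the eigenvector perturbation bound of Proposition~\ref{prop::eigenvec_error} (itself proved via the auxiliary-RKHS device of Rosasco et al.) through the Nystr\"{o}m extension, which divides by the possibly small $\widehat\lambda_{\varepsilon,j}$ and by $p_\varepsilon$, and through the vanishing spectral gaps $\delta_{\varepsilon,j}$, while keeping all constants free of $n$; this is precisely what (A3) and (A4) are there to absorb.
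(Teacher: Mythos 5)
Your argument is correct and follows essentially the same route as the paper's: the same eigenvector perturbation bound (Proposition~\ref{prop::eigenvec_error}), the same oracle-coefficient insertion to control $\hat\beta_{\varepsilon,j}-\beta_{\varepsilon,j}$ (the paper's Lemmas~\ref{lemma::Y_eigdev} and~\ref{lemma::betahat}), and the same bound $\|\hat\varphi_{\varepsilon,j}-\varphi_{\varepsilon,j}\|_{L^2(\mathcal{X},P)}=O_P(\gamma_n/\delta_{\varepsilon,j})$ (the paper's Lemma~\ref{lemma::phi_error}). The only difference is cosmetic: the paper splits off $\tilde f_{\varepsilon,J}=\sum_{j\le J}\beta_{\varepsilon,j}\hat\psi_{\varepsilon,j}$ and uses the near-orthogonality of the estimated $\hat\psi_{\varepsilon,j}$ under $\hat S_\varepsilon$ (at the cost of $O_P(1/\sqrt{n})$ cross terms), whereas you expand $T_1$ in the exactly orthonormal true basis and absorb the cross term $T_3$ as higher order --- both variants rely on the same implicit requirement that $J\gamma_n^2/\Delta_{\varepsilon,J}^2$ stay bounded for the lower-order terms to be negligible.
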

 
\comment{
\begin{remark} The second term in Lemma~\ref{lemma::betahat} and Proposition~\ref{prop::variance}  is due to the sample error in estimating the eigenvectors of $A_\varepsilon$. In a semi-supervised learning setting where the eigenvectors are derived from an infinite number of unlabeled data, this estimation error vanishes, yielding  much faster rates:
$|\hat{\beta}_{\varepsilon,j}-\beta_{\varepsilon,j}|^2 = O_P\left(\frac{1}{n}\right)$ and $L_{\rm{var}} = J O_P \left( \frac{1}{n}\right)$. 
\end{remark}
}

\subsection{Total Loss}\label{sec:total_loss}
 \subsubsection*{Fixed Kernel} In kernel machine learning, it is standard to assume a {\em fixed} RKHS $\mathcal{H}_k$, e.g., with norm $\|\cdot\|_{\mathcal{H}_k}$ and a fixed kernel $k$ with a bandwidth $\varepsilon$ not dependent on $n$.
 From Propositions~\ref{prop::bias_RKHS} and~\ref{prop::variance}    and under assumptions (A1)-(A4), we then have the following result: 
\begin{theorem}\label{theorem::loss_fixedkernel}
Assume $f \in \mathcal{H}_k$ with finite norm; i.e., $\|f\|_{\mathcal{H}_k} \leq M$ for some constant $M < \infty$. Then, 
\begin{equation}\label{eq:loss_fixed_eps}
L(f,\hat f) =  O(\lambda_{\varepsilon,J})  + J   O_P \left( \frac{1}{n}\right)  + J O_P\left(\frac{1}{n\Delta_{\varepsilon,J}^2}\right),
\end{equation}
where  $\Delta_{\varepsilon,J}=\min_{0\leq j \leq J} (\lambda_{\varepsilon,j}-\lambda_{\varepsilon,j+1})$.
 \end{theorem}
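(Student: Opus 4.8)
The plan is to combine the two bias bounds and the variance bound already established, after specializing the variance estimate to a bandwidth that does not shrink with $n$. First I would invoke the elementary decomposition recorded above: since $|f(x)-\hat f_{\varepsilon,J}(x)|^2 \le 2|f(x)-f_{\varepsilon,J}(x)|^2 + 2|f_{\varepsilon,J}(x)-\hat f_{\varepsilon,J}(x)|^2$, integrating against $P$ gives $L(f,\hat f) \le 2 L_{\rm bias} + 2 L_{\rm var}$ with $\hat f = \hat f_{\varepsilon,J}$, so it suffices to control the two pieces separately and add.

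For the bias term, I would identify the fixed RKHS $\mathcal{H}_k$ appearing in the hypothesis with the space $\mathcal{H}_\varepsilon$ attached to the symmetrized kernel $\tilde a_\varepsilon(x,y)=k_\varepsilon(x,y)/(\sqrt{p_\varepsilon(x)}\sqrt{p_\varepsilon(y)})$; under (A1)--(A2) the factor $p_\varepsilon$ is bounded above and below, so this renormalization only rescales the RKHS norm by absolute constants and the two norms are equivalent. The assumption $\|f\|_{\mathcal{H}_k}\le M$ then places $f$ in the ball $\mathcal{H}_{\varepsilon,M}$, and Proposition~\ref{prop::bias_RKHS} yields $L_{\rm bias} = O(M\lambda_{\varepsilon,J}) = O(\lambda_{\varepsilon,J})$ since $M$ is a fixed constant independent of $n$.

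For the variance term, I would observe that the sole role of the conditions $\varepsilon_n\to 0$ and $n\varepsilon_n^{d/2}/\log(1/\varepsilon_n)\to\infty$ in Propositions~\ref{prop::eigenvec_error}--\ref{prop::variance} is to keep the bandwidth-dependent constants in the kernel concentration and operator-perturbation arguments under control; with $\varepsilon$ held fixed those constants are absolute --- the weights are bounded away from $0$ and $\infty$ by (A2), and the eigengaps $\delta_{\varepsilon,j}$ and $\Delta_{\varepsilon,J}$ are fixed positive numbers by (A3) --- so the rate $\gamma_n=\sqrt{\log(1/\varepsilon_n)/(n\varepsilon_n^{d/2})}$ collapses to $\gamma_n = O(n^{-1/2})$. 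Substituting $\gamma_n^2 = O(1/n)$ into the conclusion of Proposition~\ref{prop::variance} gives $L_{\rm var} = J\,O_P(1/n) + J\,O_P\!\big(1/(n\Delta_{\varepsilon,J}^2)\big)$. Adding $2L_{\rm bias}$ and $2L_{\rm var}$ produces exactly the stated bound.

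The main obstacle I anticipate is this last point: making precise that the machinery behind Proposition~\ref{prop::variance} --- the auxiliary-RKHS perturbation scheme borrowed from Rosasco et al.\ and the concentration of $\mathbb{A}_\varepsilon$ around $A_\varepsilon$ --- indeed carries over verbatim to constant $\varepsilon$, so that $\gamma_n$ may legitimately be replaced by $n^{-1/2}$ and no hidden $\varepsilon$-dependence inflates the constants. Everything else (the bias--variance split, the equivalence of the two RKHS norms, and the bookkeeping of fixed constants into the $O$-symbols) is routine.
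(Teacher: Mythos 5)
Your proposal is correct and follows essentially the same route as the paper, which derives the theorem directly by combining Proposition~\ref{prop::bias_RKHS} for the bias with Proposition~\ref{prop::variance} for the variance, noting that for fixed $\varepsilon$ the rate $\gamma_n^2$ reduces to $O(1/n)$. Your extra care in identifying $\mathcal{H}_k$ with the RKHS of the normalized kernel and in checking that the variance machinery survives at constant bandwidth is sound but does not change the argument.
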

The problem is that $\mathcal{H}_k$, $M$,  and the eigenvalues $\lambda_{\varepsilon,j}$, all depend on $\varepsilon$.
This dependence is complicated and poorly understood.
Hence, in what follows, we will instead of the RKHS norm use an alternative measure of smoothness based on the diffusion operator (Method 2 in Sec.~\ref{sec::bias}).  This simplifies the theory and will allow us to analyze the dependence of the series estimator on tuning parameters and sparse structure. 

\subsubsection*{Kernel with Decreasing Bandwidth}
Consider now a Gaussian kernel $k_{\varepsilon}$ with a bandwidth $\varepsilon_n$ that {\em decreases} with increasing $n$.
From Propositions~\ref{proposition::bias_diffop} and~\ref{prop::variance}   and under assumptions (A1)-(A4), we have the following results:
\begin{theorem}\label{theorem::loss_epskernel}
Let $\varepsilon_n\to 0$ and $n \varepsilon_n^{d/2}/
\log(1/\varepsilon_n) \to \infty$ as $n \rightarrow \infty$. Then, for $f \in   L^2(\X,P)$,
\begin{equation}\label{eq:loss_vary_eps}
L(f,\hat f)  =   O \left(\frac{\mathcal{J}_{\varepsilon}(f)}{\nu^2_{\varepsilon,J+1}} \right) + J O_P \left( \frac{1}{n}\right)  + J O_P\left(\frac{\gamma_n^2}{\Delta_{\varepsilon,J}^2}\right),
\end{equation}
where  $\mathcal{J}_{\varepsilon}(f) = -\langle  G_{\varepsilon} f, f \rangle_{\varepsilon}$, $\nu^2_{\varepsilon,J+1}$ is the $(J+1)^{th}$ eigenvalue of  $-G_{\varepsilon}$,   $\gamma_n = \sqrt{\frac{ \log(1/\varepsilon_n)}{n \varepsilon_n^{d/2}}}$, and $\Delta_{\varepsilon,J}=\min_{0\leq j \leq J} (\lambda_{\varepsilon,j}-\lambda_{\varepsilon,j+1})$.
 \end{theorem}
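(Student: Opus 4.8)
The plan is to obtain Theorem~\ref{theorem::loss_epskernel} as a direct consequence of the bias bound in Proposition~\ref{proposition::bias_diffop} and the variance bound in Proposition~\ref{prop::variance}, glued together by the elementary decomposition already set up at the start of Section~\ref{sec::theory}. Writing $\hat f_{\varepsilon,J}$ for the spectral series estimator and $f_{\varepsilon,J}=\sum_{j\le J}\beta_{\varepsilon,j}\psi_{\varepsilon,j}$ for its population analogue, the pointwise inequality $|f(x)-\hat f_{\varepsilon,J}(x)|^2\le 2|f(x)-f_{\varepsilon,J}(x)|^2+2|f_{\varepsilon,J}(x)-\hat f_{\varepsilon,J}(x)|^2$ integrated against $P$ yields $L(f,\hat f)\le 2L_{\rm bias}+2L_{\rm var}$, so it suffices to control the two pieces separately under (A1)--(A4) and then absorb the constant $2$ into the $O$-notation.

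For the bias term I would invoke Proposition~\ref{proposition::bias_diffop}, which holds for every $\varepsilon>0$, every $J$, and every $f\in L^2(\X,P)$, and already delivers $L_{\rm bias}=O(\mathcal{J}_{\varepsilon}(f)/\nu^2_{\varepsilon,J+1})$ with $L_{\rm bias}$ measured against $P$. The one point to verify is the passage between $S_\varepsilon$ and $P$: the raw estimate~\eqref{eq:bias_bound} is stated in the $S_\varepsilon$-norm, but $dS_\varepsilon/dP\propto p_\varepsilon$ and (A1)--(A2) give $m\le p_\varepsilon\le M$ uniformly, so the two measures are mutually equivalent with constants depending only on $a,b,m,M$ and the $O(\cdot)$ statement transfers to $P$ without introducing any $\varepsilon$-dependence. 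Since a Gaussian kernel restricted to the compact support $\X$ satisfies (A2) with $\varepsilon$-independent $m,M$, this step goes through uniformly in $n$, giving the first term of~\eqref{eq:loss_vary_eps}.

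For the variance term, the hypotheses $\varepsilon_n\to 0$ and $n\varepsilon_n^{d/2}/\log(1/\varepsilon_n)\to\infty$ are exactly those of Propositions~\ref{prop::eigenvec_error} and~\ref{prop::variance}, so Proposition~\ref{prop::variance} applies verbatim and yields $L_{\rm var}=J\bigl(O_P(1/n)+O_P(\gamma_n^2/\Delta_{\varepsilon,J}^2)\bigr)$ with $\gamma_n=\sqrt{\log(1/\varepsilon_n)/(n\varepsilon_n^{d/2})}$ and $\Delta_{\varepsilon,J}=\min_{0\le j\le J}(\lambda_{\varepsilon,j}-\lambda_{\varepsilon,j+1})$. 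Adding this to the bias contribution and collapsing the factor $2$ produces precisely~\eqref{eq:loss_vary_eps}.

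The genuinely hard work lives in the two input propositions rather than in this assembly: the bias proposition rests on the spectral identity $\mathcal{J}_{\varepsilon}(f)=\sum_j\nu^2_{\varepsilon,j}\beta_{\varepsilon,j}^2$ and a tail estimate, while the variance proposition requires the Nystr\"{o}m/perturbation analysis behind Proposition~\ref{prop::eigenvec_error} together with a bound on $|\hat\beta_{\varepsilon,j}-\beta_{\varepsilon,j}|$. In the assembly itself the only thing to watch is bookkeeping of constants --- confirming that the change of measure $P\leftrightarrow S_\varepsilon$ contributes no $\varepsilon$-dependent factor that could interact badly with $\varepsilon_n\to 0$, and that the $J$ cross-terms arising from squaring a sum of $J+1$ estimated components are exactly the ones already accounted for in Proposition~\ref{prop::variance}. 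Given (A1)--(A4) this is routine, so I would keep the proof short.
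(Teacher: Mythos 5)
Your proposal is correct and follows the same route as the paper: the theorem is stated there as an immediate consequence of Propositions~\ref{proposition::bias_diffop} and~\ref{prop::variance} via the bias--variance decomposition $L\le 2L_{\rm bias}+2L_{\rm var}$, with the $S_\varepsilon\leftrightarrow P$ change of measure handled exactly as you describe (the paper's Lemmas~\ref{lemma::s_eps} and~\ref{lemma::Lbias_conversion} give the $\varepsilon$-independent equivalence constants $a/b$ and $b/a$ under (A1)--(A2)). No gaps.
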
 

\begin{corollary}\label{corollary::loss_epskernel}
Assume that $f \in   C_b^3(\mathcal{X})$ and that the kernel $k=k_\varepsilon^{*}$ is corrected for bias. 
Then, for $\varepsilon_n\to 0$ and $n \varepsilon_n^{d/2}/
\log(1/\varepsilon_n) \to \infty$,
\begin{equation}\label{eq:loss_vary_eps}
L(f,\hat f)  =   \frac{\mathcal{J}(f)O(1)+O(\varepsilon_n)}{\nu^2_{J+1}}  + J O_P \left( \frac{1}{n}\right)  + JO_P\left(\frac{\gamma_n^2}{\varepsilon_n \Delta_{J}^2}\right),
\end{equation}
where $\nu^2_{J+1}$ is the $(J+1)^{th}$ eigenvalue of  $\triangle$,   $\mathcal{J}(f) =  \int_{\X}  \|\nabla f(x) \|^2 dS(x)$, and $\Delta_J=\min_{0\leq j \leq J} (\nu_{j+1}^2-\nu_j^2)$.
 \end{corollary}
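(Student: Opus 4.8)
The plan is to specialize Theorem~\ref{theorem::loss_epskernel} to the bias-corrected kernel $k^{*}_\varepsilon$ and then replace the $\varepsilon$-dependent spectral quantities $\mathcal{J}_\varepsilon(f)$, $\nu^2_{\varepsilon,J+1}$, and $\Delta_{\varepsilon,J}$ by their $\varepsilon\to 0$ limits, picking up explicit $\varepsilon_n$ factors in the remainders. First I would invoke Theorem~\ref{theorem::loss_epskernel} verbatim (its hypotheses $\varepsilon_n\to 0$ and $n\varepsilon_n^{d/2}/\log(1/\varepsilon_n)\to\infty$ are exactly those assumed here, and $f\in C_b^3(\mathcal{X})\subset L^2(\X,P)$), so that
$$
L(f,\hat f) = O\!\left(\frac{\mathcal{J}^{*}_{\varepsilon}(f)}{(\nu^{*}_{\varepsilon,J+1})^2}\right) + J\,O_P\!\left(\frac1n\right) + J\,O_P\!\left(\frac{\gamma_n^2}{(\Delta^{*}_{\varepsilon,J})^2}\right),
$$
with all starred quantities now built from $k^{*}_\varepsilon$, $A^{*}_\varepsilon$, $G^{*}_\varepsilon$.

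The three substitutions are then handled as follows. For the bias term, Lemma~\ref{lemma:conv_smoothfun} (applicable since $f\in C_b^3(\mathcal{X})$ with vanishing boundary gradient, which I would note is part of the $C_b^3$ convention here, or otherwise add as a hypothesis) gives $\mathcal{J}^{*}_\varepsilon(f) = \mathcal{J}(f) + O(\varepsilon_n)$ where $\mathcal{J}(f)=\int_\X\|\nabla f\|^2\,dS$; this explains the numerator $\mathcal{J}(f)O(1)+O(\varepsilon_n)$. For the eigenvalue in the denominator I would use the Taylor expansion already recorded in the text, $G^{*}_\varepsilon = -\triangle + O(\varepsilon)$ (the bias correction kills the drift term $\frac{\nabla p}{p}\cdot\nabla$), so the $(J+1)$-th eigenvalue $(\nu^{*}_{\varepsilon,J+1})^2$ of $-G^{*}_\varepsilon$ converges to $\nu^2_{J+1}$, the corresponding eigenvalue of $\triangle$; for fixed $J$ this is $\nu^2_{J+1}(1+o(1))$, which I absorb into the $O(1)$. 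For the spectral gap, recall $-\nu^2_{\varepsilon,j}=(\lambda_{\varepsilon,j}-1)/\varepsilon$, equivalently $\lambda^{*}_{\varepsilon,j} = 1 - \varepsilon_n\nu^2_{\varepsilon,j}$, so
$$
\lambda^{*}_{\varepsilon,j}-\lambda^{*}_{\varepsilon,j+1} = \varepsilon_n\big((\nu^{*}_{\varepsilon,j+1})^2-(\nu^{*}_{\varepsilon,j})^2\big) = \varepsilon_n\big((\nu^2_{j+1}-\nu^2_j)+o(1)\big),
$$
hence $\Delta^{*}_{\varepsilon,J} = \varepsilon_n\,\Delta_J(1+o(1))$ with $\Delta_J=\min_{0\le j\le J}(\nu^2_{j+1}-\nu^2_j)$, and therefore $(\Delta^{*}_{\varepsilon,J})^2 = \varepsilon_n^2\Delta_J^2(1+o(1))$. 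Plugging this into the variance term turns $\gamma_n^2/(\Delta^{*}_{\varepsilon,J})^2$ into $\gamma_n^2/(\varepsilon_n^2\Delta_J^2)\cdot(1+o(1))$ — and I would double-check the stated exponent here, since the corollary writes $\gamma_n^2/(\varepsilon_n\Delta_J^2)$, which corresponds to $\Delta^{*}_{\varepsilon,J}=\sqrt{\varepsilon_n}\,\Delta_J(1+o(1))$; reconciling this with the convention for $G_\varepsilon$ in Eq.~\eqref{eq::G_eps} is a point to verify carefully. Collecting the three pieces gives exactly \eqref{eq:loss_vary_eps}.

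The main obstacle is the uniform control needed to push the $\varepsilon\to 0$ limits through. Lemma~\ref{lemma:conv_smoothfun} and the Taylor expansion $G^{*}_\varepsilon=-\triangle+O(\varepsilon)$ are pointwise/operator statements on fixed smooth functions, but here $\varepsilon=\varepsilon_n$ is tied to $n$ and the index $J$ may also grow; I must ensure the $o(1)$ and $O(\varepsilon_n)$ error terms in the spectral convergence are uniform over the relevant range of $j\le J$ — this is where one needs $J$ fixed (or slowly growing) and where the nondegeneracy assumption (A3) together with spectral-perturbation bounds for $G_\varepsilon$ vs.\ $\triangle$ does the work. A secondary subtlety is that the bias-corrected kernel $k^{*}_\varepsilon(x,y)=k_\varepsilon(x,y)/(p_\varepsilon(x)p_\varepsilon(y))$ must still satisfy (A1)--(A4) — in particular the two-sided bound (A2) — which follows from (A1) and the original (A2) since $p_\varepsilon(x)$ is bounded above and below under those assumptions; I would state this reduction explicitly at the start so that Theorem~\ref{theorem::loss_epskernel} genuinely applies to $k^{*}_\varepsilon$.
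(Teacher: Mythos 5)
Your route is the paper's route: the paper gives no separate proof of Corollary~\ref{corollary::loss_epskernel}, and the intended argument is exactly what you describe --- specialize Theorem~\ref{theorem::loss_epskernel} to $k_\varepsilon^{*}$, use Lemma~\ref{lemma:conv_smoothfun} (together with the $O(\varepsilon)$ remainder in the Taylor expansion $G_\varepsilon^{*}=-\triangle+O(\varepsilon)$) to replace $\mathcal{J}_\varepsilon^{*}(f)$ and $\nu^2_{\varepsilon,J+1}$ by their limits, and convert the eigengap of $A_\varepsilon^{*}$ via $\lambda_{\varepsilon,j}=1-\varepsilon_n\nu^2_{\varepsilon,j}$. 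Your side remarks are also consistent with the paper's conventions: the appendix explicitly defines $C_b^3(\mathcal{X})$ to include vanishing boundary gradients, so Lemma~\ref{lemma:conv_smoothfun} applies without extra hypotheses.

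The discrepancy you flag in the variance term is real and you resolve it the right way: since $\lambda_{\varepsilon,j}-\lambda_{\varepsilon,j+1}=\varepsilon_n\bigl(\nu^2_{\varepsilon,j+1}-\nu^2_{\varepsilon,j}\bigr)$, one gets $\Delta_{\varepsilon,J}^2=\varepsilon_n^2\Delta_J^2(1+o(1))$, so the substitution into Proposition~\ref{prop::variance} yields $J\,O_P\bigl(\gamma_n^2/(\varepsilon_n^2\Delta_J^2)\bigr)$, not the $J\,O_P\bigl(\gamma_n^2/(\varepsilon_n\Delta_J^2)\bigr)$ printed in Eq.~\ref{eq:loss_vary_eps}. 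There is no alternative reading of Eq.~\ref{eq::G_eps} that rescues the first power; the paper's subsequent balancing ($\varepsilon_n\asymp n^{-2/(d+4)}$) and the proof of Corollary~\ref{corollary:manifold_example} (which silently uses $J/\Delta_J$ rather than $J/\Delta_J^2$) are consistent only with the as-printed first-power version, so the inconsistency propagates rather than being resolved elsewhere. In short: your derivation is sound but establishes the corollary with $\varepsilon_n^2\Delta_J^2$ in the denominator, i.e.\ a weaker bound than stated; to match the printed statement one would need either a typo correction in the corollary or a sharper eigenvector-perturbation bound than Proposition~\ref{prop::eigenvec_error} provides. One genuinely missing ingredient in both your sketch and the paper is uniformity of the spectral convergence $\nu^2_{\varepsilon,j}\to\nu^2_j$ over $j\le J$ when $J$ grows with $n$ (as it does in Corollary~\ref{corollary:manifold_example}); you correctly identify this as the place where (A3) and a perturbation argument must do the work, but neither you nor the paper supplies it.
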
 
 Some comments on these results:
The first term in Eqs.~\ref{eq:loss_fixed_eps}-\ref{eq:loss_vary_eps} corresponds to the approximation error of the estimator and decays with $J$. The second and third terms correspond to the variance. Note that the variance term $J O_P \left( \frac{1}{n}\right)$ is the same as the variance of a {\em traditional} orthogonal series estimator in {\em one} dimension only; in $d$ dimensions, the variance term for a traditional tensor product basis is $O_P \left( \frac{1}{n}\right) \prod_{i=1}^{d} J_i$ where $J_i$ is the number of components in the $i$th direction~\citep{efromovich}. Hence, there is a considerable gain in using an adaptive bias,  but we incur an additional variance term  $J O_P\left(\frac{\gamma_n^2}{\varepsilon_n \Delta_{J}^2}\right)$ from estimating the basis.\footnote{In an SSL setting (Remark~\ref{remark::SSL}), this extra estimation error vanishes in the limit of infinite unlabeled data.}

If we balance the  two $\varepsilon$-terms in Eq.~\ref{eq:loss_vary_eps}, we get a bandwidth of $\varepsilon_n  \asymp (1/n)^{2/(d+4)}$. With this choice of $\varepsilon_n$ 
 and by ignoring terms of lower order,
the rate becomes
\begin{equation}
L(f,\hat f)  =   O\left(\frac{\mathcal{J}(f)}{\nu^2_{J+1}}\right)   + \frac{J}{\Delta_J^2}O_P\left(\frac{\log n}{n}\right)^{\frac{2}{d+4}}.
\end{equation}
Finally, if we apply the results in \cite{Coifman:Lafon:06,Gine:Guillou:02,Gine:Koltchinskii:2006,Rosasco:EtAl:2010} to general Riemannian manifolds (see, for example, \cite{Henry:Rodriguez:2009,Ozakin:Gray:2009, BerrySauer:2016}   for kernel density estimation on manifolds), and use that the eigenvalues of the Laplace-Beltrami operator on an $r$-dimensional Riemannian manifold are $\nu_j^2 \sim j^{2/r}$~\citep{SafarovVassiliev96}, we obtain the following corollary:

\begin{corollary}\label{corollary:manifold_example}
Suppose the support of the data is on a compact $C^\infty$ submanifold of $\mathbb{R}^d$ with intrinsic dimension $r$, and suppose that $f$ is smooth relative to P (Definition~\ref{def::smoothnessP}). 
Under the assumptions of Theorem \ref{theorem::loss_epskernel}
and Corollary \ref{corollary::loss_epskernel}, we obtain the rate
$$L(f,\hat f)  =   O \left(\frac{1}{J^{2/r}}\right) + J^{2\left(1-\frac{1}{r}\right)} O_P\left(\frac{\log n}{n}\right)^{\frac{2}{r+4}}.$$
It is then optimal to take $J \asymp (n/ \log n)^{\frac{1}{r+4}}$, in which case the upper bound becomes
$$\left(\frac{\log n}{n}\right)^{\frac{2}{(r+4)r}}.$$
\end{corollary}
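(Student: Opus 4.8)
The plan is to specialize Corollary~\ref{corollary::loss_epskernel} to the case where $\X$ is a compact $C^\infty$ submanifold, substitute the Weyl asymptotics for the Laplace--Beltrami spectrum, bound the smoothness functional, and then optimize over $J$. The single conceptual ingredient beyond Corollary~\ref{corollary::loss_epskernel} is that, when $\X\subset\R^d$ is a compact $C^\infty$ submanifold of \emph{intrinsic} dimension $r$, every sampling-error rate in the proofs of Theorem~\ref{theorem::loss_epskernel} and Corollary~\ref{corollary::loss_epskernel} that carried a factor $\varepsilon_n^{d/2}$ may be replaced by $\varepsilon_n^{r/2}$: the kernel-smoothed density $\widehat{p}_{\varepsilon_n}$ and the empirical stationary weights $\widehat{s}_{\varepsilon_n}$ concentrate at the manifold rate (cf.\ the manifold kernel-density analyses in~\cite{Henry:Rodriguez:2009, Ozakin:Gray:2009, BerrySauer:2016}), the Nystr\"{o}m eigenvector bound of Proposition~\ref{prop::eigenvec_error} then holds with $\gamma_n=\sqrt{\log(1/\varepsilon_n)/(n\varepsilon_n^{r/2})}$, and the bias-corrected row-stochastic Gaussian operator $G^{*}_{\varepsilon_n}$ still converges to the Laplace--Beltrami operator $\triangle$ on $\X$~\cite{Coifman:Lafon:06, Belkin:Niyogi:05, Hein:EtAl:05, Singer:06, Gine:Koltchinskii:2006}. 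Redoing the bandwidth-balancing step that follows Corollary~\ref{corollary::loss_epskernel} with $d$ replaced by $r$ gives $\varepsilon_n \asymp (\log n/n)^{2/(r+4)}$ and the bound
\begin{equation*}
L(f,\hat f) = O\!\left(\frac{\mathcal{J}(f)}{\nu_{J+1}^2}\right) + \frac{J}{\Delta_J^2}\,O_P\!\left(\frac{\log n}{n}\right)^{\frac{2}{r+4}},
\end{equation*}
where now $\nu_j^2$ are the eigenvalues of $\triangle$ and $\Delta_J=\min_{0\le j\le J}(\nu_{j+1}^2-\nu_j^2)$.

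Next I would insert the spectral geometry. Weyl's law on a compact $r$-dimensional manifold~\cite{SafarovVassiliev96} gives $\nu_j^2\sim j^{2/r}$, hence $\nu_{J+1}^2\asymp J^{2/r}$ and consecutive gaps of polynomial order, so that carrying the $\varepsilon_n$-powers through as above the variance term becomes $J^{2(1-1/r)}\,O_P\!\left((\log n/n)^{2/(r+4)}\right)$, which dominates the remaining $J\,O_P(1/n)$ contribution. For the bias, smoothness of $f$ relative to $P$ (Definition~\ref{def::smoothnessP}) together with Lemma~\ref{lemma:conv_smoothfun} gives $\mathcal{J}(f)=\int_\X\|\nabla f(x)\|^2\,dS(x)\le c^2<\infty$, so that the first term is $O(1/J^{2/r})$ (with $c$ treated as a constant) and the $O(\varepsilon_n)/\nu_{J+1}^2$ piece is of strictly lower order. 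Assembling,
\begin{equation*}
L(f,\hat f) = O\!\left(\frac{1}{J^{2/r}}\right) + J^{2(1-1/r)}\,O_P\!\left(\frac{\log n}{n}\right)^{\frac{2}{r+4}}.
\end{equation*}
Equating the two terms, $J^{-2/r}\asymp J^{2(1-1/r)}(\log n/n)^{2/(r+4)}$, reduces to $J^{-2}\asymp(\log n/n)^{2/(r+4)}$, i.e.\ $J\asymp(n/\log n)^{1/(r+4)}$; substituting this $J$ into either term gives the stated rate $(\log n/n)^{2/((r+4)r)}$.

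The step I expect to be the main obstacle is justifying the replacement of the ambient dimension $d$ by the intrinsic dimension $r$: Propositions~\ref{prop::eigenvec_error}--\ref{prop::variance} are not stated intrinsically, so one must re-run the Rosasco-type operator-perturbation argument with the Riemannian volume measure in place of the Lebesgue measure on $\R^d$, checking that the relevant covering numbers, Bernstein-type concentration bounds, and the uniform convergence of $\widehat{p}_{\varepsilon_n}$ and $\widehat{s}_{\varepsilon_n}$ all scale with $r$, the curvature, reach and injectivity radius of $\X$ entering only through constants that are uniform in $n$. A secondary subtlety is the spectral-gap bookkeeping: one must ensure that $\min_{0\le j\le J}(\nu_{j+1}^2-\nu_j^2)$ --- not only $\nu_{J+1}^2$ --- admits the polynomial lower bound implicit above, and that the finitely many near-degeneracies permitted by assumption (A3) do not inflate $\Delta_J^{-1}$ enough to alter the rate.
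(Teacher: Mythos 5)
Your proposal is correct and follows essentially the same route as the paper's own proof: replace the ambient dimension $d$ by the intrinsic dimension $r$ in $\gamma_n$ (citing the manifold versions of the kernel-concentration and operator-perturbation results), invoke Weyl's law $\nu_j^2\sim j^{2/r}$, control the eigengap via the mean-value theorem to get $J^{2(1-1/r)}$, and balance the bias and variance terms to obtain $J\asymp(n/\log n)^{1/(r+4)}$. The two caveats you flag (the non-intrinsic statement of Propositions~\ref{prop::eigenvec_error}--\ref{prop::variance} and the $\Delta_J$ bookkeeping, where your intermediate display carries $\Delta_J^2$ but the final exponent $2(1-1/r)$ corresponds to a single power of $\Delta_J$) are present in the paper's argument as well.
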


\comment{
\begin{remark}
On the other hand, if one ignores the estimation error of the basis as in, e.g.,~\cite{Zhou:Srebro:2011} and~\cite{Nadler_semi}, then we 
can take $\varepsilon$ to be arbitrarily close to zero, and we instead have
\begin{equation}\label{eq:loss_vary_eps_SSL}
L(f,\hat f)  =  O \left(\frac{1}{J^{2/d}}\right) + J O_P \left( \frac{1}{n}\right)
\end{equation}
This loss is minimized by taking $J \asymp n^{d/(d+2)}$, yielding the rate
$$\frac{1}{n^{\frac{2}{2+d}}},$$
which is, up to a logarithmic term, in agreement with \cite{Zhou:Srebro:2011}. 
\end{remark}
}

We make the following observations for a spectral series estimator with flexible kernel bandwidth:

\begin{enumerate}[(i)]

\item  {\bf Adaptiveness to Low-Dimensional Structure.} If the data in $\mathbb{R}^d$ has intrinsic dimension $r \ll d$, then the rate  $n^{-1/O(r^2)}$ above is a significant improvement of the minimax rate $n^{-1/O(d)}$ for a nonparametric regressor in  $\mathbb{R}^d$. Our estimator automatically adapts to sparse structure and does not require the knowledge of $r$ or an estimated $r$   in practice. The optimal error rate is achieved when the smoothing parameters $J$ and $\varepsilon$ are properly selected for the given $r$, and the amount of smoothing is in practice chosen by cross-validation as in Sec.~\ref{sec::loss_function}. 

\item {\bf Minimax Optimality.} In a semi-supervised learning setting, 
 the estimation error of the basis vanishes in the limit of  infinite unlabeled data.
 The loss then reduces to
\begin{equation}\label{eq:loss_vary_eps_SSL}
L(f,\hat f)  =  O \left(\frac{1}{J^{2/r}}\right) + J O_P \left( \frac{1}{n}\right) ,
\end{equation}
which is minimized by taking $J \asymp n^{r/(r+2)}$. At the minimum, we achieve the rate
$$\frac{1}{n^{\frac{2}{2+r}}},$$
the minimax rate for a nonparametric estimator of Sobolev smoothness $\beta=1$ in $\mathbb{R}^D$, where $D=r$.
The latter result is also, up to a logarithmic term, in agreement with \cite{Zhou:Srebro:2011}. 
\end{enumerate}


 \comment{
\begin{remark} The last term in Theorem~\ref{theorem::loss_epskernel}, as well as in Theorem~\ref{theorem::loss_fixedkernel}, 
is due to the sample error in estimating the eigenvectors of $A_\varepsilon$. In a semi-supervised learning setting where the eigenvectors are derived from an infinite number of unlabeled data, this estimation error vanishes, yielding considerably faster rates with a variance contribution of
$L_{\rm{var}} = J O_P \left( \frac{1}{n}\right)$. 
\end{remark}}
 
\section{Numerical Examples}\label{sec:examples}
 Finally, we use data with complex dependencies to compare  the spectral series approach with classical kernel smoothing, k-nearest neighbors (kNN) regression, regularization in RKHS, and recent state-of-the-art manifold and local regression methods. 
 
 
 In our experiments, we split the data into three sets for training, validation, and testing, respectively. For the manifold regression estimators from Aswani et al \cite{Aswani:EtAl:2011} and Cheng et al. \cite{cheng2012local}, 
  we use the authors' codes with built-in cross-validation. For all other estimators, we tune parameters according to Sec.~\ref{sec::loss_function}. To assess the final models, we compute the estimated loss  $\hat{L}$ and standard error on the test data.\footnote{The estimated standard error of $\hat{L}$ is $s/\sqrt{n}$, where $s^2$ is the empirical variance of $(Y_i-\hat{f}(X_i))^2$ on the test set.}  

\subsection{Estimating Pose Using Images of Faces}\label{sec:isomap}
 
 In our first example, we consider images of artificial faces from the Isomap database~\cite{Tenenbaum:2000}.\footnote{\url{www.isomap.stanford.edu/datasets.html}} There are a total of $n=698$ $64 \times 64$ gray-scale images rendered with different orientation and lighting directions. Fig.~\ref{fig::diffMapISO} shows a visualization of these data where we use the first two non-trivial eigenvectors of the Gaussian diffusion kernel as coordinates  (i.e., Eq.~\ref{eq:eigenmap} with the approximate eigenvectors from Eq.~\ref{eq::nystrom}). 
 
 Our goal is to estimate the horizontal left-right pose of each face. We compare several different approaches to regression:  
 
(i) As a baseline, we choose the classical  Nadaraya-Watson estimator with a Gaussian smoothing kernel  (\emph{NW})  and the k-nearest neighbors regression estimator (\emph{kNN}). The latter estimator is known to be minimax optimal with respect to local intrinsic dimension~\cite{kpotufe2011k}.

  (ii) For the spectral series method (\emph{Series}), we implement the Gaussian kernel (\emph{Series-radial}) and polynomial kernels  $k(x,y)=(\langle x,y\rangle + 1)^q$ of different degrees $q$. We treat $q$ as a tuning parameter and we denote the  polynomial kernel with the smallest estimated loss by \emph{Series-polyBest}.   Note that choosing $q=1$ (\emph{Series-poly1})   is equivalent to a linear regression on eigenvectors computed with PCA.
    
   (iii) We also implement the RKHS method in Sec.~\ref{sec:kernel_ML} for the same set of kernels as \emph{Series}. 
 For a squared-error loss, Eq.~\ref{eq::regularization}  reduces to an infinite-dimensional, generalized ridge regression problem~\citep[Section 5.8.2]{Hastie:EtAl:09}. Hence, we use the term kernel ridge regression (\emph{KRR}) and denote the estimators by \emph{KRR-radial} and \emph{KRR-poly}.
 
(iv) The last group of estimators include recent
 manifold and local regression methods 
 \citep{Aswani:EtAl:2011,cheng2012local}\footnote{For code, go to \url{www.eecs.berkeley.edu/~aaswani/EDE_Code.zip} and \url{http://www.math.princeton.edu/~hauwu/regression.zip}.}: \emph{locOLS} is a local ordinary least squares, \emph{locRR} is a local ridge regression, \emph{locEN} is a local elastic net,  \emph{locPLS} is a local partial least squares,
 \emph{locPCR} is a local principal components regression, 
\emph{NEDE} is the 
nonparametric exterior derivative estimator,
\emph{NALEDE} is the 
nonparametric adaptive lasso exterior derivative estimator,
 \emph{NEDEP} is the 
nonparametric exterior derivative estimator for the ``large p, small n'' case, and
 \emph{NALEDEP} is the nonparametric adaptive
lasso exterior derivative estimator for the ``large p, small n'' case. The last 4 regression estimators (\emph{NEDE, NALEDE, NEDEP, NALEDP}) pose the regression as a least-squares problem with a term that penalizes for the regression vector lying in directions perpendicular to an estimated manifold; see \citep{Aswani:EtAl:2011} for details. In our comparison, we also include \emph{MALLER} \cite{cheng2012local} which
 first estimates the local dimension of the data and then performs local linear regression on a tangent plane estimate.

Manifold and local regression methods, unlike {\em Series}, quickly become computationally intractable in high dimensions. Hence, to be able to compare the different methods, we follow Aswani et al.~\cite{Aswani:EtAl:2011} and rescale the Isomap images from 
  from $64 \times 64$ down to $7 \times 7$ pixels in size. This reduces the number of covariates from $d=4096$ to $d=49$. In other words, we regress the left-right pose $Y$ (our response) on the rescaled image $X \in \mathbb{R}^{49}$ (our predictor). 
 We use  50\% of the data for training, 25\% for validation and 25\% for testing. All covariates are normalized to have mean 0 and standard deviation 1.

   Table \ref{tab::faces} and Fig.~\ref{fig::LossesISO} summarize the results of the final (cross-validated) estimators. The approaches that have best performance are \emph{Series-radial} and \emph{KRR-radial}. As expected, \emph{Series} and \emph{KRR} estimators yield similar losses. A first-order polynomial kernel, i.e., a global principal component regression with \emph{Series- or KRR-poly1},  performs worse than \emph{NW} and \emph{kNN}. Higher-order polynomial kernels (with degree $q=2$ resulting in the smallest loss) as well as  the manifold and local regression estimators (in particularly, \emph{NEDE} and \emph{MALLER}) improve the \emph{NW} and \emph{kNN} results but \emph{Series-radial} and \emph{KRR-radial} are still the best choices in terms of  statistical  and computational performance.  
\begin{table}[H]
 \begin{center}
\caption{ \footnotesize Estimated loss for Isomap face data.}
\label{tab::faces}
{ \footnotesize \begin{tabular}{c|c}
\hline
 Method  &  Loss (SE) \\ \hline
 {NW} & 1.71 (0.23) \\  
 {kNN} & 1.74 (0.21) \\ \hline
 {Series}-poly1& 2.96 (0.40)\\
      {Series}-polyBest (q=2) &0.22 (0.04)\\
  \color{red}  {Series}-radial&  \color{red} 0.16 (0.04)\\ \hline
 { KRR}-poly1& 2.95 (0.41) \\
  { KRR}-polyBest (q=2)& 0.22 (0.05)  \\
   \color{red} {KRR}-radial& \color{red} 0.15 (0.04)\\ \hline
\end{tabular}}
\qquad \qquad
{ \footnotesize \begin{tabular}{c|c}
\hline
Method  &  Loss (SE) \\ \hline
 {locOLS}& 0.65 (0.17)  \\ \hline
  {locRR}& 0.46 (0.16)  \\ \hline
  {locEN} & 0.47 (0.16)  \\ \hline
  {locPLS} & 0.65 (0.21)  \\ \hline
  {locPCR} & 0.95 (0.20)  \\ \hline
 {NEDE} & 0.44 (0.14)  \\ \hline
  {NALEDE}& 0.46 (0.14)  \\ \hline
  {NEDEP}& 0.81	 (0.31) \\  \hline
  {NALEDEP}& 0.85	 (0.33)  \\  \hline
   {MALLER}&  0.24     (0.06)  \\  \hline
\end{tabular}}
\end{center} 
\end{table}
\begin{figure}[H]
  \centering
\includegraphics[width=3.5in,angle=-0]{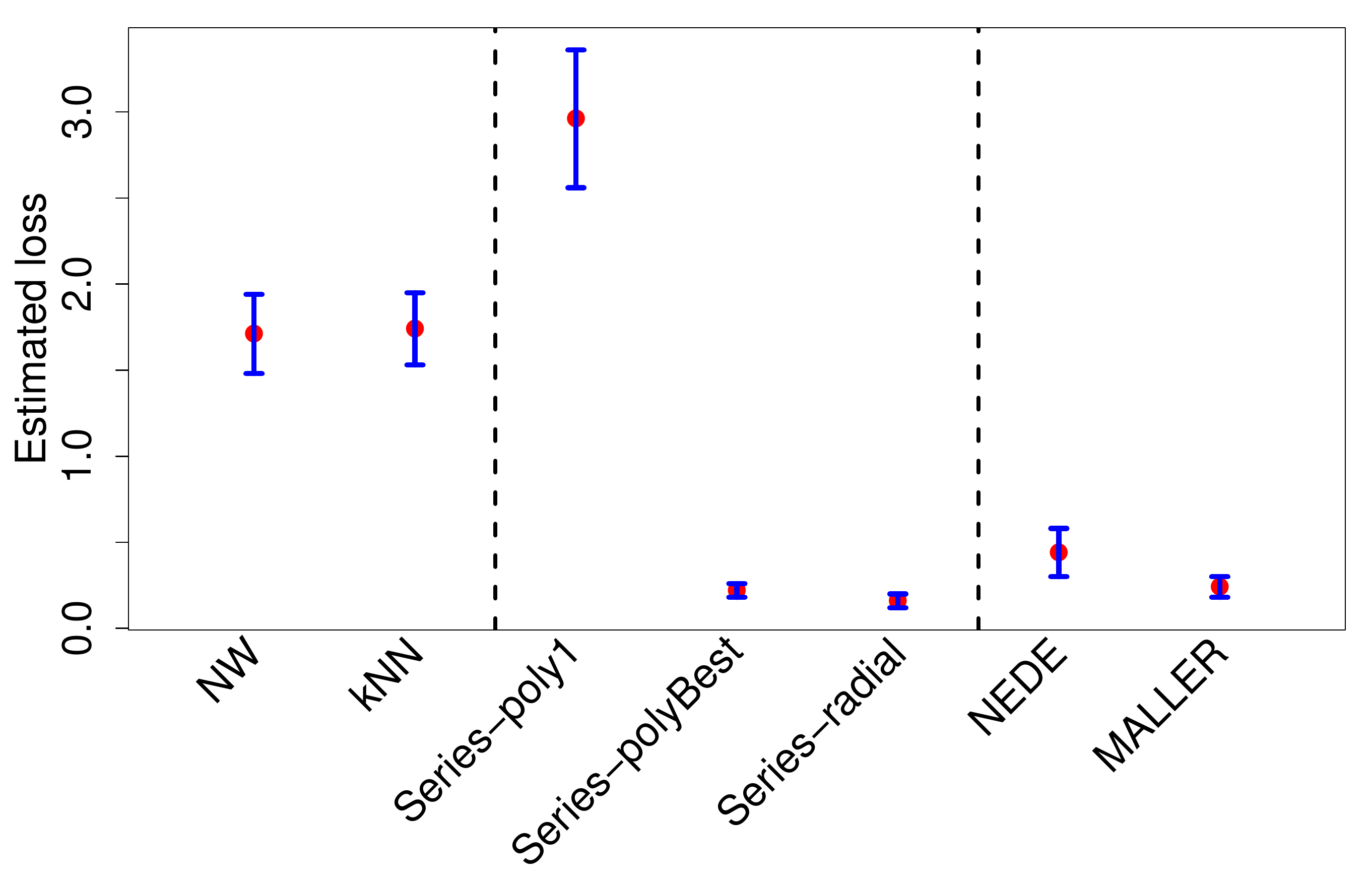}
  \caption{ \footnotesize Estimated loss of estimators for Isomap face data; see Table~\ref{tab::faces}.  For visibility, we have divided the estimators into three groups: classical NW kernel and kNN smoothers ({\em left}),  Series/RKHS-type estimators  ({\em center}), and manifold regression estimators, such as \emph{NEDE} and \emph{MALLER} ({\em right}). Bars represent standard errors.  }
  \label{fig::LossesISO}
\end{figure}

\subsection{Estimating Redshift Using SDSS Galaxy Spectra}\label{eq:SDSS_spectra}

In the following (high-dimensional) example, we predict the redshift of galaxies from high-resolution measurements of their emission spectra. 
Our initial data sample consists of galaxy spectra 
  from ten arbitrarily chosen spectroscopic plates of SDSS DR6.\footnote{\url{http://www.sdss.org/dr6/algorithms/redshift_type.html}} We preprocess and remove spectra according to the three cuts described in~\cite{Richards:EtAl:2009}. The final sample consists of $n=2812$ high-resolution spectra with flux measurements at $d=3501$ wavelengths. We renormalize each spectra so that it has unit norm. Our goal is to predict a galaxy's redshift $Y$ 
  where the predictor is an {\em entire} spectrum $\x \in \mathbb{R}^{3501}$.   Fig.~\ref{fig::diffusionMapSpec}a shows an example of a SDSS spectrum. Fig.~\ref{fig::diffusionMapSpec}b shows a low-dimensional visualization of the full data set when using the first few vectors of the diffusion basis as coordinates. 
  Each point in the plot represents a galaxy, and the color codes for the SDSS spectroscopic redshift. The redshift (the response $Y$) appears to vary smoothly with the eigencoordinates.
\begin{figure}[H]
  \begin{center}
  \subfloat[]{    \includegraphics[width=2.3in,angle=-0]{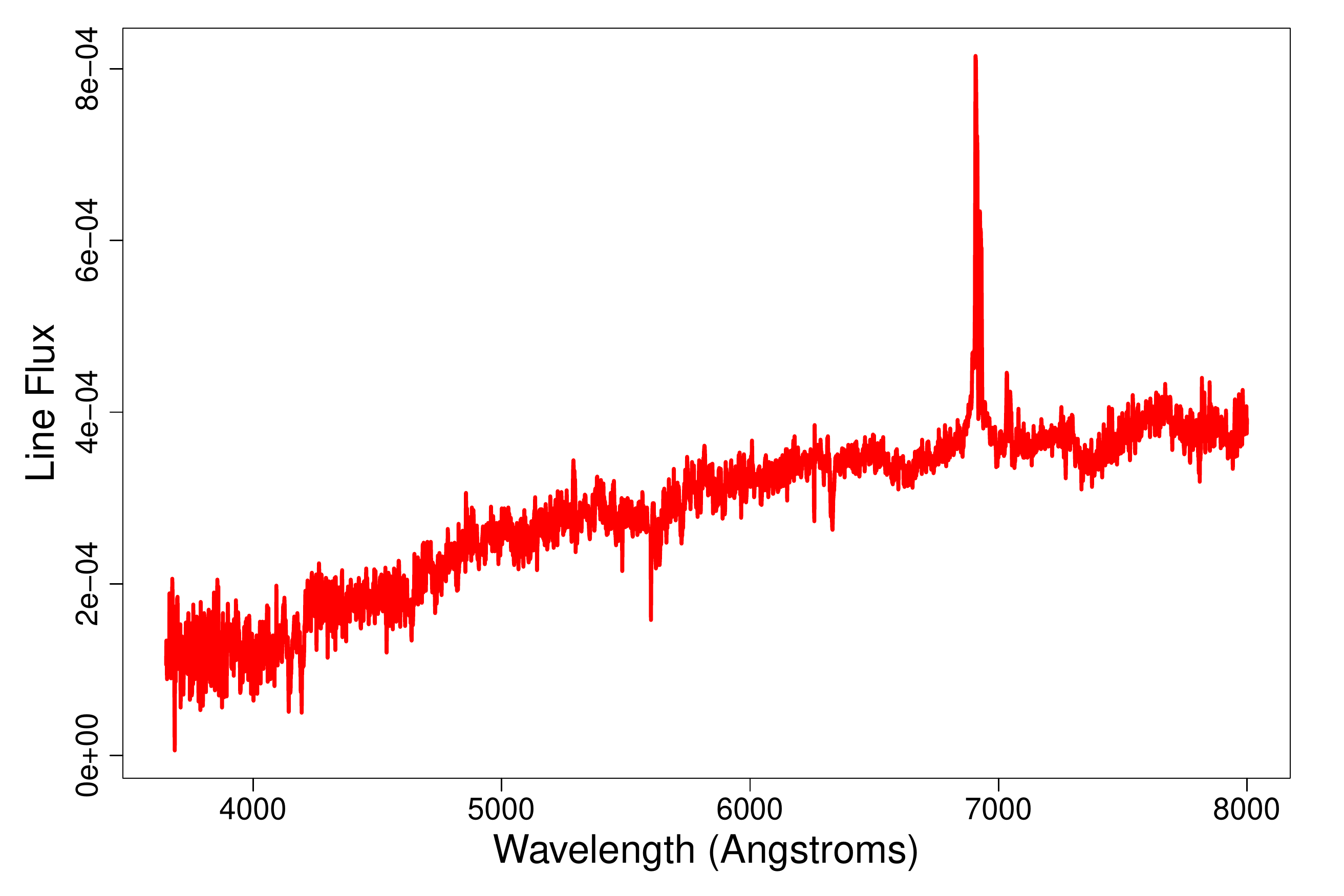} }
\subfloat[]{    \includegraphics[width=2.5in,angle=-0]{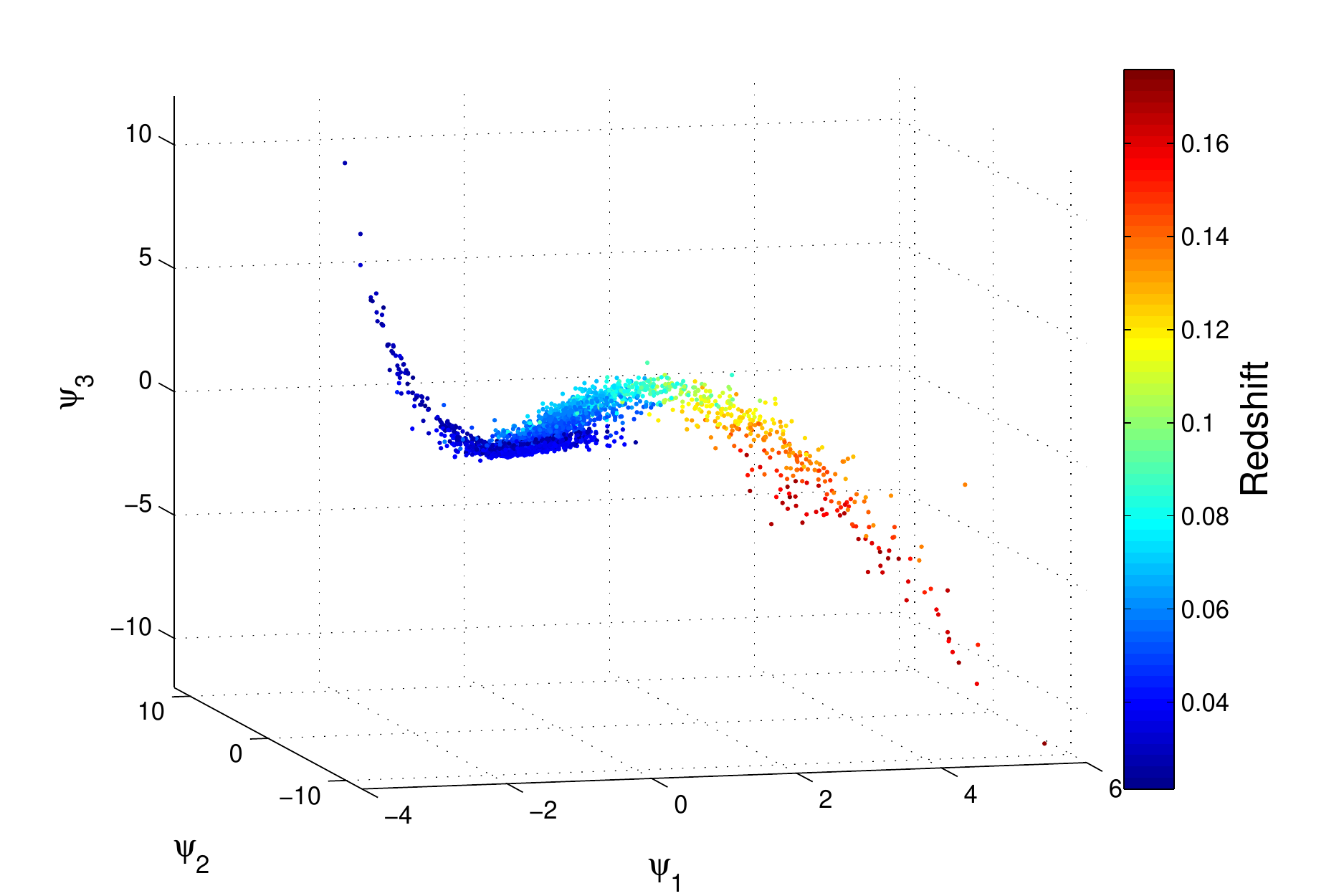} }
  \end{center}
  \caption{ \footnotesize Left: Example of a SDSS galaxy spectrum. Right: Embedding of a sample of SDSS galaxy spectra using the first three non-trivial eigenvectors of the Gaussian diffusion kernel as coordinates; the color codes for the true redshift.}
  \label{fig::diffusionMapSpec}
\end{figure}

For the regression, we use 50\% of the data for training, 25\% for validation and 25\% for testing. Due to the high dimension of the predictor ($d=3501$), we are unable to implement the computationally intensive manifold and local regression estimators from~\cite{Aswani:EtAl:2011}.  
 Table~\ref{tab::spec} and Fig.~\ref{fig::LossesSpec} summarize the results for the other approaches to regression. {\em Series} and {\em KRR} are essentially equivalent in terms of performance, and as before, the radial kernel ({\em Series-radial} and {\em KRR-radial}) yields the smallest estimated loss.   For these data, a linear dimensionality reduction with PCA (\emph{series-poly1}) improves upon the \emph{NW} and {\em kNN} regression results. {\em MALLER} and higher-order polynomials (with degrees 5 and 6) perform better than PCA, but {\em Series-radial} still has the smallest estimated loss.
  Moreover, {\em MALLER} is much slower than {\em Series}: the former estimator takes 34 minutes on a 2.70GHz Intel Core i7-4800MQ,  whereas {\em Series} with cross-validation takes less than a minute. 

  
  \comment{
(ii)  The series method {\em automatically} adapts to high-density regions in the sample space and there is no need to estimate a manifold. Once we have constructed the kernel matrix (Eq.~\ref{eq::symmetrized_matrix}), the computations take the same amount of time regardless of the ambient and intrinsic dimensions of the data.  For the SDSS data, {\em MALLER} comes close to {\em Series} in terms of statistical performance but the computations depend on the estimated dimension of the data. For example, if we normalize the spectra as above so that the norm $\|\x\|=1$, then the  estimated manifold dimension is  28 and \emph{MALLER} takes about 34 minutes on a 2.70GHz Intel Core i7-4800MQ CPU. If we instead follow the normalization scheme of ~\cite{Richards:EtAl:2009} and let the sum $\sum_i x_i=1$, then the estimated manifold dimension is \red{FILL IN} and \emph{MALLER} with cross-validated tuning parameters takes $\sim $1 hour.
{\em Series}, on the other hand, takes less than one minute for both cases.

\red{Check this} 
Finally, we note that polynomial kernels are sensitive to the data normalization as well. For polynomial kernels to be comparable to Gaussian kernels, one needs to include {\em two} tuning  parameters as in $k(x,y)=\left(\frac{<x,y>}{\epsilon} +1\right)^q$, in which case cross-validation would be cumbersome.
}

 \begin{table}[H]
\begin{center}
\caption{ \footnotesize Estimated loss for redshift prediction using SDSS galaxy spectra.}
\label{tab::spec}
{ \footnotesize\begin{tabular}{c|c}
\hline
Method  &  Loss  (SE) $\times 10^{-5}$\\ \hline
   {NW} &  6.13 (1.47) \\ 
   {kNN} &   6.37 (1.52)\\
   \hline
       {Series}-poly1 & 5.13 (0.47)\\ 
    {Series}-polyBest (q=5) & 3.22 (0.32) \\ 
     \color{red}   {Series}-radial&  \color{red} 2.77 (0.33)\\   \hline
       {KRR}-poly1&  5.01 (0.49) \\ 
  {KRR}-polyBest (q=6)&  3.05 (0.33) \\ 
   \color{red}   {KRR}-radial&  \color{red}  2.84 (0.33)  \\\hline  
     \end{tabular}}
\qquad \qquad
{ \footnotesize \begin{tabular}{c|c}
\hline
Method  &  Loss (SE) \\ \hline
 {locOLS}& --  \\ \hline
  {locRR}& --  \\ \hline
  {locEN} & --  \\ \hline
  {locPLS} & --  \\ \hline
  {locPCR} & --  \\ \hline
 {NEDE} & --  \\ \hline
  {NALEDE}& --  \\ \hline
  {NEDEP}& -- \\  \hline
  {NALEDEP}& --  \\  \hline
   {MALLER}&  3.11    (0.38)  \\  \hline
\end{tabular}}
\end{center} 
\end{table}
\begin{figure}[H]
  \centering
   \includegraphics[width=3.5in,angle=-0]{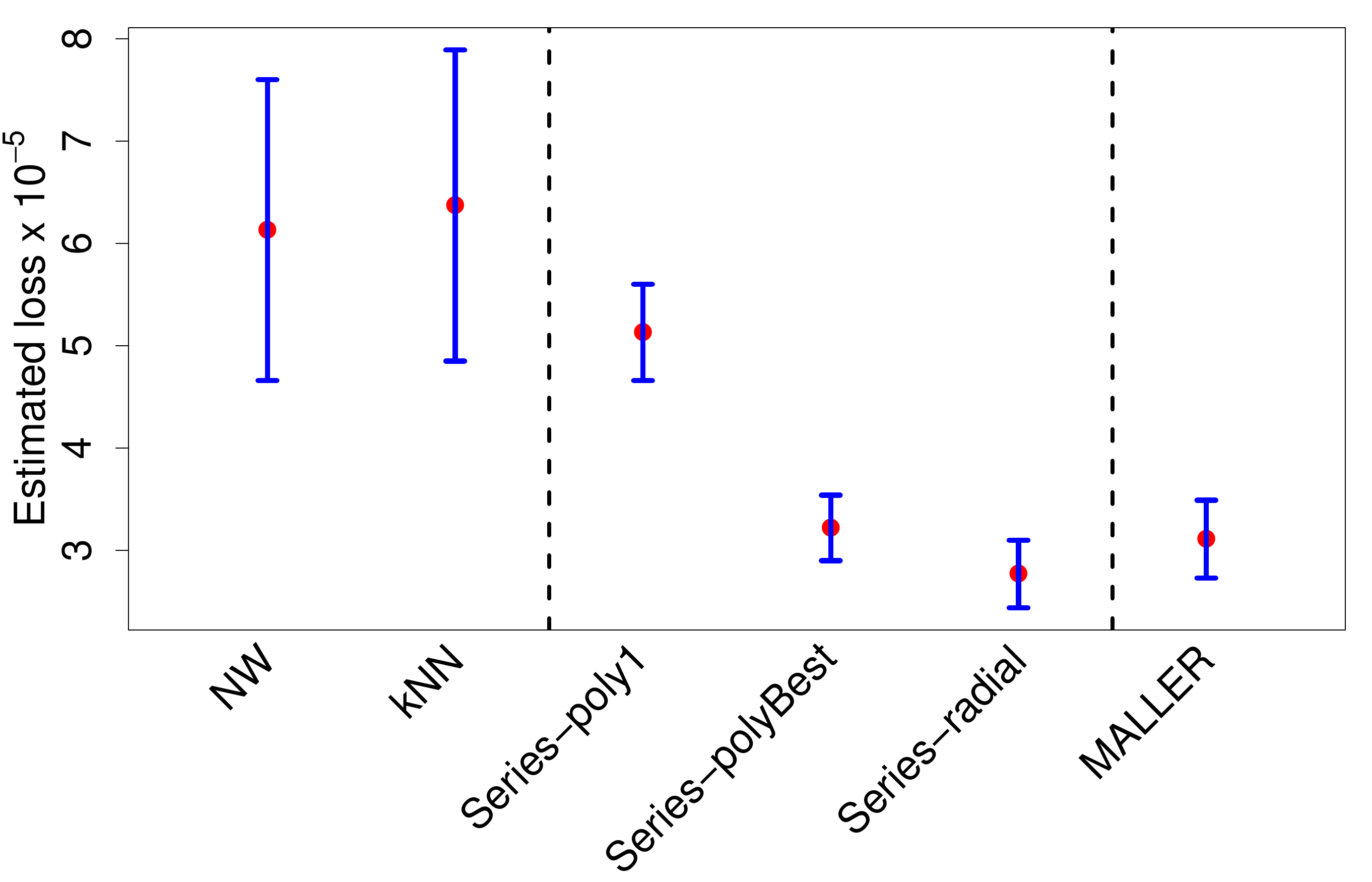}
  \caption{ \footnotesize Estimated loss of different estimators for redshift prediction; see Table~\ref{tab::spec}.  Bars represent standard errors. For visibility, we have divided the estimators into 3 groups: classical NW kernel and kNN smoothers ({\em left}),  Series/RKHS-type estimators  ({\em center}), and the manifold regression estimator \emph{MALLER} ({\em right}). Bars represent standard errors.}
  \label{fig::LossesSpec}
\end{figure}





\subsection{Scalability}

\subsubsection*{Increasing Dimension}
In terms of computational speed, the spectral series estimator has a clear competitive edge in high dimensions relative local regression procedures and a least-squares (LS) implementation of Eq.~\ref{eq::regularization}  that does not take advantage of orthogonal bases (see, e.g.,~\cite[p.~215]{Belkin:Niyogi:04} for a LS implementation of SSL learning on manifolds). We illustrate the differences with a one-dimensional manifold embedded in $d$ dimensions. 
Let
 $Y|\x \sim N(\theta(\x),0.5),$
where the points $\x=(x_1,\ldots,x_d)$ lie on a unit circle in $\mathbb{R}^d$,
  and $\theta(\x)$ is the angle corresponding to the position of $\x$. For simplicity, we simulate data uniformly on  the circle; i.e., we let $\theta(\x) \sim Unif(0,2\pi)$.

Figure \ref{fig::circle} summarizes the results. In terms of estimated loss (left panel),  \textit{Series} performs better than \textit{MALLER}, and it has a statistical performance similar to the least-squares implementation of kernel ridge regression (\textit{KRR-LS}). As predicted by the theory, the loss of \textit{Series} does not depend on the ambient dimension $d$. 
Moreover, the computational time of \textit{Series} is nearly constant as a function of the dimension $d$ (right panel). \textit{KRR-LS} is slower than \textit{Series},\footnote{Cross-validation of {\em Series} is fast due to the {\em orthogonality} of the basis. If we compute  the expansion coefficients $\hat{\beta}_{\varepsilon,j}$ (Eq.~\ref{eq::betahat}) for all $j \leq J_{\rm{max}}$, then we do not need to recompute these coefficients for other models with the same kernel and $J \leq J_{\rm{max}}$ components in the series expansion. The least squares implementation of Eq.~\ref{eq::regularization}, on the other hand, requires recomputing the expansion coefficients for each choice of the smoothing parameter $J$.} and \textit{MALLER} becomes computationally intractable as $d$ increases.  For $d=2500$ and $n=2000$, each  fit with \textit{MALLER} takes an average of 354 seconds (6 minutes) on an Intel i7-4800MQ CPU 2.70GHz processor, compared to 72 seconds for \textit{Series}. 

 \begin{figure}[H]
 	\begin{center}
 		\subfloat{    \includegraphics[width=2.45in,angle=-0]{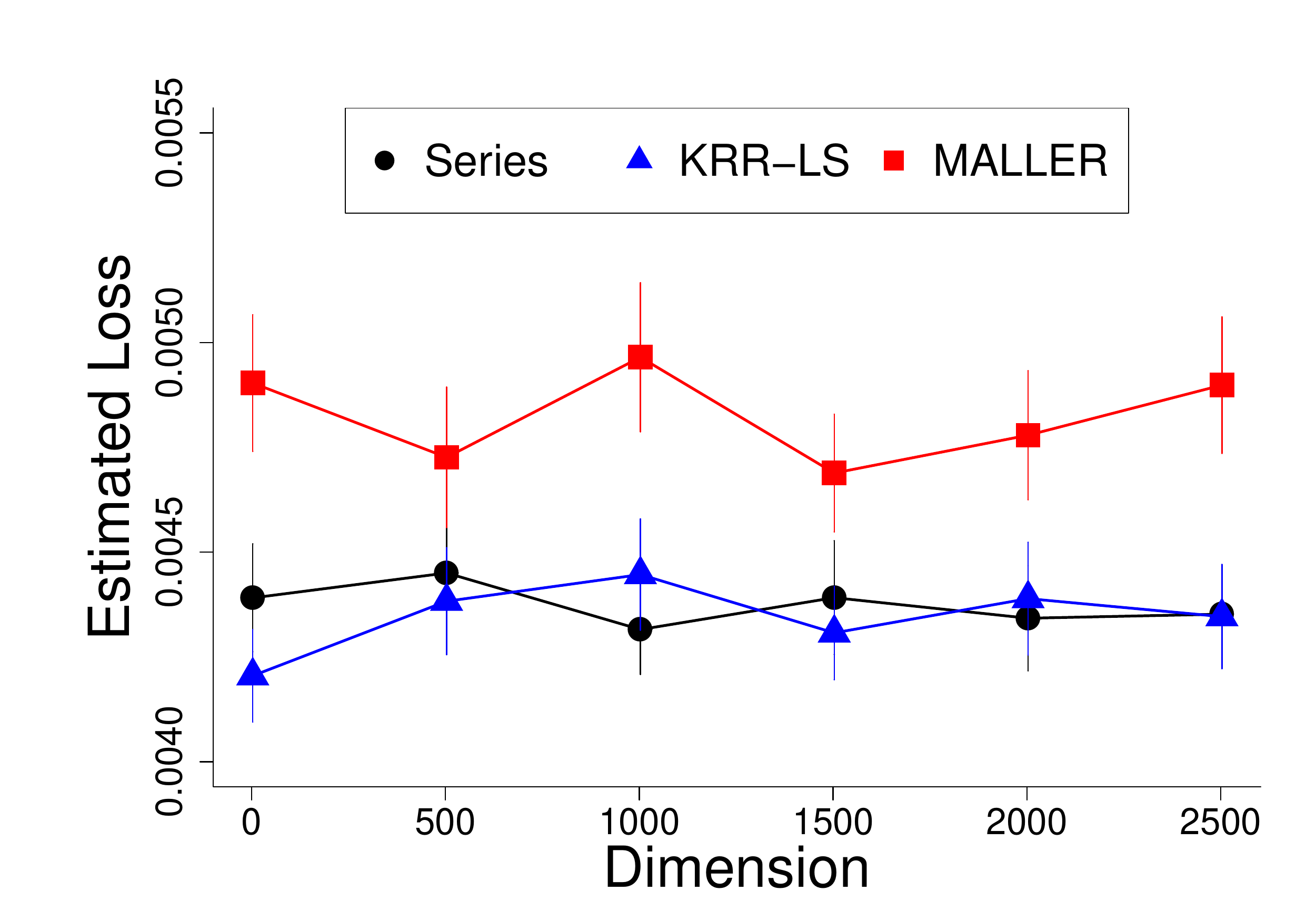} }
 		\subfloat{    \includegraphics[width=2.45in,angle=-0,page=2]{circleComparisons.pdf} }
 	\end{center}
 	\caption{ \footnotesize Increasing the dimension (number of variables) $d$ for a circle embedded in $\mathbb{R}^d$. Estimated loss ({\em left}) and computational time  ({\em right}) as a function of the dimension  $d$ for different regression estimators.}
 	\label{fig::circle}
 \end{figure}
 


\subsubsection*{Increasing Sample Size}
Here we revisit the redshift prediction problem in Sec.~\ref{eq:SDSS_spectra} using galaxy spectra from SDSS DR 12.\footnote{\url{http://www.sdss.org/dr12/}} We increase the size of the training set for a fixed number of $1000$ validation spectra and $1000$ test spectra. Fig.~\ref{fig::RSVD} indicates massive 
 payoffs in implementing Randomized SVD ({\em Series RSVD}) for large data sets;  see discussion in Sec.~\ref{sec::scalability}.    
Even without parallelization, we are able to cut down the computational time with a factor of 15 (right panel) with almost no decrease in statistical performance (left panel). The run time for {\em SVD} and {\em KRR-LS} when the sample size n=11200 (and the dimension $d=3431$) is about  5 hours on an Intel i7-4800MQ CPU 2.70GHz processor. With Randomized SVD, the same regression takes about 20 minutes. 
 \begin{figure}[H]
  \begin{center}
  \subfloat{    \includegraphics[width=2.45in,angle=-0]{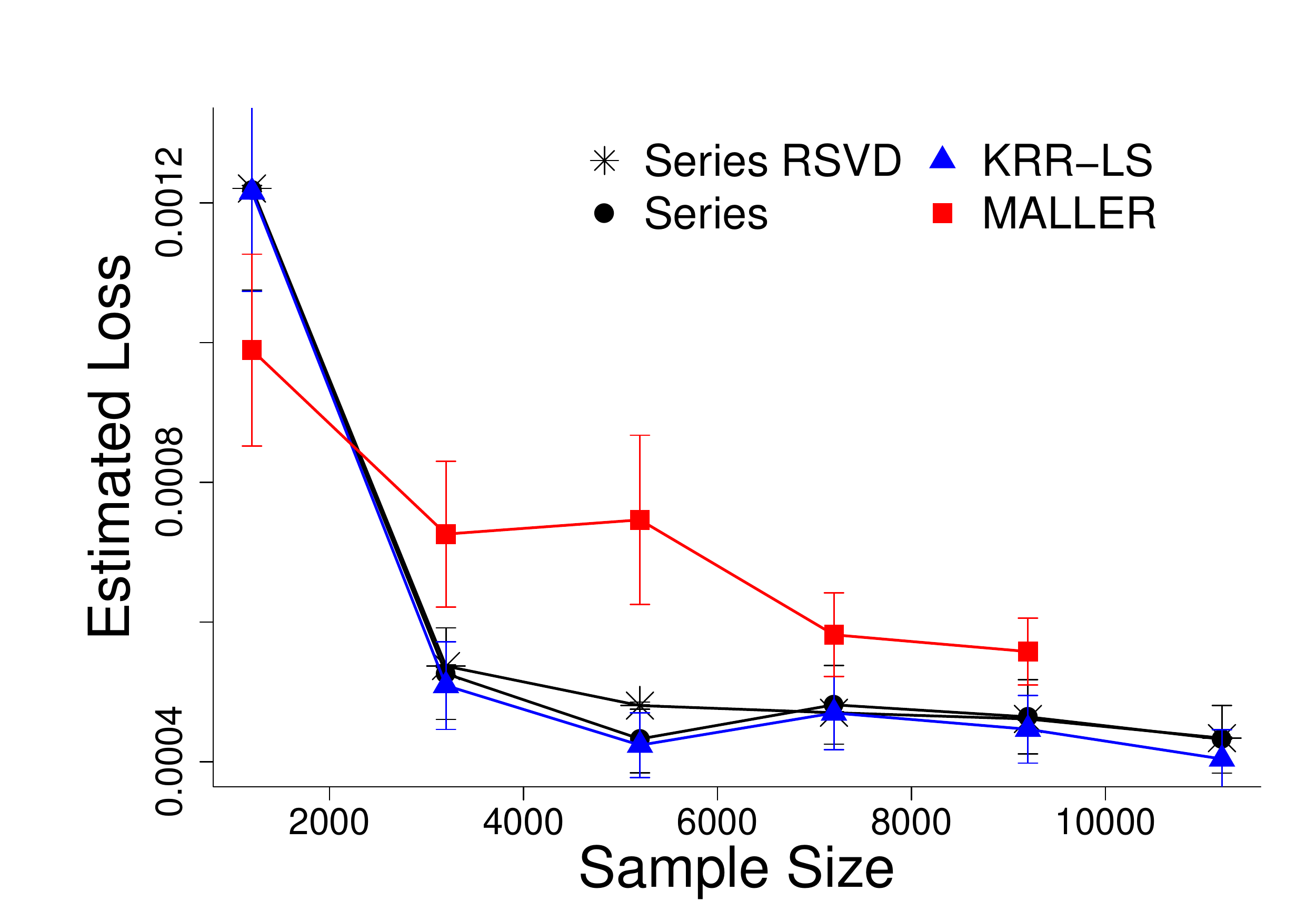} }
\subfloat{    \includegraphics[width=2.45in,angle=-0,page=2]{randomizedSVDSpectraNew.pdf} }
  \end{center}
  \caption{ \footnotesize Increasing the size of the training set for redshift prediction using SDSS galaxy spectra. Estimated loss ({\em left}) and computational time  ({\em right}) as a function of the sample size for different regression estimators. Note that Randomized SVD ({\em Series RSVD}) dramatically reduces the computational time, see right, for large sample sizes with little impact on the statistical performance, see left.}
  \label{fig::RSVD}
\end{figure}

\comment{The reviewer asked us to report more details (compared with other competing methods) on the computational shortcuts in Sec 2.6 on Scalability. Maybe we can illustrate the time savings with randomized SVD (and memory savings by thresholding) for the SDSS spectra? We could compare SVD, randomized SVD and, if possible, include some data points for MALLER and KRR-LSS.}

  \comment{Using higher-order polynomials (e.g., \emph{series-poly2}), however, does {\em not} improve the results further. The reason is the high dimension of the problem. In $\mathbb{R}^p$, the kernel $k(x,y)=(\langle x,y\rangle + 1)^q$ has a total of $M = {p+q \choose d}$ eigenfunctions that span the space of polynomials of degree $q$. For $p=3501$ and $q=1$, we have 3506 eigenfunctions already. Adding more eigenfunctions seem to only increase the variance of the series estimator, even when the functions are chosen nonlinearly according to decreasing values of $|\hat{\beta}_j|^2$, or when penalized by  $|\hat{\beta}_j|^2/\lambda_j$ as in \emph{KRR}.}

\section{Discussion}~\label{sec::conclusions}
Our spectral series method can handle complex high-dimensional data objects in many settings where traditional nonparametric methods either perform poorly or are computationally intractable. The series method offers a compression of the data in terms of Fourier coefficients; it is computationally efficient (with regards to the dimension and size of the sample), and it returns orthogonal basis functions that {\em adapt} to low-dimensional structure in the data distribution. As a result, there is no need for cumbersome tensor products in high dimensions. 

Our work shows that for a Gaussian kernel with a flexible bandwidth, the computed eigenfunctions form a Fourier-like orthogonal basis for expressing smoothness relative to the underlying data distribution. More precisely, if a function is smooth with respect to the data distribution, then it is sparse in the eigenbasis with respect to the $L^2$ norm, and vice versa (Theorem~\ref{prop::errordecay}). Indeed, in the limit of the sample size $n \rightarrow \infty$, spectral series with a Gaussian kernel can be seen as a generalization of Fourier series to high dimensions and sparse structure (Sec.~\ref{sec::kernel}).

The two main theorems \ref{theorem::loss_fixedkernel} and \ref{theorem::loss_epskernel} provide theoretical bounds on the loss of the final regression estimator for a standard fixed RKHS setting as well as a setting where the kernel bandwidth varies with the sample size $n$. 
We show that spectral series regression with a Gaussian kernel is adaptive to {\em intrinsic} dimension when the bandwidth $\varepsilon_n \rightarrow 0$ (Corollary \ref{corollary:manifold_example}). In the case of a submanifold with dimension $r$ embedded in $\mathbb{R}^d$, the convergence rate of the estimator depends on the manifold dimension $r$ rather than the ambient dimension $d$. The adaption occurs automatically and does not involve manifold estimation. 
 Unlike~\cite{Bickel:Li:2007}, there is also no need to estimate the dimension of the manifold. We have found 
 that unless the goal is manifold estimation, there is little advantage in using manifold and local linear regression methods. 
 Such methods quickly become computationally intractable in high dimensions without a prior dimension reduction. On the other hand, the computational speed of spectral series does {\em not} depend on the ambient or intrinsic dimension of the data.   Moreover, it is unclear how manifold-based methods behave in more complex settings where there is sparse structure (e.g., high-density regions and clusters) but no well-defined submanifold.  


 Because of the close connection between spectral series and SVMs, we expect that our new findings (regarding adaptiveness, choice of kernel and the bandwidth)  will apply to kernel-based regularized empirical risk minimizers as well.  Indeed, our empirical results (Tables \ref{tab::faces} and \ref{tab::spec}) confirm that the performance of KRR using a Gaussian kernel with a flexible bandwidth is similar to that of spectral series regression. This suggests that one can exploit the advantages of spectral series in terms of interpretation, visualization, and analysis
 without any real down-sides. In the process of analyzing the performance of the spectral series estimator, we  shed light on the empirical success of SVMs  for sparse data, 
and we unify ideas from Fourier analysis, kernel machine learning and spectral clustering.  

Future work includes deriving tighter bounds for the convergence rate of spectral series and kernel-based empirical regularizers.  We believe that our estimated rates are on the conservative side as our derivations assume that the eigenvectors need to be accurately estimated. Empirical experiments, however, indicate that spectral series with approximate eigenvectors already outperform the $k$-nearest neighbor estimator which is {\em minimax optimal} with respect to local intrinsic dimension~\citep{kpotufe2011k}.   In a separate paper, we will discuss extensions of spectral series to estimating other unknown functions 
 (e.g., conditional densities, density ratios and likelihoods) for high-dimensional complex data and distributions.  Another interesting research question is whether one can further improve the performance of spectral series approaches by adaptive basis selection and nonlinear estimators that threshold the series expansion coefficients $|\hat \beta_{j}|$ as in wavelet thresholding~\cite{Mallat:2009}.
 


In the online supplementary materials, we include sample {\tt R} code for the spectral series estimator. This code has however not been optimized for speed, as we will leave the large-scale deployment on parallel platforms to future work.

\section{Acknowledgments}
We are grateful to Ronald R. Coifman and Larry Wasserman for the original discussions that led to this work. We would also like to thank Peter Freeman, Cosma Shalizi and Ryan Tibshirani for insightful comments on the manuscript.  This work was partially supported by the Estella Loomis McCandless Professorship, \emph{Conselho Nacional de Desenvolvimento Cient\'ifico e Tecnol\'ogico} (grant 200959/2010-7),   \emph{Funda\c{c}\~ao de Amparo \`a Pesquisa do Estado de S\~ao Paulo} (2014/25302-2), and NSF DMS-1520786.

\comment{The spectral series method allows one to handle complex high-dimensional data objects where traditional nonparametric methods may fail. The method is simple to implement, offers a compression of the data in terms of Fourier coefficients, and returns basis functions that are useful by themselves for analyzing and visualizing the data. In particular, for a Gaussian kernel, our work shows that the eigenfunctions form a Fourier-like basis for expressing smoothness relative to the underlying data distribution. \cite{Shi:EtAl:09}, for example, use selected eigenvectors of a radial kernel to obtain clustering information about a data distribution from its iid samples. For high-dimensional regression, we propose the series method. We have shown that smoothness relative to the data distribution, as defined by a diffusion operator, and sparsity in the eigenbasis of the operator are really the same thing. We have also provided theoretical guarantees of the spectral series estimator. 
Numerical experiments indicate that the performance of the spectral series method is equivalent to kernel ridge regression, with the Gaussian kernel yielding significantly better performance than polynomial kernels. Our method is also superior to recent locally linear and manifold regression methods (which are based on estimating exterior derivatives) in terms of computational efficiency. 
Finally, the spectral series approach can be generalized to other high-dimensional inference problems; including, nonparametric estimation of conditional densities and density ratios in high dimensions. These extensions will be discussed in separate papers.  
}

\bibliographystyle{chicago}
{ \footnotesize 
\bibliography{paper,ABCDE}
}

\appendix

\section{Proofs for Bounds on the Regression Estimator}


We start by stating some useful lemmas.

\begin{lemma}  \label{lemma::LaplaceBeltrami}
\cite[Proposition 3]{Coifman:Lafon:06} For $f \in C^3(\X)$ and $ x \in \X \setminus \partial \X$,
$$ -\lim_{\varepsilon \rightarrow 0} G_{\varepsilon}^{*} =  \triangle.$$
If  $\mathcal{X}$ is a compact $C^\infty$ submanifold of $\mathbb{R}^d$, then $\triangle$ is the psd Laplace-Beltrami operator of  $\mathcal{X}$ defined by $\triangle f(x)= - \sum_{j=1}^{r}\frac{\partial^2f}{\partial s_j^2}(x)$, where $(s_1,\ldots,s_r)$ are the normal coordinates of the tangent plane at $x$.
\end{lemma}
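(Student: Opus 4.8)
The plan is to follow the pointwise ``Laplace-type'' asymptotic expansion of the diffusion operator introduced by Coifman and Lafon; the statement is essentially their Proposition~3, so the work lies in recalling that argument and keeping track of the normalization of $\varepsilon$. Throughout, ``$-\lim_{\varepsilon\to 0}G_\varepsilon^{*}=\triangle$'' is read as the pointwise claim that $-G_\varepsilon^{*}f(x)\to\triangle f(x)$ for every $f\in C^3(\X)$ and every interior point $x\in\X\setminus\partial\X$ (assuming, as in Coifman--Lafon, that the density $p$ is smooth enough for the expansions below).

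First I would establish the basic expansion. Fix $x$ in the interior and write $k_\varepsilon(x,y)=g(\|x-y\|/\sqrt\varepsilon)$ with $g$ decaying rapidly (for the Gaussian used here, $g(t)=e^{-t^2/4}$). For $\phi\in C^3$, substitute $y=x+\sqrt\varepsilon\,u$, Taylor-expand $\phi(x+\sqrt\varepsilon\,u)$, and use that the integral concentrates in a shrinking ball around $x$ and that all odd moments of $g$ vanish by radial symmetry; this gives
$$\int_\X k_\varepsilon(x,y)\,\phi(y)\,dy \;=\; \varepsilon^{d/2}\Bigl(m_0\,\phi(x)\;-\;\tfrac{\varepsilon m_2}{2}\,\triangle\phi(x)\;+\;\text{(lower order)}\Bigr),$$
where $m_0=\int_{\R^d}g(\|u\|)\,du$, $m_2=\int_{\R^d}u_1^2\,g(\|u\|)\,du$, and the remainder is uniform on compact subsets of the interior since $\phi\in C^3$; with the bandwidth convention of the paper one checks $m_2/(2m_0)=1$. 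I would then apply this in turn to $\phi=p$ (to get $p_\varepsilon$), to $\phi=p/p_\varepsilon$ (to get $p_\varepsilon^{*}(x)=\int k_\varepsilon^{*}(x,y)\,dP(y)$), and to $\phi=fp/p_\varepsilon$ (to get $\int k_\varepsilon^{*}(x,y)f(y)\,dP(y)$), each time pulling out the appropriate power of $\varepsilon$ and expanding the resulting ratios as geometric series. The first-order terms proportional to $\triangle p/p$ enter $A_\varepsilon^{*}f(x)=\bigl[\int k_\varepsilon^{*}f\,dP\bigr]/p_\varepsilon^{*}(x)$ with opposite signs and cancel exactly — this is precisely the removal of the design-bias drift $\frac{\nabla p}{p}\cdot\nabla f$ noted in Section~\ref{sec::bias} — leaving $A_\varepsilon^{*}f(x)=f(x)-\varepsilon\,\triangle f(x)+\text{(lower order)}$, hence $G_\varepsilon^{*}f(x)=\varepsilon^{-1}(A_\varepsilon^{*}f(x)-f(x))\to-\triangle f(x)$.

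For the second assertion, when $\X$ is a compact $C^\infty$ submanifold I would rerun the same expansion in geodesic normal coordinates $(s_1,\dots,s_r)$ centered at $x$. There the Riemannian volume element is $1+O(\|s\|^2)$ and the chordal distance $\|x-y\|$ agrees with the geodesic distance up to an $O(\|s\|^3)$ error, so both corrections affect only the lower-order remainder; the leading second-order term becomes $\sum_{j=1}^{r}\partial^2 f/\partial s_j^2(x)$, which equals $-\triangle f(x)$ for the psd Laplace--Beltrami operator of $\X$ as defined in the statement, and the density factors cancel exactly as in the Euclidean case.

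The main obstacle is the uniform control of the remainders. In the manifold case this requires uniform bounds on the curvature and on the chord-versus-geodesic discrepancy over $\X$, and it is here that the hypothesis $x\notin\partial\X$ is essential: near the boundary the kernel mass is truncated, the symmetry that annihilates odd moments fails, and a surviving first-order (gradient) term prevents convergence to $\triangle$ — so away from the boundary is exactly the regime in which the cancellations go through and the limit is the pure Laplacian.
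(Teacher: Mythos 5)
The paper does not prove this lemma at all: it is imported wholesale by citation to Proposition~3 of Coifman and Lafon, so there is no internal proof to compare against. Your sketch is a faithful reconstruction of that external argument and its structure is correct: the change of variables $y=x+\sqrt{\varepsilon}\,u$, the vanishing of odd moments by radial symmetry, the check that $m_2/(2m_0)=1$ for the kernel $e^{-\|x-y\|^2/(4\varepsilon)}$, the exact cancellation of the $\frac{\nabla p}{p}\cdot\nabla f$ drift under the $k_\varepsilon^{*}=k_\varepsilon/(p_\varepsilon(x)p_\varepsilon(y))$ normalization, and the role of the interior-point hypothesis are all right. One imprecision worth flagging in the manifold case: the chord-versus-geodesic discrepancy and the Riemannian volume correction do \emph{not} merely contribute to the lower-order remainder. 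They produce a genuine order-$\varepsilon$ term of the form $\varepsilon\,\omega(x)f(x)$ (with $\omega$ a curvature-dependent function) in the expansion of $\int k_\varepsilon(x,y)f(y)\,dy$; the reason it disappears from $G_\varepsilon^{*}f$ is that the same $\varepsilon\,\omega(x)$ appears in the expansion of the normalizing integral $p_\varepsilon^{*}(x)$ and cancels in the ratio defining $A_\varepsilon^{*}$ — exactly the mechanism you already invoke for the density terms. With that correction your argument matches the cited proof; since the lemma is used here only as a black box (in Lemma~\ref{lemma:conv_smoothfun} and Corollary~\ref{corollary::loss_epskernel}), citing Coifman--Lafon as the paper does is the appropriate level of rigor, but your reconstruction would stand on its own once the curvature cancellation is stated as a cancellation rather than as a negligible remainder.
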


\begin{lemma}  \label{lemma::s_eps}$\forall x \in \X$, 
$$ \frac{a}{b} \leq s_{\varepsilon}(x) \leq \frac{b}{a}$$ 
\end{lemma}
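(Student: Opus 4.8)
The plan is to identify $s_\varepsilon$ with the density of the smoothed measure $S_\varepsilon$ relative to $P$, and then sandwich that density between two explicit constants built from the density bounds in (A1).

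First I would unwind the definitions. From the formula given for $S_\varepsilon$ one reads off that $S_\varepsilon\ll P$ with Radon--Nikodym derivative
$$s_\varepsilon(x)=\frac{dS_\varepsilon}{dP}(x)=\frac{p_\varepsilon(x)}{\int_\X p_\varepsilon(z)\,dP(z)},\qquad p_\varepsilon(x)=\int_\X k_\varepsilon(x,y)\,dP(y);$$
a one-line computation using $k_\varepsilon(x,y)=k_\varepsilon(y,x)$ shows this $s_\varepsilon$ is stationary for the transition kernel $a_\varepsilon(x,\cdot)$ and is the continuum analogue of $\hat s_\varepsilon$ in Eq.~\ref{eq::s_hat}. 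So it suffices to bound the right-hand ratio uniformly in $x$.

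Next I would control the numerator. Writing $c_\varepsilon(x)=\int_\X k_\varepsilon(x,y)\,dy$ for the Lebesgue mass of the kernel profile at $x$, the ratio $p_\varepsilon(x)/c_\varepsilon(x)$ is a weighted average of the values $p(y)$, so (A1) gives $a\,c_\varepsilon(x)\le p_\varepsilon(x)\le b\,c_\varepsilon(x)$ for every $x\in\X$. For the radially symmetric kernels considered here $c_\varepsilon(x)$ is constant in $x$ --- exactly so when $\X$ has no boundary, and up to a vanishing correction as $\varepsilon\to0$ otherwise --- so all values of $p_\varepsilon$ lie in a common interval $[a c_\varepsilon,\,b c_\varepsilon]$. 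The normalising constant $\int_\X p_\varepsilon\,dP$ is then a $P$-average of numbers from that interval, hence also in $[a c_\varepsilon,\,b c_\varepsilon]$. Dividing, the factor $c_\varepsilon$ cancels and I obtain $a/b\le s_\varepsilon(x)\le b/a$.

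The main obstacle is exactly this constancy of $c_\varepsilon(\cdot)$: when $\X$ has a boundary, a point within $O(\sqrt\varepsilon)$ of $\partial\X$ integrates only part of the kernel, so $c_\varepsilon$ oscillates by a bounded factor and one gets the stated inequality with $a,b$ replaced by slightly worse constants --- a correction that disappears in the small-bandwidth regime of Theorem~\ref{theorem::loss_epskernel}. If one prefers to sidestep this point, the identical averaging argument run with the uniform kernel bounds (A2) in place of $c_\varepsilon$ yields the weaker but sufficient bound $m/M\le s_\varepsilon(x)\le M/m$, which is all that is used wherever Lemma~\ref{lemma::s_eps} is later invoked.
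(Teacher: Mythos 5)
Your proposal is correct and follows essentially the same route as the paper: the paper's own (one-line) proof also writes $s_\varepsilon(x)=p_\varepsilon(x)/\int_\X p_\varepsilon\,dP$, sandwiches it between $\inf p_\varepsilon/\sup p_\varepsilon$ and $\sup p_\varepsilon/\inf p_\varepsilon$, and then invokes $a\int_\X k_\varepsilon(x,y)\,dy\le p_\varepsilon(x)\le b\int_\X k_\varepsilon(x,y)\,dy$, implicitly cancelling the factor $\int_\X k_\varepsilon(x,y)\,dy$ as if it were independent of $x$. You have correctly identified that this cancellation is the only delicate point (it fails exactly near $\partial\X$ for a radial kernel on a compact support), a caveat the paper does not acknowledge, and your fallback via (A2), giving $m/M\le s_\varepsilon(x)\le M/m$, is a clean way to obtain a uniform two-sided bound with no boundary issue and suffices everywhere the lemma is used.
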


\begin{proof} $\forall  x \in \X$,  
$$\frac{  \inf_{x \in \X} p_\varepsilon(x) }{  \sup_{x \in \X} p_\varepsilon(x) } \leq s_{\varepsilon}(x) \leq \frac{ \sup_{x \in \X}p_\varepsilon(x) }{ \inf_{x \in \X}p_\varepsilon(x) },$$
where $a  \int_\X \! k_\varepsilon(x,y) dy \leq p_\varepsilon(x)  \leq b \int_\X \! k_\varepsilon(x,y) dy$. 
\end{proof}

\begin{lemma}\label{lemma::Lbias_conversion}
For $f  \in   L^2(\X,P)$,
$$L_{\rm{bias}} \leq \frac{b}{a}  \sum_{j>J} |\beta_{\varepsilon,j}|^2.$$ 
\end{lemma}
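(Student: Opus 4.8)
The plan is to first pass to the measure $S_\varepsilon$, with respect to which the eigenfunctions are orthonormal, exploit Parseval there, and then convert back to $P$ using the uniform bounds on the Radon--Nikodym derivative from Lemma~\ref{lemma::s_eps}.

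First I would observe that since $0<a\le p(x)\le b<\infty$ and the kernel weights are bounded above and below (assumptions (A1)--(A2)), the density $s_\varepsilon(x)$ of $S_\varepsilon$ with respect to $P$ is bounded away from $0$ and $\infty$; hence $L^2(\X,P)$ and $L^2(\X,S_\varepsilon)$ coincide as sets with equivalent norms, and $f\in L^2(\X,P)$ admits the expansion $f=\sum_{j=0}^{\infty}\beta_{\varepsilon,j}\psi_{\varepsilon,j}$ convergent in $L^2(\X,S_\varepsilon)$, with $\beta_{\varepsilon,j}=\langle f,\psi_{\varepsilon,j}\rangle_\varepsilon$. Since $f_{\varepsilon,J}=\sum_{j=0}^{J}\beta_{\varepsilon,j}\psi_{\varepsilon,j}$ is the partial sum, we have $f-f_{\varepsilon,J}=\sum_{j>J}\beta_{\varepsilon,j}\psi_{\varepsilon,j}$, and by the orthonormality relation $\langle\psi_{\varepsilon,i},\psi_{\varepsilon,j}\rangle_\varepsilon=\delta_{i,j}$ (Parseval),
$$\int_{\X}|f(x)-f_{\varepsilon,J}(x)|^2\,dS_\varepsilon(x)=\sum_{j>J}|\beta_{\varepsilon,j}|^2.$$

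Next I would change measures. Writing $dS_\varepsilon(x)=s_\varepsilon(x)\,dP(x)$, we get $dP(x)=s_\varepsilon(x)^{-1}\,dS_\varepsilon(x)$, so for the non-negative integrand $g=|f-f_{\varepsilon,J}|^2$,
$$L_{\rm bias}=\int_{\X}g(x)\,dP(x)=\int_{\X}\frac{g(x)}{s_\varepsilon(x)}\,dS_\varepsilon(x)\le\Big(\sup_{x\in\X}\frac{1}{s_\varepsilon(x)}\Big)\int_{\X}g(x)\,dS_\varepsilon(x).$$
By Lemma~\ref{lemma::s_eps}, $s_\varepsilon(x)\ge a/b$ for all $x\in\X$, hence $\sup_x s_\varepsilon(x)^{-1}\le b/a$. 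Combining this with the Parseval identity above yields
$$L_{\rm bias}\le\frac{b}{a}\int_{\X}|f-f_{\varepsilon,J}|^2\,dS_\varepsilon=\frac{b}{a}\sum_{j>J}|\beta_{\varepsilon,j}|^2,$$
which is the claim.

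I do not expect a genuine obstacle here; the only point requiring care is the bookkeeping between the two $L^2$ spaces — making sure that the eigenbasis expansion of $f$ is legitimate (which follows from $\{\psi_{\varepsilon,j}\}$ being an orthonormal basis of $L^2(\X,S_\varepsilon)$) and that the measure comparison is applied in the correct direction, both of which are immediate from assumptions (A1)--(A2) and Lemma~\ref{lemma::s_eps}.
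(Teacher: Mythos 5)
Your proof is correct and follows essentially the same route as the paper: Parseval's identity in $L^2(\X,S_\varepsilon)$ followed by the change of measure bounded via Lemma~\ref{lemma::s_eps}. The paper states these two steps tersely; you have simply written out the details.
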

\begin{proof}
From the orthogonality property of the basis functions $\psi_j$, we have that 
$$\int_{\X}  | f(x) - f_{\varepsilon,J} (x) |^2 dS_{\varepsilon}(x) =  \sum_{j>J} |\beta_{\varepsilon,j}|^2.$$
The result follows from Lemma~\ref{lemma::s_eps}.
\end{proof}

\begin{lemma}  \label{lemma::phi_error}
Under the same assumptions as in Proposition~\ref{prop::eigenvec_error}, it holds that
$$\|\varphi_{\varepsilon,j} - \hat{\varphi}_{\varepsilon,j}\|_ {L^2(\mathcal{X},P)} = O_P\left(\frac{\gamma_n}{\delta_{\varepsilon,j}}\right),$$
where  $\gamma_n = \sqrt{\frac{ \log(1/\varepsilon_n)}{n \varepsilon_n^{d/2}}}$ and  $\delta_{\varepsilon,j}=\lambda_{\varepsilon,j}-\lambda_{\varepsilon,j+1}.$
\end{lemma}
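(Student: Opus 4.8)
The plan is to obtain this as a corollary of Proposition~\ref{prop::eigenvec_error}, since $\varphi_{\varepsilon,j}=\psi_{\varepsilon,j}s_\varepsilon$ and $\hat\varphi_{\varepsilon,j}=\hat\psi_{\varepsilon,j}\hat s_\varepsilon$ differ only through the eigenfunction error (already controlled) and a low-order perturbation of the smoothed stationary density. Throughout, $\hat\psi_{\varepsilon,j}$ denotes the Nystr\"om extension (Eq.~\ref{eq::nystrom}) and $\hat s_\varepsilon$ the natural kernel-smoothed extension $\hat s_\varepsilon(x)=\hat p_\varepsilon(x)/\big(\tfrac1n\sum_{l}\hat p_\varepsilon(X_l)\big)$ of Eq.~\ref{eq::s_hat}, so that both $\varphi_{\varepsilon,j}$ and $\hat\varphi_{\varepsilon,j}$ are genuine elements of $L^2(\X,P)$, consistently with the statement of Proposition~\ref{prop::eigenvec_error}. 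The first step is the elementary splitting
\[
\varphi_{\varepsilon,j}-\hat\varphi_{\varepsilon,j}=(\psi_{\varepsilon,j}-\hat\psi_{\varepsilon,j})\,s_\varepsilon+\hat\psi_{\varepsilon,j}\,(s_\varepsilon-\hat s_\varepsilon),
\]
to which I would apply the triangle inequality in $L^2(\X,P)$ and bound the two summands separately.

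For the first summand, $\|(\psi_{\varepsilon,j}-\hat\psi_{\varepsilon,j})s_\varepsilon\|_{L^2(\X,P)}\le\|s_\varepsilon\|_\infty\|\psi_{\varepsilon,j}-\hat\psi_{\varepsilon,j}\|_{L^2(\X,P)}$; Lemma~\ref{lemma::s_eps} gives $\|s_\varepsilon\|_\infty\le b/a$ and Proposition~\ref{prop::eigenvec_error} gives $\|\psi_{\varepsilon,j}-\hat\psi_{\varepsilon,j}\|_{L^2(\X,P)}=O_P(\gamma_n/\delta_{\varepsilon,j})$, so this piece is already of the required order. For the second summand, $\|\hat\psi_{\varepsilon,j}(s_\varepsilon-\hat s_\varepsilon)\|_{L^2(\X,P)}\le\|s_\varepsilon-\hat s_\varepsilon\|_\infty\|\hat\psi_{\varepsilon,j}\|_{L^2(\X,P)}$. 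Here $\|\psi_{\varepsilon,j}\|_{L^2(\X,P)}^2=\int_\X\psi_{\varepsilon,j}^2\,dP\le(b/a)\int_\X\psi_{\varepsilon,j}^2\,dS_\varepsilon=b/a$ by Lemma~\ref{lemma::s_eps} and the unit-norm normalization of $\psi_{\varepsilon,j}$ in $\langle\cdot,\cdot\rangle_\varepsilon$, so by the triangle inequality and Proposition~\ref{prop::eigenvec_error}, $\|\hat\psi_{\varepsilon,j}\|_{L^2(\X,P)}\le\sqrt{b/a}+O_P(\gamma_n/\delta_{\varepsilon,j})$.

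It remains to show $\|s_\varepsilon-\hat s_\varepsilon\|_\infty=O_P(\gamma_n)$. Writing $s_\varepsilon(x)=p_\varepsilon(x)/D$ with $D=\int_\X p_\varepsilon\,dP$ and $\hat s_\varepsilon(x)=\hat p_\varepsilon(x)/D_n$ with $D_n=\tfrac1n\sum_l\hat p_\varepsilon(X_l)$, assumption (A2) gives $D,D_n\ge m$ and $p_\varepsilon,D\le M$, so the quotient identity $\hat s_\varepsilon-s_\varepsilon=\big[(\hat p_\varepsilon-p_\varepsilon)D+p_\varepsilon(D-D_n)\big]/(D_nD)$ yields $\|s_\varepsilon-\hat s_\varepsilon\|_\infty\le (M/m^2)\big(\|\hat p_\varepsilon-p_\varepsilon\|_\infty+|D-D_n|\big)$. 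The first term is the uniform deviation of the smoothed density estimate from its mean, which is $O_P(\gamma_n)$ under $\varepsilon_n\to0$ and $n\varepsilon_n^{d/2}/\log(1/\varepsilon_n)\to\infty$ --- this is exactly the bound already underlying the proof of Proposition~\ref{prop::eigenvec_error} (cf.\ \cite{Gine:Guillou:02,Gine:Koltchinskii:2006}); and $|D-D_n|$ is dominated by $\|\hat p_\varepsilon-p_\varepsilon\|_\infty$ plus the deviation of the bounded empirical average $\tfrac1n\sum_l p_\varepsilon(X_l)$ from $D$, which is $O_P(1/\sqrt n)=O_P(\gamma_n)$. Hence the second summand is $O_P(\gamma_n)\big(\sqrt{b/a}+O_P(\gamma_n/\delta_{\varepsilon,j})\big)=O_P(\gamma_n)+O_P(\gamma_n^2/\delta_{\varepsilon,j})=O_P(\gamma_n/\delta_{\varepsilon,j})$, using $\gamma_n\le\mathrm{const}$ and $\delta_{\varepsilon,j}\le\lambda_{\varepsilon,0}=1$; adding the two summands closes the argument.

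The only nontrivial input beyond Proposition~\ref{prop::eigenvec_error} is the uniform rate $\|\hat p_\varepsilon-p_\varepsilon\|_\infty=O_P(\gamma_n)$, and I expect this to be the one step worth citing carefully rather than re-deriving, since it is already the backbone of the variance analysis; everything else is bookkeeping with the triangle inequality, Lemma~\ref{lemma::s_eps}, and the quotient estimate. A minor point to keep straight is the off-sample extension of $\hat s_\varepsilon$ and $\hat\psi_{\varepsilon,j}$, so that all the $L^2(\X,P)$ norms above refer to bona fide functions on $\X$.
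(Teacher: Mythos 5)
Your proof is correct and follows essentially the same route as the paper's: the same splitting $\varphi_{\varepsilon,j}-\hat\varphi_{\varepsilon,j}=(\psi_{\varepsilon,j}-\hat\psi_{\varepsilon,j})s_\varepsilon+\hat\psi_{\varepsilon,j}(s_\varepsilon-\hat s_\varepsilon)$, the same inputs (Proposition~\ref{prop::eigenvec_error}, Lemma~\ref{lemma::s_eps}, and the Gin\'e--Guillou uniform bound $\sup_x|\hat p_\varepsilon(x)-p_\varepsilon(x)|=O_P(\gamma_n)$), and the same bound on $\|\hat\psi_{\varepsilon,j}\|_{L^2(\X,P)}$ via its distance to $\psi_{\varepsilon,j}$. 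Your quotient-identity derivation of $\|\hat s_\varepsilon-s_\varepsilon\|_\infty=O_P(\gamma_n)$ just spells out a step the paper asserts with a ``Hence.''
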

\begin{proof} 
From \cite{Gine:Guillou:02}, $\sup_x  | \hat{p}_\varepsilon(x) - p_\varepsilon(x)| = O_P(\gamma_n)$. Hence, 
$$
\sup_x  | \hat{s}_\varepsilon(x) - s_\varepsilon(x)| = O_P(\gamma_n).  
 $$
By using Proposition~\ref{prop::eigenvec_error}, we conclude that
$$ \int_\X |\hat \psi_{\varepsilon,j}(x)|^2 dP(x) \leq  2\int_\X |\hat \psi_{\varepsilon,j}(x)-\psi_{\varepsilon,j}(x)|^2 dP(x)  + 2 \int_\X | \psi_{\varepsilon,j}(x)|^2 dP(x) = O_P\left(\frac{\gamma_n^2}{\delta_{\varepsilon,j}^2}\right) + C,$$
where $C$ is a constant. Write
 \begin{eqnarray*} | \varphi_{\varepsilon,j}(x) - \widehat{\varphi}_{\varepsilon,j}(x) |^2 &=&  | \psi_{\varepsilon,j}(x) s_{\varepsilon}(x) - \widehat{\psi}_{\varepsilon,j}(x) \widehat s_{\varepsilon}(x) | ^2 \\
  &\leq&   2  |\psi_{\varepsilon,j}(x)-\hat \psi_{\varepsilon,j}(x)|^2 |s_{\varepsilon}(x)|^2 +  2 |s_{\varepsilon}(x)-\widehat s_{\varepsilon}(x)|^2  |\hat \psi_{\varepsilon,j}(x)|^2.
 \end{eqnarray*}
Hence,
\begin{eqnarray*}\|\varphi_{\varepsilon,j} - \hat{\varphi}_{\varepsilon,j}\|_ {L^2(\mathcal{X},P)}^2 &\leq& 2 \sup_x |s_{\varepsilon}(x)|^2 \; \| \hat\psi_{\varepsilon,j}-\psi_{\varepsilon,j}\|_{L^2(\mathcal{X},P)}^2 +  2 \sup_x |\hat s_{\varepsilon}(x)-s_{\varepsilon}(x)|^2 \; \|\hat{\psi}_{\varepsilon,j}\|_ {L^2(\mathcal{X},P)}^2\\
&=& O_P\left(\frac{\gamma_n^2}{\delta_{\varepsilon,j}^2}\right)  + O_P(\gamma_n^2) \left(O_P\left(\frac{\gamma_n^2}{\delta_{\varepsilon,j}^2}\right) + C\right)\\ 
&=& O_P\left(\frac{\gamma_n^2}{\delta_{\varepsilon,j}^2}\right) .
\end{eqnarray*}
\end{proof}

\begin{lemma}  \label{lemma::Y_eigdev}
$\forall \  0 \leq j \leq J$,  it holds that
$$\left| \frac{1}{n}\sum_{i=1}^n  Y_i(\hat \varphi_{\varepsilon,j}(X_i)- \varphi_{\varepsilon,j}(X_i)) - \int_\X f(x)(\hat \varphi_{\varepsilon,j}(x)- \varphi_{\varepsilon,j}(x))dP(x) \right| =   O_P\left(\frac{1}{\sqrt{n}}\right).$$
\end{lemma}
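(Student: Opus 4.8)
The plan is to substitute $Y_i = f(X_i) + \epsilon_i$ and split the quantity of interest into two pieces,
\[
\frac{1}{n}\sum_{i=1}^n Y_i\big(\hat\varphi_{\varepsilon,j}(X_i)-\varphi_{\varepsilon,j}(X_i)\big) - \int_\X f(x)\big(\hat\varphi_{\varepsilon,j}(x)-\varphi_{\varepsilon,j}(x)\big)dP(x) \;=\; T_1 + T_2,
\]
where $T_1 = \frac1n\sum_i f(X_i)(\hat\varphi_{\varepsilon,j}-\varphi_{\varepsilon,j})(X_i) - \int_\X f\,(\hat\varphi_{\varepsilon,j}-\varphi_{\varepsilon,j})\,dP$ is a centred empirical average and $T_2 = \frac1n\sum_i \epsilon_i(\hat\varphi_{\varepsilon,j}-\varphi_{\varepsilon,j})(X_i)$ collects the noise contribution. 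The structural fact I would use throughout is the sample-splitting convention stated just before Section~\ref{sec::bias}: the estimated eigenfunctions, and hence $\hat\varphi_{\varepsilon,j}$, are built from an independent unlabeled sample, so $\hat\varphi_{\varepsilon,j}$ is independent of $(X_1,Y_1),\dots,(X_n,Y_n)$. I would therefore argue conditionally on that unlabeled sample, treating $\hat\varphi_{\varepsilon,j}$ as a fixed (bounded, by (A1)--(A2), Lemma~\ref{lemma::s_eps}, and the Nystr\"om representation) function, and un-condition at the end.

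For $T_1$: conditionally on the unlabeled sample, $T_1$ is the deviation of an average of i.i.d.\ mean-zero variables $f(X_i)(\hat\varphi_{\varepsilon,j}-\varphi_{\varepsilon,j})(X_i)$ from its mean, so its conditional variance equals $\tfrac1n\,\mathrm{Var}\big(f(X)(\hat\varphi_{\varepsilon,j}-\varphi_{\varepsilon,j})(X)\big) \le \tfrac1n\int_\X f^2(\hat\varphi_{\varepsilon,j}-\varphi_{\varepsilon,j})^2\,dP \le \tfrac{\|f\|_\infty^2}{n}\,\|\hat\varphi_{\varepsilon,j}-\varphi_{\varepsilon,j}\|_{L^2(\X,P)}^2$. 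Since $\|\hat\varphi_{\varepsilon,j}-\varphi_{\varepsilon,j}\|_{L^2(\X,P)} = O_P(1)$ (in fact $= O_P(\gamma_n/\delta_{\varepsilon,j})$ by Lemma~\ref{lemma::phi_error}), a conditional Chebyshev bound gives $T_1 = O_P(1/\sqrt n)$. For $T_2$: conditioning now on the unlabeled sample together with $X_1,\dots,X_n$, the $\epsilon_i$ are i.i.d., mean zero, variance $\sigma^2$, hence $\mathbb{E}[T_2^2 \mid \cdot] = \tfrac{\sigma^2}{n^2}\sum_{i=1}^n(\hat\varphi_{\varepsilon,j}-\varphi_{\varepsilon,j})^2(X_i)$; taking the expectation over $X_1,\dots,X_n$ (still conditional on the unlabeled sample) yields $\tfrac{\sigma^2}{n}\|\hat\varphi_{\varepsilon,j}-\varphi_{\varepsilon,j}\|_{L^2(\X,P)}^2 = O_P(1/n)$, so Markov's inequality again gives $T_2 = O_P(1/\sqrt n)$. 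Combining the two bounds with the triangle inequality finishes the proof.

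The only genuinely delicate point is the dependence of $\hat\varphi_{\varepsilon,j}$ on the sample; the independent-unlabeled-data convention is exactly what makes the conditioning legitimate, so I would invoke it explicitly. A secondary technical matter is ensuring the relevant moments are finite: boundedness of $f$ on the compact support $\X$ (or, alternatively, $f\in L^2(\X,P)$ together with the uniform boundedness of $\hat\varphi_{\varepsilon,j}-\varphi_{\varepsilon,j}$ that follows from (A1)--(A2), Lemma~\ref{lemma::s_eps} and the Nystr\"om formula) is what lets me pass from $\int_\X f^2(\hat\varphi_{\varepsilon,j}-\varphi_{\varepsilon,j})^2\,dP$ to a bound proportional to $\|\hat\varphi_{\varepsilon,j}-\varphi_{\varepsilon,j}\|_{L^2(\X,P)}^2$. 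Note that using Lemma~\ref{lemma::phi_error} one actually obtains the sharper rate $O_P\big(\gamma_n/(\sqrt n\,\delta_{\varepsilon,j})\big) = o_P(1/\sqrt n)$, but the stated $O_P(1/\sqrt n)$ is all that is needed in the subsequent variance analysis.
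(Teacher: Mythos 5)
Your proof is correct and follows essentially the same route as the paper's: condition on the independent unlabeled sample so that $\hat\varphi_{\varepsilon,j}$ is fixed, observe that $I$ is then the conditional mean of the empirical average, and apply Chebyshev with the second-moment control on $\hat\varphi_{\varepsilon,j}-\varphi_{\varepsilon,j}$ coming from (A4). Your decomposition $Y_i=f(X_i)+\epsilon_i$ into $T_1+T_2$ merely expands the single conditional variance $\V\bigl(Y_1(\hat\varphi_{\varepsilon,j}-\varphi_{\varepsilon,j})(X_1)\mid \tilde X_{1:n}\bigr)/n$ that the paper bounds in one step, so the two arguments are equivalent (and your version arguably makes the moment requirements on $f$ more transparent).
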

\begin{proof} Let $S=\frac{1}{n}\sum_{i=1}^n  Y_i(\hat \varphi_{\varepsilon,j}(X_i)- \varphi_{\varepsilon,j}(X_i))$ and $I= \int_\X f(x)(\hat \varphi_{\varepsilon,j}(x)- \varphi_{\varepsilon,j}(x))dP(x)$. According to Chebyshev's inequality, for any $M>0$,
\begin{eqnarray*} \CP{|S-I| \geq M}{\tilde{X}_1,\ldots,  \tilde{X}_n} &\leq& \frac{\mathbb{V} \left( S-I |  \tilde{X}_1,\ldots,  \tilde{X}_n \right)}{M^2} 
\leq \frac{\CV{Y_1 (\hat\varphi_{\varepsilon,j}(X_1)- \varphi_{\varepsilon,j}(X_1))}{\tilde{X}_1,\ldots,  \tilde{X}_n}}{nM^2}.\end{eqnarray*}
Hence, for any $M>0$,
$$\mathbb{P} \left(|S-I| \geq M \right) \leq \frac{\mathbb{V} \left( Y_1 (\hat\varphi_{\varepsilon,j}(X_1)- \varphi_{\varepsilon,j}(X_1))\right)}{nM^2}
\leq  \frac{\sigma}{nM^2} \left(\E |\hat\varphi_{\varepsilon,j}(X_1)- \varphi_{\varepsilon,j}(X_1)|^2\right)^{1/2},$$
where we in the last inequality apply the Cauchy-Schwarz inequality.
Under assumption (A4), we conclude the result of the lemma.
\end{proof}

\subsection{Bias}

\noindent{\bf Proof of Proposition~\ref{prop::bias_RKHS}.}
Let $\tilde{\psi}_{\varepsilon,0},\tilde{\psi}_{\varepsilon,1},\ldots$ be the eigenfunctions of the symmetric operator $\tilde{A}_\varepsilon (f)(x) = \int_{\mathcal{X}}
\tilde{a}_\varepsilon(x,y) f(y) dP(y)$. It follows from Mercer's theorem that 
$$\mathcal{H}_{\varepsilon,M} = \left\{f = \sum_j  \gamma_{\varepsilon,j}\tilde{\psi}_{\varepsilon,j}   :  \sum_j \frac{|\gamma_{\varepsilon,j}|^2}{\lambda_{\varepsilon,j}} \leq M \right\}.$$
If $f \in \mathcal{H}_{\varepsilon,M}$, then we can bound the bias with respect to the eigenbasis $\tilde{\psi}$:
$$   \sum_{j>J} | \gamma_{\varepsilon,j}|^2 \leq  \lambda_{\varepsilon,J} \sum_{j>J}  \frac{|\gamma_{\varepsilon,j}|^2}{\lambda_{\varepsilon,j}} 
 \leq  \lambda_{\varepsilon,J} \sum_{j}  \frac{|\gamma_{\varepsilon,j}|^2}{\lambda_{\varepsilon,j}} \leq M \lambda_{\varepsilon,J}. $$
 By construction,
 $$\gamma_{\varepsilon,j} = \int_\X f(x) \tilde{\psi}_{\varepsilon,j}(x) dP(x)$$ 
 and
 $$\beta_{\varepsilon,j} = \int_\X f(x) \psi_{\varepsilon,j}(x) dS_\varepsilon(x) = \int_\X f(x) \tilde{\psi}_{\varepsilon,j}(x) \sqrt{s_\varepsilon(x)} dP(x) \leq  \sqrt{\frac{b}{a}} |\gamma_{\varepsilon,j}|.$$ 
 Thus, $\displaystyle   \sum_{j>J} | \beta_{\varepsilon,j}|^2 \leq  \frac{b}{a} \sum_{j>J} | \gamma_{\varepsilon,j}|^2  =  O(M \lambda_{\varepsilon,J}).$ 
 The result follows from Lemma~\ref{lemma::Lbias_conversion}.~$\Box$\\

\noindent{\bf Proof of Proposition~\ref{proposition::bias_diffop}.}
Note that $\mathcal{J}_{\varepsilon}(f) = \sum_{j}  \nu_{\varepsilon,j}^2 |\beta_{\varepsilon,j}| ^2$. Hence,
$$ \frac{ \mathcal{J}_{\varepsilon}(f)}{\nu^2_{\varepsilon,J+1}} =
 \sum_{j } \frac{\nu_{\varepsilon,j}^2}{ \nu_{\varepsilon,J+1}^2}  |\beta_{\varepsilon,j}|^2 \geq  \sum_{j>J} \frac{\nu_{\varepsilon,j}^2}{ \nu_{\varepsilon,J+1}^2}  |\beta_{\varepsilon,j}|^2 \geq \sum_{j>J} |\beta_{\varepsilon,j}|^2 = \int_{\X}  | f(x) - f_{\varepsilon,J} (x) |^2 dS_{\varepsilon}(x).$$ The last result follows from Lemma~\ref{lemma::Lbias_conversion}.~$\Box$\\

\noindent{\bf Proof of Lemma~\ref{lemma:conv_smoothfun}.}
By Green's first identity
$$ \int_\X f \;\nabla^2 \! f dS(x) + \int_\X \nabla f \cdot \nabla f dS(x) = \oint_{\partial \X} f (n \cdot \nabla f) dS(x)=0,$$
where $n$ is the normal direction to the boundary $\partial \X$, and the last surface integral vanishes due to the Neumann boundary condition.    It follows from  Lemma~\ref{lemma::LaplaceBeltrami} that 
$$ \lim_{\varepsilon \rightarrow 0} \mathcal{J}^{*}_{\varepsilon}(f)  =  - \lim_{\varepsilon \rightarrow 0} \int_\X f(x) G_{\varepsilon}^{*} f(x) dS_\varepsilon(x)= \int_\X f(x) \triangle \! f(x) dS(x) =
\int_{\X} \|\nabla f(x)\|^2 dS(x).\, \Box$$\\

 \noindent{\bf Proof of Theorem~\ref{prop::errordecay}.}
We have that 
$$ c^2 \geq \int_{\X} \|\nabla f(x)\|^2 dS(x) = \int_\X f(x) \triangle \! f(x) dS(x) = \sum_j \nu_j^2 \beta_j^2,$$
where $\nu_j^2=O(j^{2s})$. Hence, $\displaystyle f \in  W_\mathcal{B}(s,c)$ and by Theorem 9.1 in~\cite{Mallat:2009}, $\|f - f_J \|^2 = o(J^{-2s}).$~$\Box$\\

\subsection{Variance}~\label{appendix::variance}

 Let $\mathcal{H}$ be an auxiliary RKHS of smooth functions; 
  we use the term ``auxiliary'' to denote that the space only enters the intermediate derivations and plays no role in the error analysis of the algorithm itself.  We define the two integral operators $A_\mathcal{H}, \widehat{A}_\mathcal{H}: \mathcal{H} \rightarrow \mathcal{H}$ where 
 \begin{eqnarray*}
 A_{\mathcal{H}} f(x) &=&   \frac{\int k_\varepsilon(x ,y) \langle f, K(\cdot,y) \rangle_{\mathcal{H}}  dP(y)}{\int k_\varepsilon(x,y)dP(y)} = \int a_\varepsilon(x,y) \langle f, K(\cdot,y) \rangle_{\mathcal{H}} \; dP(y) \\
 \widehat{A}_{\mathcal{H}} f(x) &=& \frac{\sum_{i=1}^n k_\varepsilon(x,X_i) \langle f, K(\cdot,X_i) \rangle_{\mathcal{H}}}{\sum_{i=1}^n k_\varepsilon(x,X_i)}=
\int \hat{a}_\varepsilon(x,y) \langle f, K(\cdot,y) \rangle_{\mathcal{H}} \; d\hat{P}_n(y) ,
\end{eqnarray*}
and  $K$ is the reproducing kernel of  $\mathcal{H}$. Define the operator norm $\|A\|_\mathcal{H}= \sup_{f\in {\cal H}} 
\|Af\|_\mathcal{H}/\|f\|_\mathcal{H}$ where  $\|f\|_\mathcal{H}^2=\langle f,f\rangle_{\cal H}$. Now suppose the weight function $k_\varepsilon$ is sufficiently smooth with respect to $\mathcal{H}$ (Assumption 1 in~\cite{Rosasco:EtAl:2010}); this condition is for example satisfied by a Gaussian kernel on a compact support $\mathcal{X}$.
By Propositions 13.3 and 14.3 in~\cite{Rosasco:EtAl:2010}, we can then relate the functions $\psi_{\varepsilon,j}$ and $\widehat \psi_{\varepsilon,j}$, respectively, to the eigenfunctions $ u_{\varepsilon,j}$ and $\widehat {u}_{\varepsilon,j}$ of $A_\mathcal{H}$ and $\widehat{A}_\mathcal{H}$. We have that  
\begin{equation}
 \| \psi_{\varepsilon,j} - \widehat \psi_{\varepsilon,j}\|_{L^2(\mathcal{X},P)} = C_1 \|  u_{\varepsilon,j}-\widehat {u}_{\varepsilon,j}\|_{L^2(\mathcal{X},P)} \leq C_2 \|  u_{\varepsilon,j}-\widehat {u}_{\varepsilon,j}\|_\mathcal{H}
 \label{eq::eigrel_auxRKHS}
 \end{equation}
for some constants $C_1$ and $C_2$.  According to Theorem 6 in~\cite{Rosasco:EtAl:2008} for eigenprojections of positive compact operators, it holds that
 \begin{equation}
 \|  u_{\varepsilon,j}-\widehat {u}_{\varepsilon,j}\|_\mathcal{H} \leq \frac{\| A_\mathcal{H} - \widehat{A}_\mathcal{H} \|_\mathcal{H}}{ \delta_{\varepsilon,j}},  \label{eq:pert_eigf}
  \end{equation}
  where $ \delta_{\varepsilon,j}$ is proportional to the eigengap $\lambda_{\varepsilon,j} - \lambda_{\varepsilon,j+1}$. As a result, we can bound the difference $\|\psi_{\varepsilon,j} - \widehat{\psi}_{\varepsilon,j}\|_{L^2(\mathcal{X},P)}$ by controlling  the deviation $\|A_\mathcal{H}-\widehat{A}_\mathcal{H}\|_{\cal H}$.

We choose the auxiliary RKHS $\mathcal{H}$ to be a Sobolev space with a sufficiently high degree of smoothness 
(see below for details). Let $\mathcal{H}^s$ denote the Sobolev space of order $s$ with vanishing gradients at the boundary; that is, let
$$ \mathcal{H}^s = \{f \in L^2(\mathcal{X}) \; |  \; D^{\alpha}f \in L^2(\mathcal{X})   \, \,  \forall |\alpha|\leq s, \; D^{\alpha} f|_{\partial \mathcal{X}} =0  \,  \,  \forall |\alpha|=1\},$$
where $D^{\alpha}f$ is the weak partial derivative of $f$ with respect to the multi-index $\alpha$, and $L^2(\mathcal{X})$ is the space of square integrable functions with respect to the Lebesgue measure. 
Let $C_b^3(\mathcal{X})$ be the set of uniformly bounded, three times differentiable functions with uniformly bounded derivatives whose gradients vanish at the boundary. 
Now 
 consider $\mathcal{H} \subset \mathcal{H}^s$ 
 and choose $s$ large enough so that $D^{\alpha}f  \in C_b^3(\mathcal{X})$ for all $ f \in \mathcal{H}$ and $|\alpha|=s$. 
Under assumptions  (A1)-(A4), we derive the following result:
\begin{lemma}\label{lemma::Aest}
Let $\varepsilon_n\to 0$ and $n \varepsilon_n^{d/2}/
\log(1/\varepsilon_n) \to \infty$.
Then 
$\|A_\mathcal{H} - \widehat{A}_\mathcal{H}\|_{\mathcal{H}} = O_P(\gamma_n),$
where 
$
\gamma_n =\sqrt{\frac{ \log(1/\varepsilon_n)}{n \varepsilon_n^{d/2}}}.$
\end{lemma}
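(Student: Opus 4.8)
The plan is to peel off the normalization and reduce the operator-norm deviation to two pieces that are each $O_P(\gamma_n)$: a uniform deviation bound for a kernel density estimate (handled exactly as in the proof of Lemma~\ref{lemma::phi_error}, via Gin\'e--Guillou~\cite{Gine:Guillou:02}), and a concentration inequality for an empirical average of i.i.d.\ bounded rank-one operators on $\mathcal{H}$, the device from~\cite{Rosasco:EtAl:2010,Rosasco:EtAl:2008} already used for Eq.~\ref{eq:pert_eigf}. Concretely, I would introduce the unnormalized operators
$$ T_{\mathcal{H}}f(x) = \int k_\varepsilon(x,y)\,\langle f, K(\cdot,y)\rangle_{\mathcal{H}}\,dP(y), \qquad \widehat{T}_{\mathcal{H}}f(x) = \frac1n\sum_{i=1}^n k_\varepsilon(x,X_i)\,\langle f, K(\cdot,X_i)\rangle_{\mathcal{H}}, $$
so that $A_\mathcal{H}f = T_\mathcal{H}f/p_\varepsilon$ and $\widehat{A}_\mathcal{H}f = \widehat{T}_\mathcal{H}f/\widehat{p}_\varepsilon$, with $p_\varepsilon(x)=\int k_\varepsilon(x,y)\,dP(y)$ and $\widehat{p}_\varepsilon(x)=\tfrac1n\sum_i k_\varepsilon(x,X_i)$, and then use the algebraic identity
$$ A_\mathcal{H}f - \widehat{A}_\mathcal{H}f \;=\; \frac{T_\mathcal{H}f - \widehat{T}_\mathcal{H}f}{p_\varepsilon} \;+\; (\widehat{T}_\mathcal{H}f)\,\frac{\widehat{p}_\varepsilon - p_\varepsilon}{p_\varepsilon\,\widehat{p}_\varepsilon}, $$
which isolates the three quantities to control: the denominators, the density deviation $\widehat{p}_\varepsilon-p_\varepsilon$, and the numerator deviation $T_\mathcal{H}-\widehat{T}_\mathcal{H}$.

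The denominators are easy. By (A1)--(A2), $m\le p_\varepsilon(x)\le M$ for all $x$; combined with $\sup_x|\widehat{p}_\varepsilon(x)-p_\varepsilon(x)| = O_P(\gamma_n)$ (the Gin\'e--Guillou bound cited in the proof of Lemma~\ref{lemma::phi_error}), this gives $\widehat{p}_\varepsilon\ge m/2$ uniformly with probability tending to one, so $1/(p_\varepsilon\widehat{p}_\varepsilon)=O_P(1)$ uniformly. Since $k_\varepsilon$ is smooth relative to $\mathcal{H}$ (so that each $k_\varepsilon(\cdot,y)\in\mathcal{H}$ and $\sup_y\|k_\varepsilon(\cdot,y)\|_\mathcal{H}<\infty$), we also get $\|\widehat{T}_\mathcal{H}\|_\mathcal{H}\le\|T_\mathcal{H}\|_\mathcal{H}+\|T_\mathcal{H}-\widehat{T}_\mathcal{H}\|_\mathcal{H}=O_P(1)$ once the numerator term is controlled. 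Hence the second term of the identity is $O_P(\gamma_n)$, and everything reduces to showing $\|T_\mathcal{H}-\widehat{T}_\mathcal{H}\|_\mathcal{H}=O_P(\gamma_n)$.

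For that, I would use the reproducing property to write $T_\mathcal{H}=\int k_\varepsilon(\cdot,y)\otimes K(\cdot,y)\,dP(y)$ (a Bochner integral) and $\widehat{T}_\mathcal{H}=\tfrac1n\sum_i k_\varepsilon(\cdot,X_i)\otimes K(\cdot,X_i)$, so that $\widehat{T}_\mathcal{H}-T_\mathcal{H}$ is a centered empirical average of i.i.d.\ rank-one (hence Hilbert--Schmidt) operators on $\mathcal{H}$, each of operator norm at most $\sup_y\|k_\varepsilon(\cdot,y)\|_\mathcal{H}\,\sup_y\|K(\cdot,y)\|_\mathcal{H}$. A Bernstein-type inequality for sums of i.i.d.\ bounded operators in Hilbert space (in the spirit of the concentration arguments of~\cite{Rosasco:EtAl:2010}) then bounds $\|T_\mathcal{H}-\widehat{T}_\mathcal{H}\|_\mathcal{H}$ in terms of the range and the variance proxy, the latter controlled by $\sup_y\|k_\varepsilon(\cdot,y)\|_\mathcal{H}^2\,\E\,K(X,X)$. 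The task is then to track, using $k_\varepsilon(\cdot,y)=g(\|\cdot-y\|/\sqrt\varepsilon)$, how $\|k_\varepsilon(\cdot,y)\|_\mathcal{H}$ (equivalently its $\mathcal{H}^s$-norm) scales in $\varepsilon$: each spatial derivative of $g(\|\cdot-y\|/\sqrt\varepsilon)$ contributes a factor $\varepsilon^{-1/2}$, while the localization of its $L^2$-mass over $\X$ contributes a factor $\varepsilon^{d/2}$, and the smoothness of both $k_\varepsilon$ and the auxiliary kernel $K$ must be used to convert this into the variance term $\asymp \varepsilon_n^{-d/2}$ that produces the factor $\varepsilon_n^{d/2}$ in the denominator of $\gamma_n=\sqrt{\log(1/\varepsilon_n)/(n\varepsilon_n^{d/2})}$. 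Combining the two terms of the identity then yields $\|A_\mathcal{H}-\widehat{A}_\mathcal{H}\|_\mathcal{H}=O_P(\gamma_n)$.

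The step I expect to be the main obstacle is precisely this $\varepsilon$-bookkeeping in the $\mathcal{H}$-norm. Unlike the sup-norm and $L^2(P)$ estimates elsewhere in the appendix, the random objects here live in the infinite-dimensional auxiliary RKHS $\mathcal{H}\subset\mathcal{H}^s$, whose norm involves high-order derivatives of $k_\varepsilon$; one must verify that the interplay between the $\varepsilon^{-1/2}$-per-derivative growth and the $\varepsilon^{d/2}$ mass-localization keeps the variance (and range) of the operator-valued summand controlled well enough that the stated $\gamma_n$ — rather than a slower rate — comes out, and that the covering/union-bound contribution only inflates this by the $\log(1/\varepsilon_n)$ factor. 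A secondary point is to confirm that the choice of $\mathcal{H}$ inside $\mathcal{H}^s$ — with $s$ large enough that $D^\alpha f\in C_b^3(\X)$ for $|\alpha|=s$ — together with Assumption~1 of~\cite{Rosasco:EtAl:2010}, indeed places each $k_\varepsilon(\cdot,y)$ in $\mathcal{H}$ with the uniform-in-$y$ norm bounds used throughout the argument.
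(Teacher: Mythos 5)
Your decomposition into a normalization error plus an unnormalized empirical-operator deviation is sound, and it closely parallels the paper's own splitting of $A_\varepsilon f-\hat A_\varepsilon f$ through the intermediate operator $\tilde A_\varepsilon f(x)=\int\hat a_\varepsilon(x,y)f(y)\,dP(y)$; your treatment of the denominators via (A1)--(A2) and the Gin\'e--Guillou bound also matches what the paper does. The genuine gap is exactly the step you flag: bounding $\|T_\mathcal{H}-\widehat T_\mathcal{H}\|_\mathcal{H}$ by an operator Bernstein inequality requires controlling $\sup_y\|k_\varepsilon(\cdot,y)\|_\mathcal{H}$, and this quantity does not come out as your heuristic suggests. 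For a radial kernel $g(\|\cdot-y\|/\sqrt{\varepsilon})$ one has $\|D^\alpha k_\varepsilon(\cdot,y)\|_{L^2(\X)}\asymp\varepsilon^{d/4-|\alpha|/2}$, hence $\|k_\varepsilon(\cdot,y)\|_{\mathcal{H}^s}\asymp\varepsilon^{d/4-s/2}$, so the range and variance proxy of each rank-one summand scale like $\varepsilon^{d/2-s}$ rather than $\varepsilon^{-d/2}$. Since the construction of $\mathcal{H}$ forces $s$ to be large (large enough that $D^\alpha f\in C_b^3(\X)$ for $|\alpha|=s$, in particular $s>d/2$), the Bernstein route yields a rate of order $\sqrt{\varepsilon_n^{d/2-s}/n}$, which is $s$-dependent and in general strictly worse than $\gamma_n$. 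The key step is therefore not merely unfinished; as set up, it produces the wrong rate.

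The paper's proof never places $k_\varepsilon(\cdot,y)$ in $\mathcal{H}$. It first establishes the uniform bound $\sup_{f\in C_b^3(\X)}\|\hat A_\varepsilon f-A_\varepsilon f\|_\infty=O_P(\gamma_n)$: the normalization error is handled by Gin\'e--Guillou exactly as in your first term, while the empirical-measure term is handled by Taylor-expanding $f(y)=f(x)+r_n(y)$ around the evaluation point and applying Talagrand's inequality to $\int f(y)k_\varepsilon(x,y)(d\hat P_n(y)-dP(y))$ as in Theorem 5.1 of Gin\'e--Koltchinskii; this empirical-process step over $x$ is where the factor $\varepsilon_n^{d/2}$ and the logarithm in $\gamma_n$ actually originate. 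It then transfers from this sup-norm bound over $C_b^3(\X)$ to the $\mathcal{H}$-operator norm via the Sobolev inner product, using
$\sup_{\|f\|_\mathcal{H}=1}\|\hat A_\varepsilon f-A_\varepsilon f\|_\mathcal{H}^2\leq\sum_{|\alpha|\leq s}\sup_{f}\|\hat A_\varepsilon D^\alpha f-A_\varepsilon D^\alpha f\|_{L^2(\X)}^2\leq C\,\sup_{f\in C_b^3(\X)}\|\hat A_\varepsilon f-A_\varepsilon f\|_\infty^2$,
exploiting that $D^\alpha f$ ranges over (a multiple of) $C_b^3(\X)$ as $f$ ranges over the unit ball of $\mathcal{H}$. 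To repair your argument you would need to replace the operator Bernstein step with an empirical-process bound of this kind, uniform over a smoothness class of test functions rather than over elements $k_\varepsilon(\cdot,y)$ of the auxiliary RKHS.
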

\begin{proof}
Uniformly, for all $f \in C_b^3(\mathcal{X})$, and all $x$ in the support of $P$,
$$
|A_\varepsilon f(x) - \hat{A}_\varepsilon f(x)| \leq
|A_\varepsilon f(x) - \tilde{A}_\varepsilon f(x)| +
|\tilde{A}_\varepsilon f(x) - \hat{A}_\varepsilon f(x)|
$$
where
$\tilde{A}_\varepsilon f(x) = \int \hat{a}_\varepsilon(x,y)f(y) dP(y).$
\relax From \cite{Gine:Guillou:02},
$$
\sup_x \frac{| \hat{p}_\varepsilon(x) - p_\varepsilon(x)|}
            {| \hat{p}_\varepsilon(x)p_\varepsilon(x)|} =
  O_P(\gamma_n).  
$$
Hence,
\begin{eqnarray*}
|A_\varepsilon f(x) - \tilde{A}_\varepsilon f(x)| & \leq &
\frac{| \hat{p}_\varepsilon(x) - p_\varepsilon(x)|}
{| \hat{p}_\varepsilon(x)p_\varepsilon(x)|}\int |f(y)| k_
\varepsilon(x,y)dP(y) \\
&= &  O_P(\gamma_n)   
\int |f(y)| k_\varepsilon(x,y)dP(y)\\
& = &
O_P (\gamma_n). 
\end{eqnarray*}
Next, we bound
$\tilde{A}_\varepsilon f(x) - \hat{A}_\varepsilon f(x)$.
We have
\begin{eqnarray*}
\tilde{A}_\varepsilon f(x) - \hat{A}_\varepsilon f(x) & = &
\int f(y) \hat{a}_\varepsilon(x,y) (d\hat{P}_n(y) - dP(y))\\
&=& \frac{1}{p(x)+o_P(1)}\int f(y) k_\varepsilon(x,y)(d\hat{P}_n(y) - 
dP(y)).
\end{eqnarray*}
Now, expand
$f(y) = f(x) + r_n(y)$
where $r_n(y) = (y-x)^T\nabla f(u_y)$ and
$u_y$ is between $y$ and $x$.
So,
$$
\int f(y) k_\varepsilon(x,y)(d\hat{P}_n(y) - dP(y)) =
f(x) \int  k_\varepsilon(x,y)(d\hat{P}_n(y) - dP(y))  +
 \int r_n(y) k_\varepsilon(x,y)(d\hat{P}_n(y) - dP(y)) .
$$
By an application of Talagrand's inequality
to each term, as in
Theorem 5.1 of   \cite{Gine:Koltchinskii:2006},   
we have
$$
\int f(y) k_\varepsilon(x,y)(d\hat{P}_n(y) - dP(y)) =
O_P(\gamma_n).
$$
Thus,
$
\sup_{f\in C_b^3(\mathcal{X})}\| \hat{A}_\varepsilon f - A_\varepsilon f \|_\infty = 
O_P(\gamma_n).$

The Sobolev space $\mathcal{H}$ is a Hilbert space with respect to the scalar product 
$$\langle f,g\rangle_{\mathcal{H}}= \langle f,g\rangle_{L^2(\mathcal{X})} +     \sum_{|\alpha|=s} \langle D^{\alpha}f,D^{\alpha}g\rangle_{L^2(\mathcal{X})}.$$
We have that
\begin{eqnarray*}
  \sup_{f\in {\cal H}:\|f\|_\mathcal{H}=1} \| \hat{A}_\varepsilon f - A_\varepsilon f\|_\mathcal{H}^2 &\leq&  \sup_{f\in {\cal H}} \sum_{|\alpha| \leq s} \| D^{\alpha}(\hat{A}_\varepsilon f - A_\varepsilon f)\|_{L^2(\mathcal{X})}^2 = \sum_{|\alpha| \leq s} \sup_{f\in {\cal H}} \| \hat{A}_\varepsilon D^{\alpha}f - A_\varepsilon D^{\alpha}f\|_{L^2(\mathcal{X})}^2 \\
  &\leq& \sum_{|\alpha| \leq s} \sup_{f \in C_b^3(\mathcal{X})} \| \hat{A}_\varepsilon f - A_\varepsilon f\|_{L^2(\mathcal{X})}^2 \leq C \sup_{f\in C_b^3(\mathcal{X})}\| \hat{A}_\varepsilon f - A_\varepsilon f \|_\infty^2.\
  \end{eqnarray*}
  for some constant $C$.  Hence, 
  \begin{eqnarray*}
\sup_{f\in {\cal H}} \frac{\| \hat{A}_\varepsilon f - A_\varepsilon f \|_\mathcal{H}}{\|f\|_\mathcal{H}} 
=
\sup_{f\in {\cal H}, \|f\|_\mathcal{H}=1} \| \hat{A}_\varepsilon f - A_\varepsilon f \|_\mathcal{H} \leq C'  \sup_{f\in C_b^3(\mathcal{X})}\| \hat{A}_\varepsilon f - A_\varepsilon f \|_\infty =
O_P(\gamma_n). 
\end{eqnarray*}
\end{proof}

\noindent{\bf Proof of Proposition~\ref{prop::eigenvec_error}.}
  From Eqs.~\ref{eq::eigrel_auxRKHS}-\ref{eq:pert_eigf}, we have that
\begin{equation*}
   \| \psi_{\varepsilon,j} - \widehat \psi_{\varepsilon,j}\|_{L^2(\mathcal{X},P)}  \leq  C \, \frac{\| A_\mathcal{H} - \widehat{A}_\mathcal{H} \|_\mathcal{H}}{\lambda_{\varepsilon,j} - \lambda_{\varepsilon,j+1}}
  \end{equation*}
  for some constant $C$ that does not depend on $n$.
  The result follows from Lemma~\ref{lemma::Aest}.~$\Box$\\
  
   \comment{Furthermore, for all $x \in \cal{X}$, 
    $$ | \varphi_{\varepsilon,j}(x) - \widehat{\varphi}_{\varepsilon,j}(x) | =  | \psi_{\varepsilon,j}(x) s_{\varepsilon}(x) - \widehat{\psi}_{\varepsilon,j}(x) \widehat s_{\varepsilon}(x) | \stackrel{\fbox{??}}{\leq}
  B  | \psi_{\varepsilon,j}(x)  - \widehat{\psi}_{\varepsilon,j}(x)  |  $$
  for some constant $B$.
  The second result follows.}

\begin{lemma}  \label{lemma::betahat}
$\forall 0 \leq j \leq J$, 
$$ |\hat{\beta}_{\varepsilon,j}-\beta_{\varepsilon,j}|^2 = O_P\left(\frac{1}{n}\right)  + O_P\left(\frac{\gamma_n^2}{\delta_{\varepsilon,j}^2}\right).$$
\end{lemma}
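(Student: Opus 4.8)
The plan is to split $\hat\beta_{\varepsilon,j}-\beta_{\varepsilon,j}$ into a pure sampling error (from replacing an expectation by a sample average against the \emph{true} eigenfunction) and an eigenfunction-estimation error, and to dispatch each term with lemmas already in hand. First I would rewrite both quantities in terms of $\varphi_{\varepsilon,j}=\psi_{\varepsilon,j}s_\varepsilon$ and $\hat\varphi_{\varepsilon,j}=\hat\psi_{\varepsilon,j}\hat s_\varepsilon$, so that $\beta_{\varepsilon,j}=\int_\X f(x)\varphi_{\varepsilon,j}(x)\,dP(x)$ (using $dS_\varepsilon=s_\varepsilon\,dP$) and $\hat\beta_{\varepsilon,j}=\tfrac1n\sum_{i=1}^n Y_i\hat\varphi_{\varepsilon,j}(X_i)$, and then decompose
\[
\hat\beta_{\varepsilon,j}-\beta_{\varepsilon,j}
= \underbrace{\frac1n\sum_{i=1}^n Y_i\bigl(\hat\varphi_{\varepsilon,j}(X_i)-\varphi_{\varepsilon,j}(X_i)\bigr)}_{(\mathrm I)}
\;+\; \underbrace{\frac1n\sum_{i=1}^n Y_i\varphi_{\varepsilon,j}(X_i)-\int_\X f(x)\varphi_{\varepsilon,j}(x)\,dP(x)}_{(\mathrm{II})}.
\]

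For $(\mathrm{II})$, note $\E[Y_i\varphi_{\varepsilon,j}(X_i)]=\E[f(X_i)\varphi_{\varepsilon,j}(X_i)]=\beta_{\varepsilon,j}$, so this is a centered i.i.d.\ average. Since $\varphi_{\varepsilon,j}$ is bounded on $\X$ --- $s_\varepsilon$ is bounded by Lemma~\ref{lemma::s_eps}, and $\psi_{\varepsilon,j}(x)=\lambda_{\varepsilon,j}^{-1}\int a_\varepsilon(x,y)\psi_{\varepsilon,j}(y)\,dP(y)$ is bounded by Cauchy--Schwarz because $a_\varepsilon$ is bounded under (A1)--(A2) and $\lambda_{\varepsilon,j}>0$ by (A3) --- and $\V(Y_1)\le\E[f(X_1)^2]+\sigma^2<\infty$ for $f\in L^2(\X,P)$, Chebyshev's inequality gives $(\mathrm{II})=O_P(n^{-1/2})$. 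For $(\mathrm I)$, Lemma~\ref{lemma::Y_eigdev} shows it equals $\int_\X f(x)\bigl(\hat\varphi_{\varepsilon,j}(x)-\varphi_{\varepsilon,j}(x)\bigr)\,dP(x)+O_P(n^{-1/2})$ (this step uses the independence of the unlabeled sample from $(X_i,Y_i)$), and Cauchy--Schwarz together with Lemma~\ref{lemma::phi_error} bounds
\[
\Bigl|\int_\X f(x)\bigl(\hat\varphi_{\varepsilon,j}(x)-\varphi_{\varepsilon,j}(x)\bigr)\,dP(x)\Bigr|
\le \|f\|_{L^2(\X,P)}\,\|\hat\varphi_{\varepsilon,j}-\varphi_{\varepsilon,j}\|_{L^2(\X,P)}
= O_P\!\Bigl(\frac{\gamma_n}{\delta_{\varepsilon,j}}\Bigr).
\]
Collecting the pieces, $\hat\beta_{\varepsilon,j}-\beta_{\varepsilon,j}=O_P(n^{-1/2})+O_P(\gamma_n/\delta_{\varepsilon,j})$, and squaring with $(a+b)^2\le 2a^2+2b^2$ yields the claim.

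I expect no serious obstacle: Lemmas~\ref{lemma::Y_eigdev} and~\ref{lemma::phi_error} do the real work, and what remains is elementary. The only point needing a little care is checking that $\varphi_{\varepsilon,j}$, and hence $Y_1\varphi_{\varepsilon,j}(X_1)$, has finite variance bounded independently of $n$, so that $(\mathrm{II})$ is genuinely $O_P(n^{-1/2})$; this is where (A1)--(A3) (bounded density, bounded and bounded-below weights, nondegenerate eigenvalue) enter.
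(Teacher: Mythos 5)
Your proposal is correct and follows essentially the same route as the paper's proof: the identical decomposition into the sampling error $\tfrac1n\sum Y_i\varphi_{\varepsilon,j}(X_i)-\beta_{\varepsilon,j}=O_P(n^{-1/2})$ and the eigenfunction-estimation term, with the latter controlled by Lemma~\ref{lemma::Y_eigdev}, Cauchy--Schwarz, and Lemma~\ref{lemma::phi_error}. Your added justification that $\varphi_{\varepsilon,j}$ is bounded (so the Chebyshev step for the first term is legitimate) is a detail the paper leaves implicit, but it does not change the argument.
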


\begin{proof} 
Note that $\psi_{\varepsilon,j}(x) s_{\varepsilon}(x) = \varphi_{\varepsilon,j}(x)$ and 
\begin{eqnarray*}
\hat{\beta}_{\varepsilon,j} &=& \frac{1}{n} \sum_{i=1}^n Y_i \hat \psi_{\varepsilon,j}(X_i) \hat{s}_{\varepsilon}(X_i) \\
&=& \frac{1}{n} \sum_{i=1}^n Y_i \varphi_{\varepsilon,j}(X_i)  + \frac{1}{n} \sum _{i=1}^n Y_i  \left(\hat \varphi_{\varepsilon,j}(X_i) -    \varphi_{\varepsilon,j}(X_i)\right)   \\
 &=& \beta_{\varepsilon,j} + O_P\left(\frac{1}{\sqrt{n}}\right) +  \frac{1}{n} \sum_{i=1}^n Y_i   \left(\hat \varphi_{\varepsilon,j}(X_i)- \varphi_{\varepsilon,j}(X_i)\right).  
 \end{eqnarray*}
 Let $S= \frac{1}{n} \sum _{i=1}^n Y_i  \left(\hat \varphi_{\varepsilon,j}(X_i) -    \varphi_{\varepsilon,j}(X_i)\right)$ and  $I= \int_\X f(x)(\hat \varphi_{\varepsilon,j}(x)- \varphi_{\varepsilon,j}(x))dP(x)$. 
We conclude that
 \begin{eqnarray}
\frac{1}{2}|\hat{\beta}_{\varepsilon,j}-\beta_{\varepsilon,j}|^2 &\leq&   O_P\left(\frac{1}{n}\right) + |S-I|^2+  |I|^2 \nonumber\\
&\leq&   O_P\left(\frac{1}{n}\right)  +|S-I|^2 +  \left(\int_\X |f(x)|^2 dP(x) \right) \left(\int_{\X} | \varphi_{\varepsilon,j}(x)-\hat{\varphi}_{\varepsilon,j}(x)  |^2  dP(x) \right)\nonumber  \\ 
&=&   O_P\left(\frac{1}{n}\right) +  O_P\left(\frac{\gamma_n^2}{\delta_{\varepsilon,j}^2}\right),\nonumber
\end{eqnarray}  
   where the second inequality follows from the Cauchy-Schwarz inequality, and the last equality is due to Lemmas~\ref{lemma::Y_eigdev} and~\ref{lemma::phi_error}.    
\end{proof}
   
\noindent{\bf Proof of Proposition~\ref{prop::variance}.} 
Let $\tilde{f}_{\varepsilon,J}(x) = \sum_{j=0}^{J}  \beta_{\varepsilon,j}  \hat\psi_{\varepsilon,j}(x)$.
 Write
\begin{eqnarray*} |f_{\varepsilon,J}(X_i) - \hat{f}_{\varepsilon,J}(X_i)|^2 &=& | f_{\varepsilon,J}(X_i)  - \tilde{f}_{\varepsilon,J}(X_i) + \tilde{f}_{\varepsilon,J}(X_i)    - \hat{f}_{\varepsilon,J}(X_i)|^2 \\
&\leq& 2 | f_{\varepsilon,J}(X_i)  - \tilde{f}_{\varepsilon,J}(X_i) |^2  + 2 | \tilde{f}_{\varepsilon,J}(X_i)  - \hat f_{\varepsilon,J}(X_i) |^2
\end{eqnarray*}
We bound the contribution  to $L_{\rm{var}}$ from each of these two terms  separately:

By using Cauchy's inequality  and Proposition~\ref{prop::eigenvec_error}, we have that
\begin{eqnarray*}
&&\int_{\X} | f_{\varepsilon,J}(x)  - \tilde{f}_{\varepsilon,J}(x) |^2 dP(x)=
\int_{\X} \left|   \sum_{j=0}^{J}  \beta_{\varepsilon,j}  (\psi_{\varepsilon,j}(x) - \hat{\psi}_{\varepsilon,j}(x) ) \right |^2 dP(x) \\
 &\leq& \left(\sum_{j=0}^J |  \beta_{\varepsilon,j}|^2 \right) \cdot
 \sum_{j=0}^{J}  \left( \int_{\X}  |  \psi_{\varepsilon,j}(x)-\hat{\psi}_{\varepsilon,j}(x)  |^2  dP(x) \right) = J \ O_P\left(\frac{\gamma_n^2}{\Delta_{\varepsilon,J}^2}\right) .
 \end{eqnarray*}

By construction, it holds that  $\frac{1}{n}\sum_i \hat\psi_{\varepsilon,j}(\widetilde X_i) \hat\psi_{\varepsilon,\ell}(\tilde X_i) \hat{s}_\varepsilon(\tilde X_i)=\delta_{j,\ell}$. Furthermore, 
\begin{eqnarray*} \int_{\X} \hat \psi_{\varepsilon,j}(x)  \hat \psi_{\varepsilon,\ell}(x) d\hat{S}_\varepsilon(x) &=&  \frac{1}{n} \sum_i \hat\psi_{\varepsilon,j}(X_i) \hat\psi_{\varepsilon,\ell}(X_i) \hat{s}_\varepsilon(X_i) + O_P\left(\frac{1}{\sqrt{n}}\right)\\
&=&   \frac{1}{n} \sum_i \hat\psi_{\varepsilon,j}(\tilde X_i) \hat\psi_{\varepsilon,\ell}(\tilde X_i) \hat{s}_\varepsilon(\tilde X_i) + O_P\left(\frac{1}{\sqrt{n}}\right) \\
&=&\delta_{j,\ell} + O_P\left(\frac{1}{\sqrt{n}}\right).
\end{eqnarray*}
for a sample $X_1,\ldots,X_n$ drawn independently from  $\tilde X_1,\ldots,\tilde X_n$. Finally, from the orthogonality property of the $\hat\psi_{\varepsilon,j}$'s together with Lemmas~\ref{lemma::betahat} and~\ref{lemma::s_eps}, it follows that
 \begin{eqnarray*}
&& \int_{\X} | \tilde{f}_{\varepsilon,J}(x)  - \hat f_{\varepsilon,J}(x) |^2 dP(x) =
  \int_{\X}  \frac{1}{\hat{s}_\varepsilon(x)}\left|\sum_{j=0}^J ( \beta_{\varepsilon,j} -  \hat{\beta}_{\varepsilon,j}  )  \hat\psi_{\varepsilon,j}(x) \sqrt{\hat{s}_\varepsilon(x)}\right|^2 dP(x) \\
&=& \int_{\X}  \frac{1}{\hat{s}_\varepsilon(x)}
\left(\sum_{j=0}^J ( \beta_{\varepsilon,j} -  \hat{\beta}_{\varepsilon,j}  )^2  \hat\psi_{\varepsilon,j}^2(x) d\hat{S}_\varepsilon(x)\right)\\
&&+ \int_{\X}  \frac{1}{\hat{s}_\varepsilon(x)}
\left(\sum_{j=0}^J  \sum_{\ell=0, \ell \neq j}^J( \beta_{\varepsilon,j} -  \hat{\beta}_{\varepsilon,j}  ) ( \beta_{\varepsilon,\ell} -  \hat{\beta}_{\varepsilon,\ell}  )  \hat\psi_{\varepsilon,j}(x) \hat\psi_{\varepsilon,\ell}(x) d\hat{S}_\varepsilon(x) 
\right) \\
&\leq& \frac{b}{a}   \sum_{j=0}^J ( \beta_{\varepsilon,j} -  \hat{\beta}_{\varepsilon,j}  )^2 \left( \int_{\X} \hat\psi_{\varepsilon,j}^2(x) d\hat{S}_\varepsilon(x)\right)\\
&+&  \frac{b}{a}  \sum_{j=0}^J  \sum_{\ell=0, \ell \neq j}^J( \beta_{\varepsilon,j} -  \hat{\beta}_{\varepsilon,j}  ) ( \beta_{\varepsilon,\ell} -  \hat{\beta}_{\varepsilon,\ell}  ) \left( \int_{\X} \hat\psi_{\varepsilon,j}(x) \hat\psi_{\varepsilon,\ell}(x) d\hat{S}_\varepsilon(x) 
\right)\\
&=& \frac{b}{a}   \sum_{j=0}^J ( \beta_{\varepsilon,j} -  \hat{\beta}_{\varepsilon,j}  )^2 \left(1+O_P\left(\frac{1}{\sqrt{n}} \right)\right)
+  \frac{b}{a}  \sum_{j=0}^J  \sum_{\ell=0, \ell \neq j}^J( \beta_{\varepsilon,j} -  \hat{\beta}_{\varepsilon,j}  ) ( \beta_{\varepsilon,\ell} -  \hat{\beta}_{\varepsilon,\ell}  ) \; O_P\left(\frac{1}{\sqrt{n}}\right)\\
&=& J  \left( O_P \left( \frac{1}{n}\right)  + O_P\left(\frac{\gamma_n^2}{\Delta_{\varepsilon,J}^2}\right) \right).
\end{eqnarray*}
The result follows.~$\Box$

\end{document}